\newcommand{\blue}{}
\def\[#1\]{\begin{equation*}#1\end{equation*}}
\def\beq{%
   \relax\ifmmode
      \@badmath
   \else
      \ifvmode
         \nointerlineskip
         \makebox[.6\linewidth]%
      \fi
      $$
   \fi
}
\def\eeq{%
   \relax\ifmmode
      \ifinner
         \@badmath
      \else
         $$
      \fi
   \else
      \@badmath
   \fi
   \ignorespaces
}
\def\enddisplaymath{\eeq\global\@ignoretrue}
\newtheorem{theorem}{Theorem}[section]
\newtheorem{proposition}[theorem]{Proposition}
\newtheorem{lemma}[theorem]{Lemma}
\theoremstyle{remark}
\newtheorem*{Remark}{Remark}
   \title{$q$-Distributions on boxed plane partitions}
   \author{Alexei Borodin, Vadim Gorin, Eric M.~Rains}
\begin{document}
\maketitle

\begin{abstract}
We introduce elliptic weights of boxed {plane} partitions and prove that they give rise to a
generalization of MacMahon's product formula for the number of plane partitions in a box. We then
focus on the most general positive degenerations of these weights that are related to orthogonal
polynomials; they form three two-dimensional families. For distributions from these families we
prove two types of results.

First, we construct explicit Markov chains that preserve these
distributions. In particular, this leads to a relatively simple
exact sampling algorithm.

Second, we consider a limit when all dimensions of the box grow and
plane partitions become large, and prove that the local correlations
converge to those of ergodic translation invariant Gibbs measures.
For fixed proportions of the box, the slopes of the limiting Gibbs
measures (that can also be viewed as slopes of tangent planes to the
hypothetical limit shape) are encoded by a single quadratic
polynomial.

\end{abstract}

\tableofcontents

\section{Introduction} The uniform distribution on boxed plane
partitions (equivalently, lozenge tilings of a hexagon) is one of
the most studied models of random surfaces. There are four principal
types of results regarding this model that have been proved.

(1) Law of large numbers: Under the global scaling (the bounding
box/he\-xa\-gon is fixed and the mesh is going to zero), the measure
concentrates on surfaces that are close to a certain deterministic
\emph{limit shape}. The limit shape can be obtained as the unique
solution to a suitable variational problem. The solution is encoded
by a second degree polynomial in two variables, see \cite{CKP},
\cite{CLP}, \cite{DMB}, \cite{Des}, \cite{KO}.

(2) Locally near any point of the limit shape, the measure on
tilings converges to a (uniquely defined, see \cite{Sh})
translation-invariant ergodic Gibbs measure on lozenge tiling of the
plane of a given slope, and the slope coincides with the slope of
the tangent plane to the limit shape at the chosen point, see
\cite{Gor} and also \cite{Ke1}, \cite{Ke2}, \cite{KO}, \cite{KOS}.

(3) The correlation kernel of the random point process of lozenges
of one of the types is explicitly expressed in terms of classical
Hahn orthogonal polynomials, see \cite{Gor},
\cite{J_nonintersecting}, \cite{J_Hahn}, \cite{JN}.

(4) A few algorithms, both asymptotic and exact, have been proposed
to generate the random tilings in question, see \cite{BG},
\cite{Kr}, \cite{P1}, \cite{P2}, \cite{W1}, \cite{W2}.

These are complemented by the classical MacMahon product formula
for the total number of plane partitions in a given box, see, e.g.,
Section 7.21 in \cite{St}.

In this paper we study measures on boxed plane partitions that
generalize the uniform distribution. The weight of a tiling is
defined as the product of certain simple factors over all
lozenges of a fixed type, see Section \ref{Section_pr_models} for
definitions. One special case is the weight $q^{volume}$, where
\emph{volume} is the volume of the corresponding plane partition,
and $q$ is an arbitrary positive number.

In the most general case we consider, the weight of a lozenge is
\emph{elliptic}. Our initial motivation came from the fact that
these weights lead to a nice generalization of the MacMahon formula
mentioned above, see Theorem \ref{th:macmahon} in the Appendix.

For the asymptotic analysis, we look at the top degeneration of the
elliptic weight that is related to orthogonal polynomials.

Our asymptotic results amount to proving analogs of (2), (3),
and (4) above. As for (1), we derive the corresponding variational
problem (which differs from the one for the uniform case by the
presence of an external potential), and show that the hypothetical
limit shape (obtained from (2)) solves the corresponding Euler-Lagrange
equation. However, we do not prove the concentration phenomenon
rigorously.

One interesting feature of the limit shapes that arise is that the curve
that bounds the frozen regions may have one or two nodes in vertices of the
hexagon, see Section \ref{Section_comp_simulations} for illustrations.

In terms of orthogonal polynomials, our models go all the way up to
the top of the $q$-Askey scheme --- the most general models we analyze
asymptotically are related to the q-Racah classical orthogonal
polynomials. We also show that the elliptic weights lead to
the biorthogonal functions constructed in \cite{SpiridonovVP/ZhedanovAS:2000b}.
We hope to return to the asymptotic analysis of this case in a later
publication.

Our proof of (2) follows the same steps as the argument for the
uniform case in \cite{Gor}. It is based on the general method of
computing limits of correlation kernels suggested
in \cite{BO} and \cite{O}. The crucial property we need is the
second order difference equation satisfied by the q-Racah orthogonal
polynomials.

Our perfect sampling algorithm is obtained from a more general
construction of relatively simple Markov chains that change the size
of the box (one side increases by 1 and another side decreases by
1), and that map the measures from the class we consider to similar
ones. The construction follows the approach of \cite{BF}; the key
facts that make that approach possible reduce to certain recurrence
relations for the q-Racah polynomials.

A computer simulation of the above-mentioned Markov chains can be
found at
\texttt{http://www.math.caltech.edu/papers/Borodin-Gorin-Rains.exe}.

\smallskip

\noindent\textbf{Acknowledgements}. AB was partially supported by NSF grant DMS-0707163. \blue{VG
was partially supported by the Moebius Contest Foundation for Young Scientists.} EMR was partially
supported by NSF grant DMS-0833464.

\section{Model and results}

\label{section_Model_and_results}

\subsection{Combinatorial interpretations}

For any integers $a,b,c\ge 1$ consider a hexagon with sides
$a,b,c,a,b,c$ drawn on the regular triangular lattice. Denote by
$\Omega_{a\times b\times c}$ the set of all tilings of this hexagon
by rhombi obtained by gluing two of the neighboring elementary
triangles together (such rhombi are called {\it lozenges\/}).
 An element of $\Omega_{3\times 3\times 3}$ is
shown in Figure 1.

\begin{center}
 \scalebox{0.5}{\includegraphics{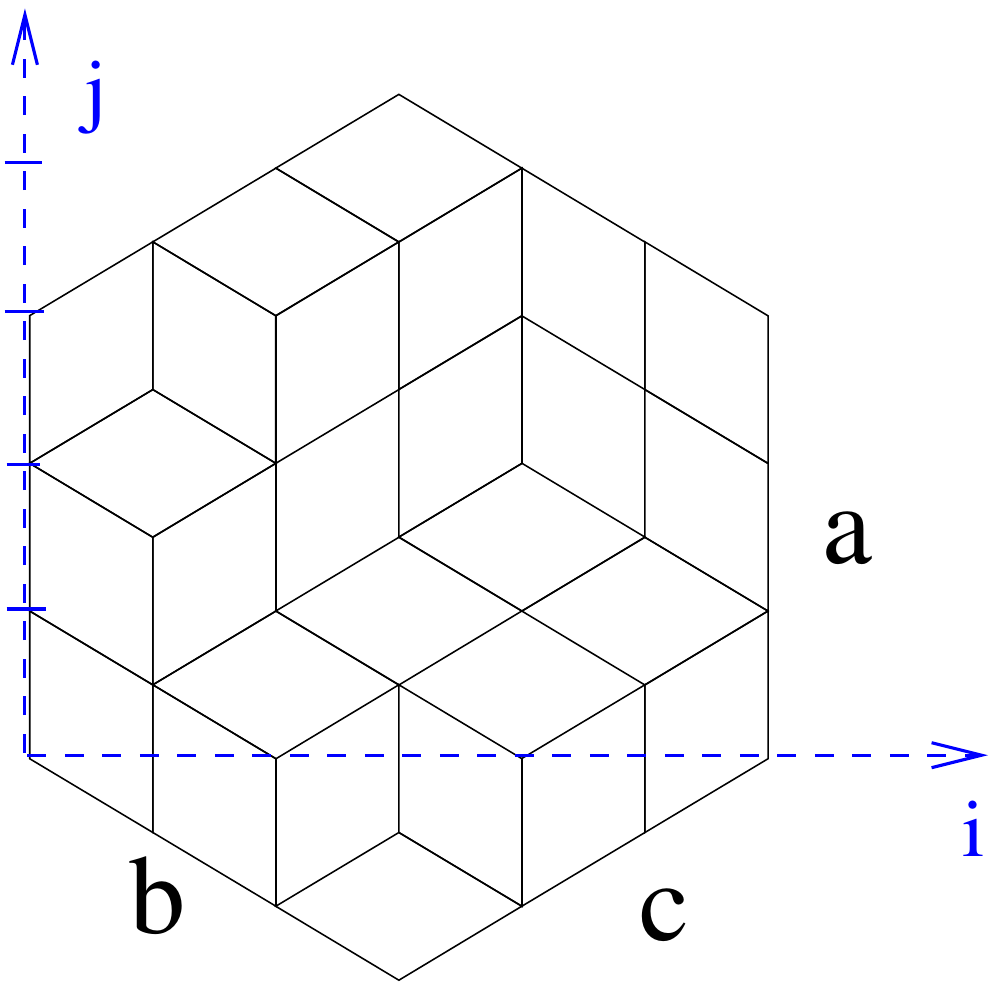}} \hspace{1cm}
  \scalebox{0.75}{\includegraphics{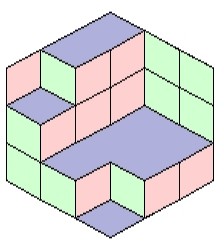}}

 Figure 1. Tiling of a $3\times 3\times 3$ hexagon.
\end{center}

Lozenge tilings of a hexagon can be identified with 3-D Young
diagrams (equivalently, boxed plane partitions) or with stepped
surfaces. The bijection is best described pictorially. We show a 3-D
shape corresponding to a tiling in Figure 1.

It is convenient for us to slightly modify both hexagon and lozenges
by means of a simple affine transform of the plane.

\begin{center}
 {\scalebox{0.5}{\includegraphics{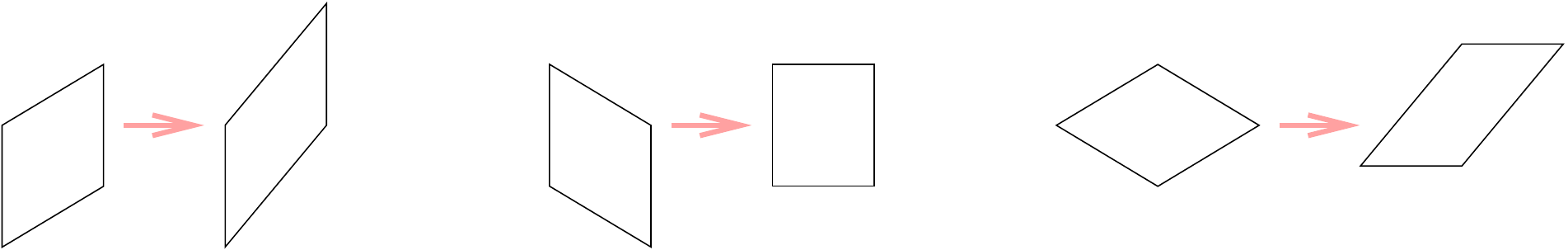}}}

 Figure 2. Affine modification of lozenges
\end{center}

We thus obtain a tiling of a slightly different hexagon.

\begin{center}
 {\scalebox{0.5}{\includegraphics{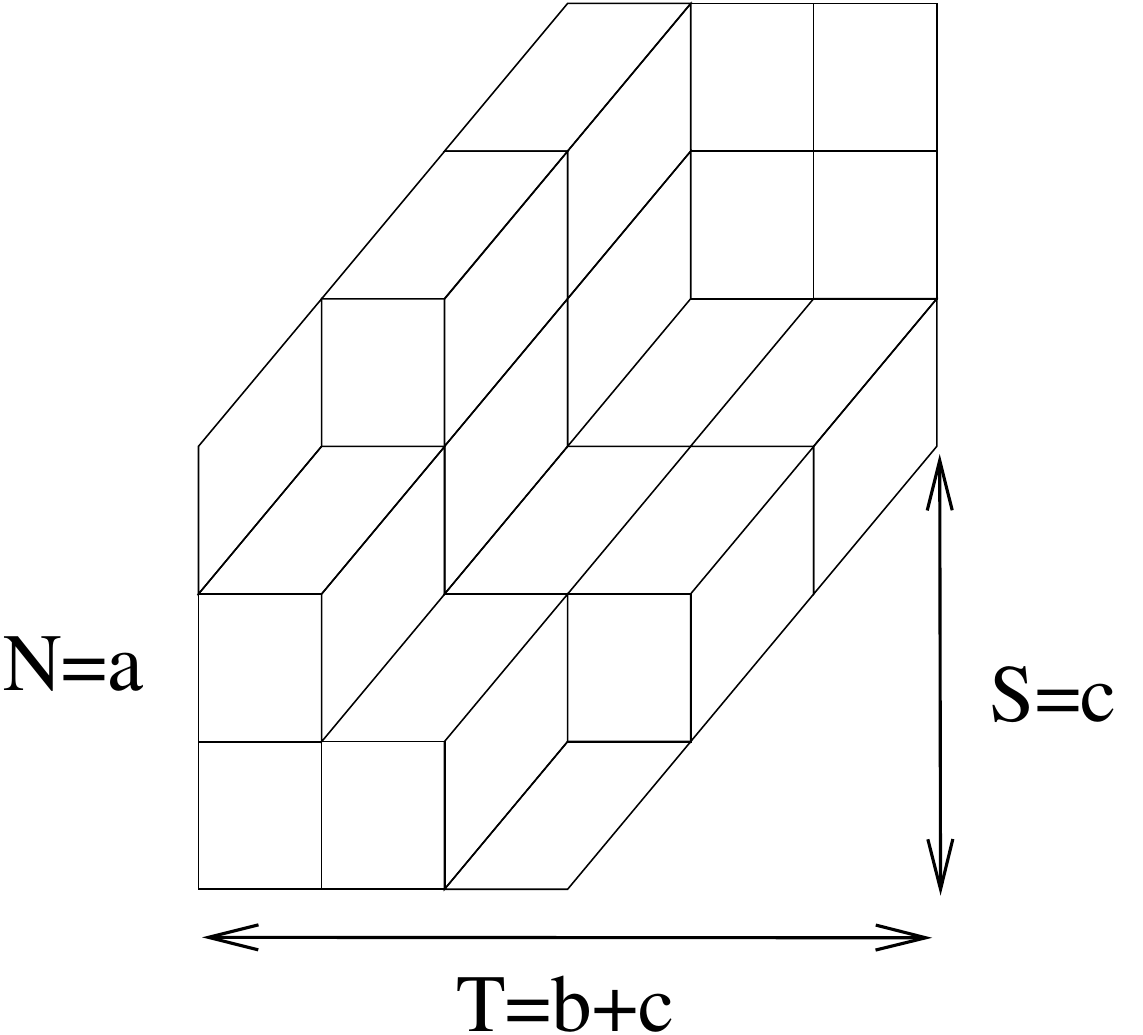}}} {\scalebox{0.5}{\includegraphics{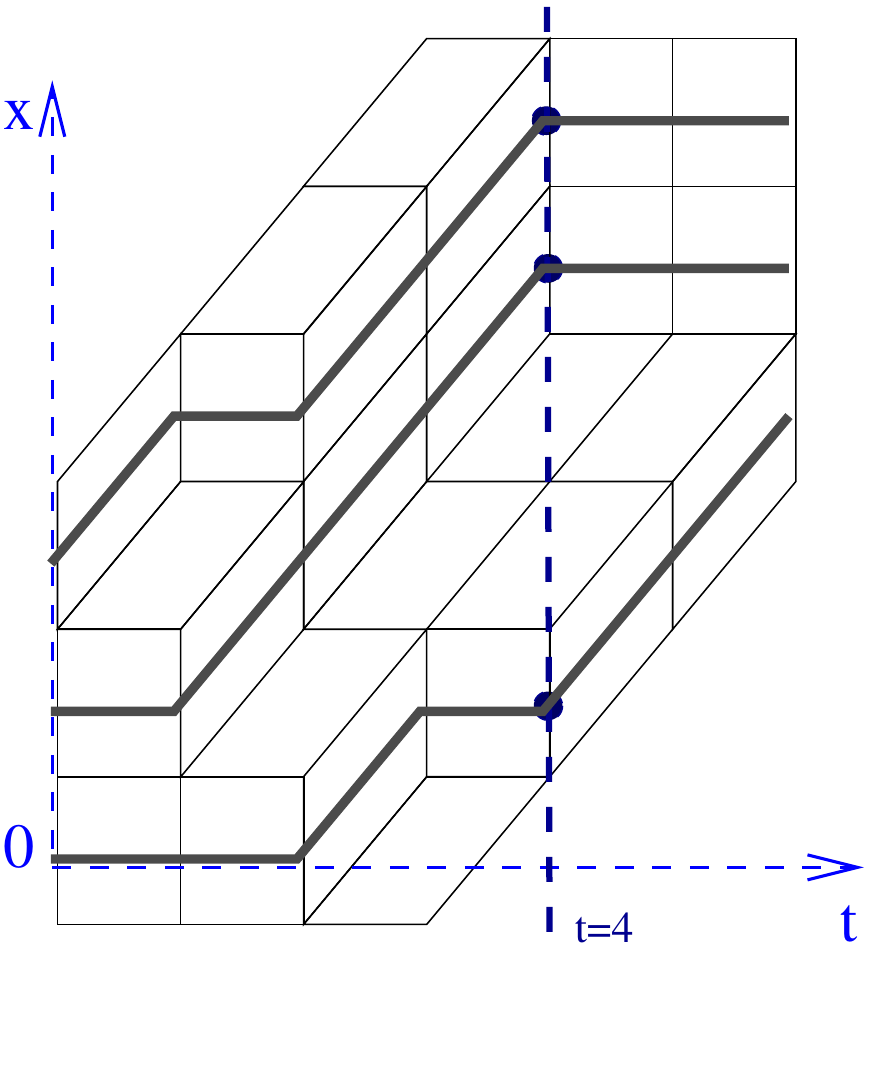}}}

 Figure 3. Modified tiling of a $3\times 3\times 3$ hexagon and the corresponding family of non-intersecting paths.
\end{center}

In what follows we use different parameters instead of $a$, $b$,
$c$. Set $N=a$, $T=b+c$, $S=c$. We will also denote the set
$\Omega_{a\times b\times c}$ by $\Omega(N,T,S)$.

Each tiling corresponds to a family of nonintersecting paths as shown in Figure 3.

Consider a section of our family of paths by a vertical line
$t=t_0$. Clearly, this gives an $N$-tuple of points in $\mathbb Z$.
Thus, our tiling can be viewed as an $N$-point configuration
varying in time $t=0,1,\dots, T$. Note that when $t=0$ the
configuration consists of points $\{0,1,\dots,N-1\}$, while for
$t=T$ the configuration consists of points $\{S,\dots,S+N-1\}$.

\subsection{Probability models}
\label{Section_pr_models}

Let us introduce the probability measures on $\Omega(N,T,S)$ that are
studied in this paper. For any ${\mathcal T} \in \Omega(N,T,S)$, we
define its weight $w({\mathcal T})$ and consider the probability
distribution on $\Omega(N,T,S)$ given by the formula
$$
 {\rm Prob}\{{\mathcal T}\}=\frac{w({\mathcal T})}{\sum\limits_{{\mathcal T}'\in \Omega(N,T,S)}w({\mathcal T'})}
$$

The weights we consider are such that the probability of a tiling is
proportional to the product of certain weights corresponding to the
horizontal lozenges in it, i.e.,
$$
 w({\mathcal T})=\prod\limits_{{\scalebox{0.2}{\includegraphics{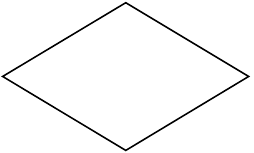}}}\in {\mathcal
 T}}{w({\scalebox{0.2}{\includegraphics{hor_lozenge.pdf}}})}
$$

Note that the number of horizontal lozenges in a tiling of a given
hexagon is fixed (i.e., it does not depend on the tiling). Hence,
multiplying $w({\scalebox{0.2}{\includegraphics{hor_lozenge.pdf}}})$
by a nonzero constant does not change the probability distribution.

In the most general case considered in Section \ref{sc:Appendix},
$w({\scalebox{0.2}{\includegraphics{hor_lozenge.pdf}}})$ is an
elliptic weight given by

\begin{equation}
\label{elliptic_weight}
 w({\scalebox{0.2}{\includegraphics{hor_lozenge.pdf}}})=
\frac{(u_1u_2)^{1/2}q^{j-1/2}\theta_p(q^{2j-1}u_1u_2)}{\theta_p(q^{j-3i/2-1}u_1,q^{j-3i/2}u_1,q^{j+3i/2-1}u_2,q^{j+3i/2}u_2)},
\end{equation}
where the coordinates of the topmost point of
${\scalebox{0.2}{\includegraphics{hor_lozenge.pdf}}}$ are $(i,j)$
(the $i$ and $j$ axes are pictured in Figure 1), $u_1$, $u_2$, $p$,
$q$ are (generally speaking, complex) parameters,
$$
\theta_p(x)=\prod_{i=0}^{\infty}(1-p^ix)(1-p^{i+1}/x)
$$
and $\theta_p(a,b,c\dots)=\theta_p(a)\theta_p(b)\theta_p(c)\dots$

Mostly we will not be concerned with this most general case, nor with the most general
trigonometric case obtained by taking $p\to 0$ (note $\theta_0(x)=1-x$), as in these cases the
kernels involve biorthogonal functions.  The most general orthogonal polynomial case is the limit
\blue{
\begin{equation}
\label{eq_limit_transition}
 p\to 0,\quad u_1=O(\sqrt p),\quad u_2=O(\sqrt p),\quad u_1u_2=p \kappa^2
 q^{-S},
\end{equation}}
in which case the weight function is \blue{
\begin{equation}
\label{qRacah_weight}
 w({\scalebox{0.2}{\includegraphics{hor_lozenge.pdf}}})= \kappa q^{j-(S+1)/2}-\frac1{\kappa
 q^{j-(S+1)/2}};
\end{equation}}
this is also the most general case in which the weight is
independent of $i$.

 Clearly, the factor \blue{$q^{-(S+1)/2}$} can be removed if
we replace $\kappa $ by \blue{$\kappa '=\kappa \cdot q^{-(S+1)/2}$.}
However, this choice is more convenient for our further
considerations.

We need to make sure that the probabilities of tilings are positive.
This leads to certain restrictions on the parameters. There are three
possible cases:

\begin{enumerate}
\item \emph{imaginary $q$-Racah case:} $q$ is a positive real number, $\kappa $ is an arbitrary pure imaginary complex
number;
\item \emph{real $q$-Racah case:} $q$ is a positive real number, $\kappa $ is a real number with
additional restrictions depending on the size of a hexagon; $\kappa$
cannot lie inside the interval \blue{$[q^{-N+1/2},q^{(T-1)/2}]$ or
$[q^{(T-1)/2},q^{-N+1/2}]$,} depending on whether $q>1$ or $q<1$;

\item \emph{trigonometric $q$-Racah case:} $q$ and $\kappa$ are complex numbers
of modulus 1, $q=e^{i\alpha}$, $\kappa =e^{i\beta}$, plus additional
restrictions on $\kappa $ depending on the size of a hexagon: \blue{
both $-\alpha (T-1)/2+\beta$ and $\alpha (N-1/2) +\beta$ } must lie
in the same interval of the form $[\pi k, \pi (k+1)]$, $k\in\mathbb
Z$. In this case \blue{
$$\kappa q^{j-(S+1)/2}-\frac1{\kappa q^{j-(S+1)/2}}=2\sqrt{-1}\sin\left(\alpha(j-(S+1)/2)+\beta\right),$$}
and the factor $2\sqrt{-1}$ here can be omitted.
\end{enumerate}

The names of the cases are related to those of the classical
orthogonal polynomials that appear in the analysis.

Denote the resulting measure on $\Omega(N,T,S)$ by
$\mu(N,T,S,q,\kappa )$.

There are further limit transitions.

If we send $\kappa \to 0$ then we get the \emph{$q$-Hahn case}
$$
 w({\scalebox{0.2}{\includegraphics{hor_lozenge.pdf}}})=q^{-j}.
$$
Thus, the probability of the plane partition of volume (=number of
$1\times1\times1$ boxes) $V$ is proportional to $q^{-V}$. On the
other hand, if we send $\kappa \to\infty$ we will get the weights
$q^V$. Therefore, the case of general $\kappa $ can be viewed as an
interpolation between the measures $q^{volume}$ and $q^{-volume}$.

Another possibility is to set $\kappa =q^{K}$ and then send $q\to 1$. The weight of a horizontal
lozenge tends to \blue{
$$
 w({\scalebox{0.2}{\includegraphics{hor_lozenge.pdf}}})={K+j-(S+1)/2}.
$$
} We call this case the \emph{Racah case}. One has to impose
restrictions to ensure positivity: $K$ cannot lie inside the
interval $[-N+1/2,(T-1)/2]$.

Finally, if we either send $\kappa \to 0$ and set $q=1$, or send
$q\to 1$ and then send $ K \to\infty$, then we obtain the \emph{Hahn
case}
$$
w({\scalebox{0.2}{\includegraphics{hor_lozenge.pdf}}})=1
$$
In this case our probability distribution on $\Omega(N,T,S)$ is
uniform.

\smallskip

Below we mostly work with the imaginary $q$-Racah case, but all the
results can be carried over to all the other cases mentioned above
by the appropriate substitutions of parameters and degenerations.

\subsection{Representation of the weight}

Let us view our tiling as a pile of $1\times 1\times 1$ cubes  in
the box located between the planes $x_1=0$, $x_1=a$, $x_2=0$,
$x_2=b$, $x_3=0$, $x_3=c$. Then Figure 1 represents a projection of
the border of the 3-D-diagram to the plane $x_1+x_2+x_3=0$ parallel
to the vector $(1,1,1)$.

For any $v\in\mathbb{R}^3$, denote by $h(v)$ the Euclidian distance
from $v$ to the plane $x_1+x_2+x_3=0$ divided by $\sqrt{3}$, and
denote by $\hat h(v)$ the distance from $v$ to the union of
coordinate planes $x_1x_2x_3=0$ computed along the
$(1,1,1)$-direction and divided by $\sqrt{3}$.

Recall that the weight of a tiling is given by the formula:
$$
 w({\mathcal T})=const \cdot \prod\limits_{{\scalebox{0.2}{\includegraphics{hor_lozenge.pdf}}}\in {\mathcal
 T}}w(j),
$$
where $j$ is as in Figure 1.

Grouping all the $1\times1\times1$ cubes of the plane partitions
into columns with fixed coordinates $(x_2,x_3)$, we can rewrite the
weight in the form
$$
 w({\mathcal T})=const \cdot \prod\limits_{\scalebox{0.19}{\includegraphics{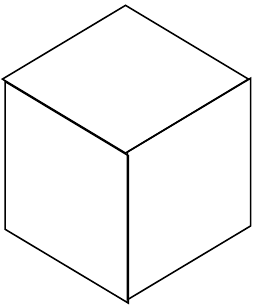}}}\frac{w(j)}{w(j-1)},
$$
where the product is taken over all cubes of the plane partition,
and $j$ stands for the $j$-coordinate (see Figure 1) of the top
vertex of the cube.

Collecting factors with the same $j$, we obtain
$$
 w({\mathcal T})=const \cdot \prod\limits_{v}\left(\frac{w(j)}{w(j-1)}\right)^{h(v)-\hat h(v)},
$$
where the product is taken over all points $v$ on the border of the
plane partition, whose three coordinates $(x_1,x_2,x_3)$ are
integers. Equivalently, one can think of the product being taken
over all vertices of the triangular lattice inside the hexagon. Note
that we may replace $h(v)-\hat h(v)$ by $h(v)$ since the remaining
product depends only on $(N,S,T)$. This gives
$$
 w({\mathcal T})=const \cdot
 \prod\limits_{v}\left(\frac{w(j)}{w(j-1)}\right)^{h(v)}.
$$

For the $q$-Racah case we obtain \blue{
$$
 w({\mathcal T})=const \cdot \prod\limits_{v}\left(\frac{{\kappa q^{j-(S+1)/2}-\dfrac1{\kappa q^{j-(S+1)/2}}}
}{{\kappa q^{j-(S+3)/2}-\dfrac1{\kappa q^{j-(S+3)/2}}}
}\right)^{h(v)},
$$}
while for the $q$-Hahn case
$$
 w({\mathcal T})=const \cdot \prod\limits_{v}\left(\frac{q^{-j}}
{q^{-j+1}}\right)^{h(v)}=const \cdot
\prod\limits_{v}q^{-h(v)}=const\cdot q^{-volume}.
$$

\label{Section_weight_representation}

\subsection{Results and variational interpretation}

Our results are of two kinds.

First, for each of the probability distributions on tilings described above
we construct explicit discrete time Markov chains that relate random
tilings of hexagons of various sizes.  The elementary steps of these chains
change the size of the hexagon from $a\times b\times c$ to $a\times (b\mp1)
\times (c\pm1)$.

Randomness in each step consists in generating finitely many
independent one-dimensional random variables. It takes $O(a(b+c))$
arithmetic operations to produce a tiling of the $a\times (b-1)
\times (c+1)$ hexagon using a tiling of the $a\times b \times c$
hexagon.

Together with the trivial observation that there is exactly one tiling
of an $a\times (b+c)\times 0$ hexagon, these chains provide, in
particular, an efficient perfect sampling algorithm for random
tilings distributed according to the real $q$-Racah, imaginary
$q$-Racah, trigonometric $q$-Racah, $q$-Hahn, Racah and Hahn
distributions.

A description of the algorithm can be found in Section
\ref{Section_perfect_sampling_algorithm}, and in Section
\ref{Section_comp_simulations} we provide some pictures generated
using this algorithm.

\smallskip

Second, we evaluate the asymptotics of the local behavior of our
measures as all sides of the hexagon tend to infinity comparably.

It is known, see \cite{Ke1}, \cite{Ke2}, \cite{OR}, \cite{Sh},
\cite{KOS}, \cite{BS}, that for any three positive numbers
$(p_1,p_2,p_3)$ with $p_1+p_2+p_3=1$, there exists a unique
translation-invariant Gibbs measure on lozenge tilings of the whole
plane such that in a large box, the numbers $\{p_j\}_{j=1}^3$
provide asymptotic ratios of the number of lozenges of three types.
It is convenient to encode the triple $(p_1,p_2,p_3)$ by a complex
number $z$ in the upper-half plane so that the triangle $(0,1,z)$
has angles $(\pi p_1,\pi p_2,\pi p_3)$. The correspondence between
angles and lozenge types is indicated in the figure below.

\begin{center}
 \scalebox{0.6}{\includegraphics{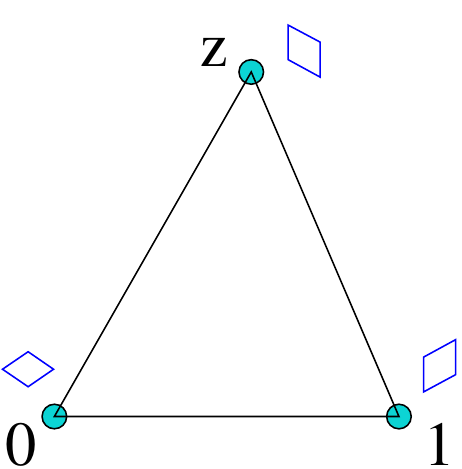}}
\end{center}

If one of the $p_j$'s tends to 1 (for example, $z$ tends to a point
in $\mathbb{R}$ away from $\{0,1\}$), then the corresponding measure
degenerates to the ``frozen'' (nonrandom) tiling of the plane by
lozenges of the corresponding type.

\begin{theorem}
\label{Th_Local_measures}  Introduce a small parameter $\varepsilon \ll 1$, and set
 $$
   S= \mathsf S\varepsilon^{-1} + o(\varepsilon^{-1}),\quad
   T= \mathsf T\varepsilon^{-1} + o(\varepsilon^{-1}),\quad
   N= \mathsf N\varepsilon^{-1} + o(\varepsilon^{-1}),\quad
   q= {\mathsf q}^{\varepsilon + o(\varepsilon)}.
 $$
 Then near a given point $(t,x)=(\mathsf t\varepsilon^{-1},\mathsf x\varepsilon^{-1})$,
 the random tiling  converges, as $\varepsilon\to 0$, to a certain
 ergodic translation-invariant Gibbs measure.
The parameter $z$ of this measure (encoding the slope $(p_1,p_2,p_3)$ via
angles as described) is the unique solution in the upper half-plane to the
(quadratic) equation
 \begin{equation}
  \label{eq_Q_qRacah}
  Q(u,v)=0,
\end{equation}
 where
\begin{equation}
 u=\frac{z{\mathsf q}^{\mathsf t}-\kappa ^2 {\mathsf q}^{-\mathsf S+2\mathsf x}}{1-z\kappa ^2{\mathsf q}^{-\mathsf S+2\mathsf x-\mathsf
 t}}\,,\qquad
 v=\frac{(1-z){\mathsf q}^{\mathsf x}}{1-z\kappa ^2{\mathsf q}^{-\mathsf S+2\mathsf x-\mathsf
 t}}\,,
\end{equation}
 and $Q$ is a degree 2 polynomial in $(u,v)$:
 \begin{multline}
\label{eq_Q_def_qRacah}
 Q(u,v)=u^2 \\
+ \biggl({\mathsf q}^{\mathsf T-\mathsf S-\mathsf N}+\kappa ^2(1+{\mathsf q}^{-\mathsf S+\mathsf
N+\mathsf T}
 +{\mathsf q}^{-2\mathsf S+\mathsf T}+{\mathsf q}^{-\mathsf S-\mathsf N}
 -{\mathsf q}^{-\mathsf S}-{\mathsf q}^{-\mathsf S+\mathsf T})+\kappa ^4{\mathsf q}^{-\mathsf S+\mathsf N}\biggr)v^2\\
+ \biggl({\mathsf q}^{\mathsf T-\mathsf S}+{\mathsf q}^{-\mathsf N}+\kappa ^2({\mathsf q}^{\mathsf
N}+{\mathsf q}^{- \mathsf S})\biggr)uv
 -
 \biggl({\mathsf q}^{\mathsf T}+{\mathsf q}^{\mathsf T-\mathsf S-\mathsf N}+\kappa ^2(1+q^{\mathsf N-\mathsf S+\mathsf T})\biggr)v \\
- (1+{\mathsf q}^{\mathsf T})u +{\mathsf q}^{\mathsf T}.
\end{multline}
If the solutions to this equation in $z$ are real, then the limit measure
is frozen.
\end{theorem}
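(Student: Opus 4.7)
The plan is to identify the random tiling with a determinantal point process whose correlation kernel is expressed through the q-Racah orthogonal polynomials, and then perform a steepest-descent analysis on an appropriate double contour integral representation of the kernel.

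First, I would set up the correlation kernel. Viewing a tiling as a family of $N$ nonintersecting paths, the positions of horizontal lozenges along a vertical section $t=t_0$ form a determinantal point process on $\mathbb Z$. The weight \eqref{qRacah_weight} is precisely the q-Racah orthogonality weight, so Eynard--Mehta (or the Lindström--Gessel--Viennot setup) yields a correlation kernel $K_N(t_1,x_1;t_2,x_2)$ admitting a Christoffel--Darboux-type expression in terms of q-Racah polynomials. Following the method of \cite{BO}, \cite{O} and mirroring \cite{Gor}, the second-order difference equation satisfied by the q-Racah polynomials allows one to rewrite the finite sum as a double contour integral
\[
K_N(t_1,x_1;t_2,x_2) = \frac{1}{(2\pi i)^2}\oint\oint \frac{e^{\varepsilon^{-1}(S(w) - S(z))}}{w-z}\,R(w,z)\,dw\,dz,
\]
where $S$ is an explicit ``action'' built from logarithms of q-Pochhammer factors depending on the macroscopic parameters $(\mathsf q, \kappa, \mathsf S, \mathsf T, \mathsf N, \mathsf t, \mathsf x)$, and $R$ is a subexponential prefactor depending on the offsets $(t_2-t_1, x_2-x_1)$.

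Next, I would carry out the steepest-descent analysis. The saddle-point equation $S'(w)=0$ is a rational equation in $w$; after substituting the scalings and performing the change of variables
\[
 u=\frac{z{\mathsf q}^{\mathsf t}-\kappa ^2 {\mathsf q}^{-\mathsf S+2\mathsf x}}{1-z\kappa ^2{\mathsf q}^{-\mathsf S+2\mathsf x-\mathsf t}},\qquad
 v=\frac{(1-z){\mathsf q}^{\mathsf x}}{1-z\kappa ^2{\mathsf q}^{-\mathsf S+2\mathsf x-\mathsf t}},
\]
I expect it to reduce algebraically to the quadratic $Q(u,v)=0$ of \eqref{eq_Q_def_qRacah}. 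When the discriminant is negative the equation has a unique root $z$ in the upper half-plane; this is the slope parameter of the limiting Gibbs measure. When the discriminant is nonnegative the roots are real, the two saddles are real as well, and no complex critical point contributes a genuine oscillatory asymptotics, so the kernel becomes degenerate and the tiling is frozen.

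Finally, I would match the limiting kernel with the kernel of the unique ergodic translation-invariant Gibbs measure of slope $z$ as described in \cite{Ke1}, \cite{KOS}, \cite{Sh}. Deforming the contours through the two conjugate saddles picks up a residue along the arc between them, and the remaining steepest-descent contribution is negligible at fixed local separation; what remains is the familiar incomplete-beta (``extended sine'') kernel whose parameter is exactly $z$. The convergence of correlation kernels implies weak convergence of the point processes, and by \cite{Sh} this identifies the limit as the claimed ergodic Gibbs measure. I expect the two main obstacles to be (a) producing the double-contour representation from the Christoffel--Darboux sum using the q-Racah difference equation while controlling boundary contributions (this is where the proof closely parallels \cite{Gor} but with a richer action), and (b) the purely algebraic verification that the saddle-point equation, after the stated change of variables, is exactly the polynomial $Q(u,v)$ displayed in \eqref{eq_Q_def_qRacah}; this is a finite computation but genuinely laborious because of the number of parameters.
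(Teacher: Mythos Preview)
Your overall strategy---compute the correlation kernel via Eynard--Mehta with $q$-Racah polynomials, take a bulk limit, identify the result with the incomplete beta kernel of the ergodic Gibbs measure---matches the paper's architecture. But the core analytic step you propose is \emph{not} the one the paper uses, and your description of it contains a misattribution.

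The paper does not produce a double contour integral and does not do steepest descent. Instead it uses Olshanski's \emph{spectral projection} method: the static kernel $\mathcal P_t(x,y)=\sum_{n<N} f_n^t(x)f_n^t(y)$ is the spectral projection, onto a certain eigenvalue interval, of the second-order $q$-Racah difference operator. One takes the strong resolvent limit of that difference operator (the coefficients $B(x)/\sqrt{w(x+1)/w(x)}$ and $D(x)/\sqrt{w(x-1)/w(x)}$ converge to the same constant), obtaining the free Laplacian $f\mapsto\tfrac12(f(x+1)+f(x-1))$; the spectral interval limits to an explicit segment, and the projection is then computed by Fourier transform, yielding the discrete sine kernel with angle $\phi$. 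For $k\neq l$ the kernel factors as the static projection times products of the transition operators $U_t$, whose limits are constant-coefficient two-term operators; composing these with the projection gives the incomplete beta kernel. Finally, the identification of $z=-ce^{i\phi}$ with the root of $Q(u,v)=0$ is a direct algebraic check (compute $z+\bar z$ and $z\bar z$ and substitute). This is exactly what \cite{BO}, \cite{O}, and \cite{Gor} do; those references do \emph{not} rewrite the Christoffel--Darboux sum as a double contour integral.

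Your steepest-descent route is in principle viable (it has been carried out for related discrete ensembles in later literature), but it is a genuinely different argument, and step (a) in your plan---obtaining the double integral representation from the difference equation ``following \cite{BO}, \cite{O}''---is not supported by those references. If you pursue saddle-point analysis you would need to supply an independent derivation of the contour integral for the $q$-Racah kernel, which is substantially more work than the spectral-projection argument and is the main reason the paper chose the latter.
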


\medskip
 Let us now explain how one could guess these formulas.
 In Section \ref{Section_Bulk_limits} we present a rigorous proof of Theorem \ref{Th_Local_measures} which
 uses an argument of a different kind. The remainder of this section
 is purely empirical; we hope to address the same issues rigorously in
 a later publication.
\smallskip

Although Theorem \ref{Th_Local_measures} describes the microscopic
behavior of our model, the parameters $(p_1,p_2,p_3)$ of the limit
measure are closely connected with macroscopic properties.

If we view tilings as stepped surfaces in a box and scale them in
such a way that the bounding box remains fixed, then it is plausible
that in the limit we will observe a deterministic limit shape. The
normal vector to this limit shape at any point has to coincide with
the vector $(p_1,p_2,p_3)$ of the local limit measure at this point.

The concentration of the measure near the limit shape is known in
the $q$-Hahn case, see \cite{CKP}, \cite{KO}. The limit shape is the
unique solution of a certain variational problem. It is not hard to
pose such a variational problem in the $q$-Racah case as well.

Recall that in Section \ref{Section_weight_representation} we found the following representation
for the weight of a tiling: \blue{
$$
 w({\mathcal T})=const \cdot \prod\limits_{v}\left(\frac{\kappa q^{j-(S+1)/2}-\dfrac1{\kappa q^{j-(S+1)/2}}}
 {\kappa q^{j-(S+3)/2}-\dfrac1{\kappa q^{j-(S+3)/2}}}\right)^{h(v)}.
$$}

Taking the logarithm of $w({\mathcal T})$ and removing the constant we get \blue{
$$
 \sum\limits_{v}h(v) \left[ \ln\left({\kappa q^{j-(S+1)/2}-\frac1{\kappa q^{j-(S+1)/2}}}\right)
   -\ln\left({\kappa q^{j-(S+3)/2}-\frac1{\kappa q^{j-(S+3)/2}}}\right)
   \right].
$$}

This is a Riemannian sum for an integral, and as $\varepsilon\to 0$
this yields, up to second order terms,
\begin{equation*}
  \frac1{\varepsilon^2}\int\limits_{\text{hexagon}} \mathsf h(\mathsf x,\mathsf t) \dfrac{\partial\ln({\kappa \mathsf q^{-\mathsf S/2+\mathsf j}-\frac1{\kappa \mathsf q^{-\mathsf S/2+\mathsf
 j}}})}{\partial \mathsf j},
\end{equation*}
where $\mathsf x$, $\mathsf t$ are normalized coordinates inside the
hexagon, $\mathsf h$ is the normalized height function, and $\mathsf
j=\mathsf x-\mathsf t/2$.

Following \cite{CKP} and \cite{KOS} we know that the number of
stepped surfaces in an $\varepsilon$-neighborhood of a given height
function is asymptotically
$$
\exp\left(\frac{1}{\varepsilon^2}\int \sigma(\nabla \mathsf
h)\right),
$$
 where $\sigma$ is the surface tension that can be expressed through the Lobachevski function.
 (Note that the signs in \cite{CKP} and \cite{KOS} are different, we follow \cite{CKP}.)

Consequently, the probability of the stepped surfaces in the
$\varepsilon$-neighborhood of a given height function is
asymptotically proportional to
$$
\exp\left[\frac{1}{\varepsilon^2}\left(\int \sigma(\nabla \mathsf
h)+\int\limits \mathsf h(\mathsf x,\mathsf t)
\dfrac{\partial\ln({\kappa \mathsf q^{-\mathsf S/2+\mathsf
j}-\frac1{\kappa \mathsf q^{-\mathsf S/2+\mathsf
 j}}})}{\partial \mathsf j} \right)\right],
$$
and to find the limit shape one has to maximize this expression. We
conclude that the limit shape can be found as a solution of the
variational problem:
$$
 \int \sigma(\nabla \mathsf h)+\int\limits \mathsf h(\mathsf x,\mathsf t) \dfrac{\partial\ln({\kappa \mathsf q^{-\mathsf S/2+\mathsf j}-\frac1{\kappa \mathsf q^{-\mathsf S/2+\mathsf
 j}}})}{\partial \mathsf j} \to \max
$$

Let us write down the Euler-Lagrange equation for this variational
problem. It is convenient to use our parameter $z$ instead of
partial derivatives of the limit height function $\mathsf h$. We
know (see \cite{KO}) that the Euler-Lagrange equation for the first
term is the complex Burgers equation
$$
 \frac{z_t}{z}-\frac{z_x}{1-z}=0.
$$

 Adding it to the trivial Euler-Lagrange equation for the second term we
 obtain
$$
 \frac{z_t}{z}-\frac{z_x}{1-z}=\dfrac{\partial\ln({\kappa \mathsf q^{-\mathsf S/2+\mathsf j}-\frac1{\kappa \mathsf q^{-\mathsf S/2+\mathsf
 j}}})}{\partial \mathsf j},
$$
or
$$
\frac{z_t}{z}-\frac{z_x}{1-z}= \ln(\mathsf q) \frac{\kappa ^2\mathsf
q^{-\mathsf S+2x-t}+1} {{\kappa ^2\mathsf q^{-\mathsf S+2x-t}-1}}.
$$

This is a quasilinear equation. The equations for characteristics
have the form
$$
 \frac{dt}{ds}=\frac 1z,
\quad \frac{dx}{ds}=\frac1{z-1}, \quad \frac{dz}{ds}=\ln(\mathsf q)
\frac{\kappa ^2\mathsf q^{-\mathsf S+2x-t}+1} {{\kappa ^2\mathsf
q^{-\mathsf S+2x-t}-1}}.
$$

We find two first integrals
$$
 u=\frac{z{\mathsf q}^{\mathsf t}-\kappa ^2 {\mathsf q}^{-\mathsf S+2\mathsf x}}{1-z\kappa ^2{\mathsf q}^{-\mathsf S+2\mathsf x-\mathsf
 t}},\qquad
 v=\frac{(1-z){\mathsf q}^{\mathsf x}}{1-z\kappa ^2{\mathsf q}^{-\mathsf S+2\mathsf x-\mathsf t}}.
$$

Any solution of the partial derivative equation has the form (cf.
the proof of Corollary 1 in \cite{KO})
$$
Q(u,v)=0,
$$
where $Q$ is a suitable analytic function. This equation defines $z$
as a function of $\mathsf t$ and $\mathsf x$.

In the $q$-Hahn case, it is known that $Q(u,v)$ is a second degree
polynomial. It is natural to assume that the same is true for the
$q$-Racah case, and then the requirement that the $(\mathsf
t,\mathsf x)$-curve where $z$ degenerates to $\mathbb{R}$ is tangent
to the six sides of the hexagon leads to the formula of Theorem
\ref{Th_Local_measures}. It remains a challenge to check if $Q$ is
still algebraic for more general polygonal domains, as was shown
in \cite{KO} for the $q$-Hahn case.

\section{Weight sums}

\label{Section_weight_sums}

A horizontal lozenge in the tiling interpretation corresponds to a
hole (=absence of a particle) in the nonintersecting paths or
$N$-point configuration interpretation.  The coordinates $(i,j)$ of the
horizontal lozenge in the formulas \eqref{elliptic_weight} and
\eqref{qRacah_weight} correspond to the coordinates $(t,x)$ of a hole
in the following way:
$$
\begin{cases}
i=t,\\
\blue{ j=x-t/2+1.}
\end{cases}
$$

Now consider any vertical section $t=t_0$ of the family of
nonintersecting paths (see Figure 3) and fix the corresponding
$N$-point configuration $x_1<x_2<\dots<x_N$.

Denote by $C(t;x_1,\dots,x_N)$ the product of weights corresponding
to the holes on this vertical line.

Denote by $L(t;x_1,\dots,x_N)$ the sum of the products of weights
corresponding to holes situated to the left of the vertical line.
The sum is taken over all families of paths connecting the points
$\{(0,0),(0,1),\dots,(0,N-1)\}$ to the points
$\{(t_0,x_1),(t_0,x_2),\dots,(t_0,x_N)\}$.

Denote by $R(t;x_1,\dots,x_N)$ the sum of the products of weights
corresponding to holes situated to the right of the vertical line.
The sum is taken over all families of paths connecting the points
$\{(t_0,x_1),(t_0,x_2),\dots,(t_0,x_N)\}$ to the points
$\{(T,S),(T,S+1),\dots,(T,S+N-1)\}$.

 The following three propositions correspond to the case of the
$q$-Racah weight \eqref{qRacah_weight}.

Set
$$
 {\mu_{t,S}(x)}=q^{-x}+\kappa ^2q^{x-S-t+1}
$$

Here and below we use the $q$-Pochhammer symbol:
$$
 (a;q)_{n}=(1-a)(1-aq)\dots(1-aq^{n-1}).
$$
\begin{proposition}
\label{prop_L_t} We have
\begin{multline*}
L_t(x_1,\dots,x_N)=const_t \cdot \prod_{1\le i<j\le
N}(\mu_{t,S}(x_i)-\mu_{t,S}(x_j)) \\ \times
 \prod_{i=1}^N \frac{q^{x_i(t+N-1)} (1-\kappa ^2q^{2x_i-S-t+1})}{(q^{-1};q^{-1})_{t+N-1-x_i}\cdot(q;q)_{x_i}\cdot
 (\kappa^2 q^{x_i-t-S+1};q)_{t+N}}\,.
\end{multline*}

\end{proposition}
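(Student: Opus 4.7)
The plan is to apply the Lindström--Gessel--Viennot (LGV) lemma to convert the sum over non-intersecting path families into a determinant, and then evaluate the determinant using the Vandermonde structure in the spectral variable $\mu_{t,S}$.

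First, I would regroup the hole weights to the left of the line $t=t_0$ so that they factorize over the $N$ paths. At each intermediate time $t'\in\{0,\dots,t-1\}$, the horizontal lozenges occupy precisely the positions in the valid range that are not occupied by any of the particles $y_1^{t'}<\cdots<y_N^{t'}$, so
\[
\prod_{\text{holes at }t'} w(j(t',y)) \;=\; \frac{\prod_{y\in\text{valid range}} w(j(t',y))}{\prod_{k=1}^N w(j(t',y_k^{t'}))}.
\]
The numerator is independent of the configuration and is absorbed into $\mathrm{const}_t$; the denominator is a product over path indices $k$, each factor depending only on the trajectory of the $k$-th path. Summing over non-intersecting families by the LGV lemma then gives
\[
L_t(x_1,\dots,x_N) \;=\; \mathrm{const}_t \cdot \det\bigl[\Phi_{k-1}(x_i)\bigr]_{i,k=1}^N,
\]
where $\Phi_{k-1}(x)$ is the weighted sum over single right-moving lattice paths from $(0,k-1)$ to $(t,x)$ with per-time-step factor $1/w(j(t',y^{t'}))$.

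Second, I would evaluate $\Phi_{k-1}(x)$ in closed form. The sum is indexed by the subset of times at which the path performs a ``right'' step, and each factor $w(j)=\kappa q^{j-(S+1)/2}-\kappa^{-1}q^{-j+(S+1)/2}$ has the distinctive $q$-Racah antisymmetry. The goal is to show
\[
\Phi_{k-1}(x) \;=\; W(x)\, P_{k-1}\bigl(\mu_{t,S}(x)\bigr),
\]
with $W(x)$ an explicit prefactor and $P_{k-1}$ a polynomial of degree exactly $k-1$ in the spectral variable $\mu_{t,S}(x)=q^{-x}+\kappa^{2}q^{x-S-t+1}$. The sum reduces to a terminating $q$-hypergeometric series evaluable by a specialization of the $q$-Chu--Vandermonde identity; the appearance of $\mu_{t,S}$ as the natural variable is dictated by the $q$-Racah symmetry $x\mapsto S+t-1-x$ of the weight.

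Third, I would factor $W(x_i)$ out of the $i$-th row of the determinant and use the classical Vandermonde evaluation
\[
\det\bigl[P_{k-1}(\mu_{t,S}(x_i))\bigr]_{i,k=1}^N \;=\; \prod_{k=1}^N c_{k-1}\cdot \prod_{1\le i<j\le N}\bigl(\mu_{t,S}(x_i)-\mu_{t,S}(x_j)\bigr),
\]
where $c_{k-1}$ is the leading coefficient of $P_{k-1}$. Combining with the prefactors $W(x_i)$ and $\mathrm{const}_t$, and simplifying the resulting products by standard $q$-Pochhammer manipulations, should produce the claimed formula. The factor $1-\kappa^2 q^{2x_i-S-t+1}$ in the numerator is, up to an explicit unit, the derivative $\mu_{t,S}'(x_i)$ and is the expected Jacobian that arises in $q$-Racah orthogonality.

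The main obstacle is the explicit evaluation in the second step: one must set up the $q$-hypergeometric summation so as to respect the boundary constraints on admissible single-path trajectories (which come from the shape of the hexagon), recognize the closed form as a polynomial in $\mu_{t,S}$ with the correct leading coefficient, and track the prefactors so that the product $\prod_i W(x_i)\cdot\mathrm{const}_t$ matches precisely the denominator $\prod_i (q^{-1};q^{-1})_{t+N-1-x_i}(q;q)_{x_i}(\kappa^2 q^{x_i-t-S+1};q)_{t+N}$ together with the power $\prod_i q^{x_i(t+N-1)}$. Once the single-path sum is identified, the remaining work is bookkeeping.
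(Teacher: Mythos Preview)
Your plan is viable but takes a genuinely different route from the paper's proof. The paper does not compute the particle-side LGV determinant directly. Instead it (i) passes to the complementary configuration of holes $\{y_i\}$, (ii) invokes an elliptic identity from the Appendix (Lemma~\ref{Lemma_Weight_left}), proved there via the explicit inverse Kasteleyn matrix together with Warnaar's elliptic determinant evaluation, which already gives the hole-side sum in the factored form $\prod_{k<l}(\mu_{t,S}(y_k)-\mu_{t,S}(y_l))\times\prod_k(\text{univariate factor})$, (iii) degenerates from the elliptic weight to the $q$-Racah weight via the limit~\eqref{eq_limit_transition}, and (iv) converts the hole Vandermonde back to the particle Vandermonde using
\[
\prod_{k<l}(\mu(y_k)-\mu(y_l))
=\frac{\prod_{i<j}(\mu(x_i)-\mu(x_j))\cdot\prod_{0\le u<v\le S+N-1}(\mu(u)-\mu(v))}
      {\prod_i \prod_{u<x_i}(\mu(x_i)-\mu(u))\prod_{u>x_i}(\mu(u)-\mu(x_i))},
\]
the last denominators collapsing to $q$-Pochhammer symbols. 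So the paper outsources the hard work to the elliptic level and to Warnaar's identity, then specializes.

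Your route is more self-contained, but be aware that your second step is not a $q$-Chu--Vandermonde specialization. The single-path sum $\Phi_{k-1}(x)$ with the reciprocal $q$-Racah hole weight is (the $p\to0$ degeneration of) the elliptic inverse-Kasteleyn entry $W$ computed in Theorem~9.1 of the Appendix, and the assertion that $\det[\Phi_{k-1}(x_i)]$ factors as $\prod_i W(x_i)$ times a Vandermonde in $\mu_{t,S}$ is precisely the $p\to0$ case of Warnaar's Corollary~5.4. If you want to stay at the $q$-Racah level, the cleanest way to carry out your step~2 is not to sum $\Phi_{k-1}$ in closed form first, but rather to verify directly that $\Phi_{k-1}(x)/\Phi_0(x)$ is invariant under the involution $q^{-x}\leftrightarrow \kappa^2 q^{x-S-t+1}$ (so is a function of $\mu_{t,S}(x)$) and is a polynomial of degree $k-1$ in $q^{-x}$; both follow from a short induction on $k$ using the obvious one-step recursion for $\Phi$. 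Once that is in hand, your steps~1 and~3 are routine, and the prefactor bookkeeping matches because $\Phi_0(x)$ is a pure product (only one path from $(0,0)$ to $(t,x)$ has the correct number of up-steps in each regime).
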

\begin{proof}
 Here and in the proof of the next lemma we should consider four cases depending on the value of $t$ (see formulas
 \eqref{QRacah_case1}-\eqref{QRacah_case4}). The proofs are similar in all
 four cases and we consider only the one
 that corresponds to the pictures in Lemma \ref{Lemma_Weight_left} and Lemma
 \ref{Lemma_weight_right} (i.e., $S<t<T-S$)

 Let us use Lemma \ref{Lemma_Weight_left} that expresses the required weight sum in terms of the
 point configuration complementary to $\{x_i\}$, i.e., in terms of the
 positions of the horizontal lozenges
 that we call holes. Denote the positions of holes by $\{y_i\}$. Performing the limit transition
 \eqref{eq_limit_transition} we get.

 \begin{multline*}
  L_t(x_1,\dots,x_N)=const \cdot \prod_{1\le i<j\le
S}(\mu_{t,S}(y_i)-\mu_{t,S}(y_j))\\ \times \prod_{1\le i\le S}
\frac{q^{(S-t) (S+N-1-y_i)}(q^{t-S+1};q)_{S+N-1-y_i}
(q^{-2N-S+t+1}/\kappa^2;q)_{S+N-1-y_i}}{(q;q)_{S+N-1-y_i}(q^{-2N+1}/\kappa^2;q)_{S+N-1-y_i}}\,.
 \end{multline*}

 To finish the proof we rewrite the last formula in terms of particles $\{x_i\}$ instead of holes
 $\{y_i\}$. For the second factor this procedure is simple, while for the first one we use the
 following observation:
 \begin{multline*}
  \prod_{1\le i<j\le
N}(\mu_{t,S}(y_i)-\mu_{t,S}(y_j))= \prod_{1\le i<j\le
N}(\mu_{t,S}(x_i)-\mu_{t,S}(x_j))\\ \times \prod_{0\le u<v\le
S+N-1}(\mu_{t,S}(u)-\mu_{t,S}(v))
\\ \times \prod_{1\le i\le N}\frac{1}{\prod\limits_{0\le u <x_i}
(\mu_{t,S}(x_i)-\mu_{t,S}(u))\prod\limits_{x_i<u\le S+N-1}
(\mu_{t,S}(u)-\mu_{t,S}(x_i))}\,.
 \end{multline*}
 The product over $u<v$ depends only on $t$, while the last two
products over $u$ are easily expressible in terms of $q$-Pochhammer
symbols.

\end{proof}

\begin{proposition}
\label{prop_R_t} We have
\begin{multline*}
R_t(x_1,\dots,x_N)=const_t \cdot \prod_{1\le i<j\le N}(\mu_{t,S}(x_i)-\mu_{t,S}(x_j))\\
\times
 \prod_{i=1}^N\frac{(1-\kappa ^2q^{2x_i-S-t+1})q^{x_i(T-t+N-1)}}{(q^{-1};q^{-1})_{S+N-1-x_i}\cdot(q;q)_{x_i+T-t-S}
 \cdot (\kappa^2q^{x_i-T+1};q)_{N+T-t}}\,.
\end{multline*}

\end{proposition}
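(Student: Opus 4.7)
The plan is to mimic the proof of Proposition \ref{prop_L_t} step by step, with the left-hand Lemma \ref{Lemma_Weight_left} replaced by the right-hand counterpart Lemma \ref{Lemma_weight_right}. As in the proof of Proposition \ref{prop_L_t}, there are four cases depending on the relative sizes of $t$, $S$ and $T-S$; the arguments in each case are parallel, so I would present only the generic range (e.g.\ $S<t<T-S$) in detail and claim the remaining cases by the same device.

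First I would apply Lemma \ref{Lemma_weight_right} to obtain a closed expression for the weight sum to the right of the section $t=t_0$, written in terms of the hole positions $\{y_i\}$ (the complement of $\{x_i\}$ in the ambient interval for that section). Since Lemma \ref{Lemma_weight_right} is stated in the full elliptic generality, the second step is to perform the degeneration \eqref{eq_limit_transition}: the elliptic theta functions collapse to sines, and then, after the additional scaling implicit in \eqref{qRacah_weight}, the prefactor becomes a product of $q$-Pochhammer symbols and a Vandermonde-type factor $\prod(\mu_{t,S}(y_i)-\mu_{t,S}(y_j))$. This is the exact analogue, on the right, of the formula obtained midway through the proof of Proposition \ref{prop_L_t}; the only changes relative to the left-hand version are that the role of the endpoints $\{0,\dots,N-1\}$ at $t=0$ is now played by the endpoints $\{S,\dots,S+N-1\}$ at $t=T$, which replaces $t$ by $T-t$ and shifts arguments in the relevant Pochhammer symbols accordingly. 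This explains the appearance of $q^{x_i(T-t+N-1)}$, of $(q;q)_{x_i+T-t-S}$, and of $(\kappa^2 q^{x_i-T+1};q)_{N+T-t}$ in the claimed formula.

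The third step is the passage from holes back to particles. This is carried out exactly as in Proposition \ref{prop_L_t}: one uses the identity
\begin{multline*}
 \prod_{1\le i<j\le S}(\mu_{t,S}(y_i)-\mu_{t,S}(y_j))
 = \prod_{1\le i<j\le N}(\mu_{t,S}(x_i)-\mu_{t,S}(x_j))\\
 \times \prod_{0\le u<v\le S+N-1}(\mu_{t,S}(u)-\mu_{t,S}(v)) \\
 \times \prod_{i=1}^N\frac{1}{\prod_{0\le u<x_i}(\mu_{t,S}(x_i)-\mu_{t,S}(u))\prod_{x_i<u\le S+N-1}(\mu_{t,S}(u)-\mu_{t,S}(x_i))},
\end{multline*}
and absorbs the middle product (which depends only on $t$) into $const_t$. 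The two remaining one-parameter products over $u$ collapse to $q$-Pochhammer symbols in $x_i$ via the factorization $\mu_{t,S}(x)-\mu_{t,S}(u)=q^{-x-u}(q^u-q^x)(1-\kappa^2 q^{x+u-S-t+1})$; combined with the hole-side Pochhammer symbols obtained in the second step and the routine identity $y_i\mapsto x_i$ for the remaining one-site factors, this yields the denominator $(q^{-1};q^{-1})_{S+N-1-x_i}(q;q)_{x_i+T-t-S}(\kappa^2 q^{x_i-T+1};q)_{N+T-t}$ and the numerator $(1-\kappa^2 q^{2x_i-S-t+1})q^{x_i(T-t+N-1)}$ displayed in the proposition.

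The step I expect to be the main bookkeeping obstacle is this last one: matching the shifted $q$-Pochhammer indices on the two sides requires careful tracking of how $q^{-1}$-Pochhammer symbols arise from hole coordinates running in the reverse direction, and verifying that the exponent of $q$ in the factor $q^{x_i(T-t+N-1)}$ emerges correctly once all the shift and reflection identities for $(a;q)_n$ are applied. Everything else is mechanical, and the non-generic cases of $t$ are handled by the analogous invocations of the remaining branches of Lemma \ref{Lemma_weight_right}.
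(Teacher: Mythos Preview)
Your proposal is correct and follows essentially the same approach as the paper: apply Lemma~\ref{Lemma_weight_right}, perform the limit transition~\eqref{eq_limit_transition} to obtain a hole-side expression, and then convert from holes to particles via the same Vandermonde-type identity used in Proposition~\ref{prop_L_t}. The paper's own proof is terser but identical in substance.
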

\begin{proof}
 Performing the limit transition \eqref{eq_limit_transition} in Lemma
 \ref{Lemma_weight_right} we get
 \begin{multline*}
  R_t(x_1,\dots,x_N)=const \cdot \prod_{1\le i<j\le
S}(\mu_{t,S}(y_i)-\mu_{t,S}(y_j))\\ \times \prod_{1\le i\le S} \frac{q^{(S-T+t)(S+N-1-y_i)}
(q^{-N-S+1};q)_{S+N-1-y_i}
(q^{T-S-N+1}/\kappa^2;q)_{S+N-1-y_i}}{(q^{-N+1-T+t};q)_{S+N-1-y_i}(q^{t-N+1}/\kappa^2;q)_{S+N-1-y_i}}
 \end{multline*}
 Again expressing all the factors in terms of $\{x_i\}$ we obtain the desired result.

\end{proof}

\begin{proposition}
\label{prop_C_t} We have
$$
 C_t(x_1,\dots,x_N)=const_t \cdot\prod_{i=1}^N \frac{q^{x_i}}{1-\kappa
 ^2q^{2x_i-S-t+1}}\,.
$$
\end{proposition}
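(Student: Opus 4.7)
The plan is to exploit the fact that $C_t$ is a product of weights over holes while the particles and holes together at any fixed time $t_0$ fill a set $R_{t_0}$ of admissible positions determined solely by the hexagon geometry and $t_0$, not by the particular tiling. Consequently,
$$
C_t(x_1,\dots,x_N)=\prod_{y\text{ hole}}w(y)=\frac{\prod_{z\in R_{t_0}}w(z)}{\prod_{i=1}^{N}w(x_i)},
$$
and the numerator is a $t$-dependent constant that can be absorbed into $const_t$. Thus the problem immediately reduces to computing $\prod_{i=1}^N 1/w(x_i)$.

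Next, I would use the coordinate correspondence $i=t$, $j=x-t/2+1$ stated at the beginning of Section \ref{Section_weight_sums}, together with the $q$-Racah weight \eqref{qRacah_weight}. Writing $m=x+(1-S-t)/2$ so that $j-(S+1)/2=m$, the weight attached to the point $x$ on the line $t=t_0$ equals
$$
w(x)=\kappa q^{m}-\kappa^{-1}q^{-m}=\frac{\kappa^{2}q^{2x-S-t+1}-1}{\kappa\, q^{m}}.
$$
Inverting and pulling out the $x$-independent factor $-\kappa q^{(1-S-t)/2}$ yields
$$
\frac{1}{w(x)}=\bigl(-\kappa q^{(1-S-t)/2}\bigr)\cdot\frac{q^{x}}{1-\kappa^{2}q^{2x-S-t+1}}.
$$
Multiplying over $i=1,\dots,N$ and absorbing the $N$-th power of the prefactor (which depends only on $t$) into $const_t$ gives the claimed formula.

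I do not anticipate a real obstacle: both steps are elementary, and unlike in the proofs of Propositions \ref{prop_L_t} and \ref{prop_R_t} no Lemma on weight sums, limit transition, or $q$-Pochhammer manipulation is needed. The only mild subtlety is verifying that the admissible range $R_{t_0}$ is tiling-independent, which is immediate since the vertical cross-section of the hexagon at any fixed $t_0$ consists of the same finite set of integer heights for every configuration in $\Omega(N,T,S)$.
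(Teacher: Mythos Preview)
Your argument is correct and follows exactly the same route as the paper: the paper writes $C_t$ as the product of $q$-Racah weights over the holes $\{y_i\}$ and then says ``expressing the product in terms of $\{x_i\}$ we get the result,'' which is precisely your complementarity step $\prod_{\text{holes}}w=\prod_{R_{t_0}}w\big/\prod_i w(x_i)$ followed by the straightforward inversion of $w(x)$. You have simply spelled out the computation that the paper leaves to the reader.
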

\begin{proof}
 Clearly,
 $$
  C_t(x_1,\dots,x_N)=\prod_{i=1}^{S} \left(\kappa q^{y_i-S/2-t/2+1/2}-\frac1{\kappa q^{y_i-S/2-t/2+1/2}}\right),
 $$
 where $\{y_i\}$ is the point configuration complementary to $\{x_i\}$. Expressing the
 product in terms of $\{x_i\}$ we get the result.
\end{proof}

\section{Distributions and transition probabilities}
\label{Section_distributions}

We consider our probability measure on the set of tilings of a given
hexagon as a Markov chain in the space of $N$-tuples of integers.
The Markov property can be easily seen in the following form: The
past and the future are independent given the present. In this way
the Markov property reduces to the fact that a lozenge tiling of the
hexagon is a union of a tiling to the left of a vertical line
$t=const$ and a tiling to the right of this vertical line. Denote
this Markov chain by $X(t)$, $t=0,1,\dots T$.

Set
$$
 \mathfrak X_{N,T}^{S,t}=\{x\in\mathbb Z: \max(0,t+S-T)  \le x\le \min(t+N-1,S+N-1)\}
$$
and
$$
 \mathcal X_{N,T}^{S,t}=\{ (x_1,x_2,\dots,x_N)\in (\mathfrak
 X_{N,T}^{S,t})^N: x_1<x_2<\dots<x_N\}.
$$
$\mathfrak X_{N,T}^{S,t}$ is the section of our hexagon by the
vertical line with coordinate $t$, and $\mathcal X_{N,T}^{S,t}$ is
the set of all $N$-tuples  in this section.

Clearly, $X(t)$ takes values in $\mathcal X_{N,T}^{S,t}$.

The following theorem gives the one-dimensional distributions of our
chain at the time $t$, which are probability distributions on
$N$-tuples of integers. Denote by $\rho_{S,t}$ the one-dimensional
distribution of the process $X(t)$ (below we explain why we keep $S$
in the notation but omit all the other parameters).

\begin{theorem}
\label{Theorem_One_dim_distribution}
$$
 {\rm Prob}\{X(t)=(x_1,x_2,\dots,x_N)\} = const\cdot \prod_{i<j}(\mu_{t,S}(x_i)-\mu_{t,S}(x_j))^2 \prod_{i=1}^N w_{t,S}(x_i),
$$
where
$$
\mu_{t,S}(x)=q^{-x}+\kappa ^2q^{x-S-t+1} $$ and
\begin{multline*}
w_{t,S}(x)=\frac{(-1)^{t+S}q^{x(2N+T-1)}(1- \kappa ^2q^{2x-t-S+1})}{(q;q)_x (q;q)_{T-S-t+x}
(q^{-1};q^{-1})_{t+N-x-1}(q^{-1};q^{-1})_{S+N-x-1}}\\
\times\frac1{ ( \kappa ^2q^{x-T+1};q)_{T+N-t}( \kappa
^2q^{x-t-S+1};q)_{N+t} }\,.
\end{multline*}
\end{theorem}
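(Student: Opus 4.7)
The plan is to combine the explicit formulas from Section \ref{Section_weight_sums} via the Markov decomposition. The key observation is that once we fix the section $X(t)=(x_1,\dots,x_N)$ at time $t$, the horizontal lozenges of any compatible tiling split into three disjoint groups: those strictly to the left of the vertical line $t=t_0$, those strictly to the right, and those lying on the line itself. The first two groups vary across tilings compatible with the given section and the third is determined by $(x_1,\dots,x_N)$. Summing the weight of a tiling $w(\mathcal T)=\prod w(\text{hole})$ over all tilings with prescribed middle configuration therefore factors as
\[
 \sum_{\mathcal T:\,X(t)=x} w(\mathcal T) \;=\; L_t(x_1,\dots,x_N)\cdot C_t(x_1,\dots,x_N)\cdot R_t(x_1,\dots,x_N),
\]
so that ${\rm Prob}\{X(t)=x\}$ is proportional to this product.

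Next I would substitute the three closed-form expressions from Propositions~\ref{prop_L_t}, \ref{prop_R_t} and \ref{prop_C_t} and collect terms. The two Vandermonde-type factors $\prod_{i<j}(\mu_{t,S}(x_i)-\mu_{t,S}(x_j))$ from $L_t$ and $R_t$ combine immediately into the square appearing in the statement. For the single-particle part, one checks the exponent bookkeeping in $q$: the powers $q^{x_i(t+N-1)}$ from $L_t$, $q^{x_i(T-t+N-1)}$ from $R_t$, and $q^{x_i}$ from $C_t$ combine to $q^{x_i(2N+T-1)}$, matching the formula. The factors $(1-\kappa^2q^{2x_i-S-t+1})$ appear once in each of $L_t$ and $R_t$ and once in the denominator of $C_t$, leaving a single such factor in the numerator, again matching the statement.

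It remains to check that the six $q$-Pochhammer denominators line up. From $L_t$ we pick up $(q^{-1};q^{-1})_{t+N-1-x_i}$, $(q;q)_{x_i}$ and $(\kappa^2q^{x_i-t-S+1};q)_{t+N}$, and from $R_t$ we pick up $(q^{-1};q^{-1})_{S+N-1-x_i}$, $(q;q)_{x_i+T-t-S}$ and $(\kappa^2q^{x_i-T+1};q)_{N+T-t}$; these are precisely the six factors in $w_{t,S}(x)$. The prefactor $(-1)^{t+S}$ is $x$-independent and could in principle be absorbed into $const_t$, but arises naturally when one tracks signs coming from rewriting the reverse Pochhammers $(q^{-1};q^{-1})_n$ in their standard form.

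The only nontrivial step is the algebraic bookkeeping in the previous paragraph; the conceptual input is already contained in the Markov factorization and in the three weight-sum propositions. No new identities are required, so I expect no real obstacle beyond the routine verification that the collected exponents of $q$, the $\kappa^2$-Pochhammers and the sign factor match the right-hand side of the stated formula.
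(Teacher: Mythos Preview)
Your proposal is correct and follows exactly the approach of the paper: the proof there consists of the single line that ${\rm Prob}\{X(t)=x\}\propto L_t\cdot C_t\cdot R_t$ together with an appeal to Propositions~\ref{prop_L_t}, \ref{prop_R_t}, \ref{prop_C_t}. Your additional paragraph carrying out the bookkeeping of $q$-powers, $(1-\kappa^2 q^{2x-S-t+1})$ factors, and Pochhammer symbols is accurate and simply makes explicit what the paper leaves to the reader.
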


\begin{proof}
Clearly,
$$
 {\rm Prob}\{X(t)=(x_1,x_2,\dots,x_N)\}\propto L(t;x_1,\dots,x_N)C(t;x_1,\dots,x_N)R(t;x_1,\dots,x_N)
$$
Propositions \ref{prop_L_t}, \ref{prop_R_t}, \ref{prop_C_t} imply
the result.
\end{proof}

Observe that $w_{t,S}(x)$ is (up to the factor not depending on $x$)
the weight function of the $q$-Racah orthogonal polynomials, see
e.~g. \cite[Section 3.2]{KS}. These polynomials are given by the
formula
\begin{equation}
\label{qRacah}
 R_n(\mu(x); \alpha, \beta, \gamma, \delta \mid
 q)=\,_4\phi_3\left(\begin{matrix}
 q^{-n},\alpha\beta q^{n+1},q^{-x},\gamma\delta q^{x+1}\\ \alpha q,\beta\delta q,\gamma q
 \end{matrix}\,\Bigl|\, q;q
 \right),
\end{equation}
where
$$
 \mu(x)=q^{-x}+\gamma\delta q^{x+1},
$$
and $\alpha q=q^{-M}$ or $\beta\delta q=q^{-M}$ or $\gamma q=q^{-M}$
for a nonnegative integer $M$. They are orthogonal on
$\{0,1,\dots,M\}$ with respect to the weight function
$$
 w(x)=\frac{(\alpha q, \beta\delta q, \gamma q, \gamma\delta q;q)_x}{(q,\alpha^{-1}\gamma\delta q,\beta^{-1}\gamma q,\delta
 q;q)_x}\frac{1-\gamma\delta q^{2x+1}}{(\alpha\beta q)^x (1-\gamma\delta q)}.
$$

The correspondence between the parameters of polynomials and
parameters of our model is established in the following way.

\begin{enumerate}
\item $t<S$,\quad $T-t-S>0$,\quad $0\le x \le t+N-1$,
 \begin{equation}
  \label{QRacah_case1}
   \begin{array}{lll}
      q_{(qRacah)} &= & q\\
      \alpha_{(qRacah)} &= & q^{-S-N}\\
      \beta_{(qRacah)} &= & q^{S-T-N}\\
      \gamma_{(qRacah)} & = & q^{-t-N}\\
      \delta_{(qRacah)}&=&   \kappa ^2 q^{-S+N}
   \end{array}
 \end{equation}

\item $S-1<t<T-S+1$,\quad $0\le x\le S+N-1$,
 \begin{equation}
  \label{QRacah_case2}
   \begin{array}{lll}
      q_{(qRacah)} &= & q\\
      \alpha_{(qRacah)} &= & q^{-t-N}\\
      \beta_{(qRacah)} &= & q^{t-T-N}\\
      \gamma_{(qRacah)} & = & q^{-S-N}\\
      \delta_{(qRacah)}&=&   \kappa ^2 q^{-t+N}
   \end{array}
 \end{equation}

\item $T-S-1<t<S$,\quad $0\le x-(t+S-T) \le T-S+N-1$,
 \begin{equation}
  \label{QRacah_case3}
   \begin{array}{lll}
      q_{(qRacah)} &= & q\\
      \alpha_{(qRacah)} &= & q^{-T+t-N}\\
      \beta_{(qRacah)} &= & q^{-t-N}\\
      \gamma_{(qRacah)} & = & q^{-T-N+S}\\
      \delta_{(qRacah)}&=&   \kappa ^2 q^{-T+t+N}\\
      x_{(qRacah)}&=&T-t-S+x
   \end{array}
 \end{equation}

 \item $t>T-S-1$,\quad $t>S-1$,\quad $0\le x-(t+S-T)\le T-t+N-1$,
 \begin{equation}
  \label{QRacah_case4}
   \begin{array}{lll}
      q_{(qRacah)} &= & q\\
      \alpha_{(qRacah)} &= & q^{-T-N+S}\\
      \beta_{(qRacah)} &= & q^{-S-N}\\
      \gamma_{(qRacah)} & = & q^{-T+t-N}\\
      \delta_{(qRacah)}&=&   \kappa ^2 q^{-T+N+S}\\
      x_{(qRacah)}&=&T-t-S+x
   \end{array}
 \end{equation}

\end{enumerate}

Let us also describe what happens if one performs the limit transitions
described in Section \ref{Section_pr_models}.

If we send $\kappa \to 0$ (the weight of a plane partition becomes
proportional to $q^{-volume})$, then we obtain the weight function
$$
w(x)=\frac{q^{x(2N+T-1)}}{(q;q)_x (q;q)_{T-S-t+x}
(q^{-1};q^{-1})_{t+N-x-1}(q^{-1};q^{-1})_{S+N-x-1}}
$$
This is exactly (up to the factor not depending on $x$) the weight
function of the $q$-Hahn polynomials.

If we set $ \kappa =q^{K}$ and send $q\to 1$, then the weight
function becomes
\begin{multline*}
 w(x)=\frac 1{x! (T-S-t+x)! (t+N-x-1)!
 (S+N-x-1)!}\\ \times \frac{K+2x-t-S+1}{(K+x-T+1)_{T+N-t} (K+x-t-S+1)_{N+t}}.
\end{multline*}
This is the weight function of the Racah orthogonal polynomials.

Finally, if we send $ \kappa \to 0$ and set $q=1$ (this case
corresponds to the uniform measure on the set of lozenge tilings of
the hexagon) then we get the weight function
$$
w(x)=\frac{1}{x! (T-S-t+x)! (t+N-x-1)! (S+N-x-1)!}.
$$
This is the weight function of the Hahn polynomials. This case was
previously studied in \cite{J_nonintersecting}, \cite{J_Hahn},
\cite{Gor}, see also references therein.

\bigskip

We also need the transition probabilities of the Markov chain $X(t)$.

\begin{proposition}
\label{Proposition_tr_prob}
\begin{multline*}
 {\rm
 Prob}\{X(t+1)=Y|X(t)=X\}\\=const\cdot\prod_{i<j}\frac{\mu_{t+1,S}(y_i)-\mu_{t+1,S}(y_j)}
 {\mu_{t,S}(x_i)-\mu_{t,S}(x_j)}\prod_{y_i=x_i+1}
 w_1(x_i)\prod_{y_i=x_i}
 w_0(x_i),
\end{multline*}
where
$$
 w_0(x)=-(1-q^{x+T-t-S})\frac{1- \kappa ^2q^{x+N-t}} {1- \kappa ^2
 q^{2x-t-S+1}}
$$
and
$$
 w_1(x)=q^{T+N-1-t}(1-q^{x-S-N+1}) \frac{1- \kappa ^2 q^{x-T+1}} {1- \kappa ^2
 q^{2x-t-S+1}}.
$$
\end{proposition}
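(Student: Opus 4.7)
The plan is to reduce the transition probability to a ratio of the weight sums from Propositions \ref{prop_L_t}--\ref{prop_C_t} and simplify the resulting expression factor by factor.

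First, I decompose a tiling into: its pieces strictly to the left of column $t$, its horizontal lozenges at column $t$, its horizontal lozenges at column $t+1$, and its pieces strictly to the right of column $t+1$. Since every horizontal lozenge lies in exactly one vertical column and none sit strictly between columns $t$ and $t+1$, this factorization gives
\[
\mathrm{Prob}(X(t){=}X,\,X(t{+}1){=}Y)\propto L(t;X)\,C(t;X)\,C(t{+}1;Y)\,R(t{+}1;Y)\,\mathbf{1}[Y\succ X],
\]
where $Y\succ X$ abbreviates $y_i\in\{x_i,x_i+1\}$ for each $i$ (the only forward-step configurations compatible with non-intersecting paths). Combined with $\mathrm{Prob}(X(t){=}X)\propto L(t;X)\,C(t;X)\,R(t;X)$ from the proof of Theorem \ref{Theorem_One_dim_distribution}, the factors $L(t;X)$ and $C(t;X)$ cancel and I obtain
\[
\mathrm{Prob}(X(t{+}1){=}Y\mid X(t){=}X)=\mathrm{const}\cdot\frac{C(t{+}1;Y)\,R(t{+}1;Y)}{R(t;X)}\,\mathbf{1}[Y\succ X].
\]

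Next I would plug in the explicit formulas of Propositions \ref{prop_C_t} and \ref{prop_R_t}. The ratio of Vandermonde-type products in $R_{t+1}(Y)/R_t(X)$ is exactly the product of $\mu$-differences appearing in Proposition \ref{Proposition_tr_prob}. For the single-particle factors, the first simplification is that $(1-\kappa^2 q^{2y_i-S-t})$ in the numerator of $R_{t+1}(Y)$ cancels the same factor in the denominator of $C(t+1;Y)$, leaving ratios of $q$-Pochhammer symbols and powers of $q$ to be handled separately for $y_i=x_i$ and $y_i=x_i+1$. In both cases these ratios telescope via the identity $(a;q)_n=(1-a)(aq;q)_{n-1}$ (and its $q\to q^{-1}$ analogue). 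When $y_i=x_i$, only the Pochhammers $(q;q)_{x_i+T-t-S}$ and $(\kappa^2 q^{x_i-T+1};q)_{N+T-t}$ shift by one unit, producing the binomial factors $(1-q^{x_i+T-t-S})$ and $(1-\kappa^2 q^{x_i+N-t})$; together with the residual $(1-\kappa^2 q^{2x_i-S-t+1})^{-1}$ this reconstructs $w_0(x_i)$. When $y_i=x_i+1$, the shifts are in $(q^{-1};q^{-1})_{S+N-1-x_i}$ and $(\kappa^2 q^{x_i-T+1};q)_{N+T-t}$, yielding $(1-q^{x_i-S-N+1})$ and $(1-\kappa^2 q^{x_i-T+1})$, while the residual powers of $q$ collect into $q^{T+N-1-t}$, reconstructing $w_1(x_i)$.

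The main obstacle is purely bookkeeping: Proposition \ref{prop_R_t} involves several $q$-Pochhammer symbols whose indices and arguments shift differently under $t\to t+1$ and $x_i\to y_i$, and one must track signs carefully --- in particular $(q^{-1};q^{-1})_n$ carries a factor $(-1)^n$ when $q<1$ --- in order to check that all $X$- and $Y$-independent prefactors really collapse into the single stated constant. There is no conceptual obstacle beyond the Markov decomposition of Step~1.
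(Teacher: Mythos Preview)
Your proposal is correct and follows exactly the same route as the paper: write the conditional probability as $C_{t+1}(Y)R_{t+1}(Y)/R_t(X)$ via the Markov factorization, then substitute Propositions~\ref{prop_R_t} and~\ref{prop_C_t} and simplify. The paper's own proof is just the two-line version of what you wrote; your additional bookkeeping of the telescoping $q$-Pochhammer ratios is accurate and simply fills in details the paper omits.
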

\begin{proof} We use
\begin{multline*}
 {\rm
 Prob}\{X(t+1)=Y|X(t)=X\}=\frac{L_t(X)C_t(X)C_{t+1}(Y)R_{t+1}(Y)}{L_t(X)C_t(X)R_t(X)}
 \\=\frac{R_{t+1}(Y)C_{t+1}(Y)}{R_{t}(X)}
\end{multline*}
and Propositions \ref{prop_C_t}, \ref{prop_R_t}.
\end{proof}

Next let us compute the \emph{cotransition} probabilities ($t\to
t-1$).

\begin{proposition}
\label{Proposition_co-tr}
\begin{multline*}
 {\rm
 Prob}\{X(t-1)=Y|X(t)=X\}=\\ const\cdot\prod_{i<j}\frac{\mu_{t-1,S}(y_i)-\mu_{t-1,S}(y_j)}
 {\mu_{t,S}(x_i)-\mu_{t,S}(x_j)}
 \tilde w_1(x_i)\prod_{y_i=x_i}
 \tilde w_0(x_i),
\end{multline*}
where
$$
 \tilde w_0(x)=-(1-q^{x-t-N+1})\frac {1- \kappa ^2q^{x-S-t+1}}{1- \kappa ^2
 q^{2x-t-S+1}}
$$
and
$$
 \tilde w_1(x)=q^{-(t+N-1)}(1-q^x)\frac {1- \kappa ^2q^{x+N-S}}{1- \kappa ^2
 q^{2x-t-S+1}}
$$
\end{proposition}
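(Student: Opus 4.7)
The plan is to mirror the proof of Proposition \ref{Proposition_tr_prob}, only cutting the tiling along the vertical lines $t-1$ and $t$ instead of $t$ and $t+1$. By the Markov decomposition of a tiling under vertical slicing, a tiling that passes through configuration $Y$ at time $t-1$ and $X$ at time $t$ has total weight $L_{t-1}(Y)\,C_{t-1}(Y)\,C_t(X)\,R_t(X)$: the column weights at both times $t-1$ and $t$ are absorbed into $C_{t-1}(Y)$ and $C_t(X)$, and the one-step compatibility between $Y$ and $X$ reduces to the indicator $y_i\in\{x_i,x_i-1\}$ with no additional weight. Summing over admissible $Y$ gives the full weight $L_t(X)\,C_t(X)\,R_t(X)$ of tilings through $X$ at time $t$, hence
\[
\mathrm{Prob}\{X(t-1)=Y\mid X(t)=X\}=\frac{L_{t-1}(Y)\,C_{t-1}(Y)}{L_t(X)}.
\]

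Next, I would substitute the explicit formulas from Propositions \ref{prop_L_t} and \ref{prop_C_t} and simplify. The Vandermonde-type factors combine to produce the ratio $\prod_{i<j}(\mu_{t-1,S}(y_i)-\mu_{t-1,S}(y_j))/(\mu_{t,S}(x_i)-\mu_{t,S}(x_j))$ that appears in the proposition, while the remaining one-particle pieces of $L_{t-1}(Y)/L_t(X)$ involve $q$-Pochhammer symbols whose parameters shift by one in $t$ and, for those indices with $y_i=x_i-1$, also by one in the particle coordinate. One handles this site by site, writing the residual ratio (including the contribution from $C_{t-1}(Y)$) as $\prod_i F(x_i,y_i)$ with $F$ determined from the explicit formulas, and then evaluating $F(x,x)$ and $F(x,x-1)$ separately. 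Using the elementary identities $(a;q)_{n+1}=(1-aq^n)(a;q)_n$ and $(a;q^{-1})_{n+1}=(1-aq^{-n})(a;q^{-1})_n$ to peel off the single boundary factor that differs between numerator and denominator, one should find, up to an $X,Y$-independent constant,
\[
F(x,x)=\tilde w_0(x),\qquad F(x,x-1)=\tilde w_1(x),
\]
with $\tilde w_0,\tilde w_1$ as stated; the $q^x$ from $C_{t-1}(Y)$ and the $1-\kappa^2q^{2y_i-S-t+2}$ in the denominator of $C_{t-1}(Y)$ conveniently cancel the factor $1-\kappa^2q^{2x_i-S-t+1}$ from $L_t(X)$ and reproduce the common denominator $1-\kappa^2q^{2x_i-S-t+1}$ in both $\tilde w_0$ and $\tilde w_1$.

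The main obstacle is purely computational bookkeeping. Proposition \ref{prop_L_t} involves $q$-Pochhammer symbols of three different flavors — $(q^{-1};q^{-1})_{t+N-1-x_i}$, $(q;q)_{x_i}$, and $(\kappa^2q^{x_i-t-S+1};q)_{t+N}$ — each of which is affected differently by the independent shifts $t\mapsto t-1$ and $x_i\mapsto x_i-1$. One must carefully track signs arising from $(q^{-1};q^{-1})$ factors, the overall powers of $q$ such as $q^{x_i(t+N-1)}$, and a number of purely $(t,N,S,T,\kappa)$-dependent prefactors that must be absorbed into the normalizing constant $const$ at the end. This organizational step aside, the verification that the resulting single-site weights coincide with the explicit $\tilde w_0,\tilde w_1$ given in the statement is then a direct algebraic check.
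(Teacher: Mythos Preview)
Your approach is exactly the paper's: it too writes $\mathrm{Prob}\{X(t-1)=Y\mid X(t)=X\}=\dfrac{L_{t-1}(Y)C_{t-1}(Y)}{L_t(X)}$ and then invokes Propositions~\ref{prop_L_t} and~\ref{prop_C_t}, with the remaining simplification left as a direct computation. Your additional commentary on the bookkeeping is just an expansion of that computation (note, though, that the $(1-\kappa^2 q^{2y_i-S-t+2})$ factor in the denominator of $C_{t-1}(Y)$ actually cancels against the matching numerator factor in $L_{t-1}(Y)$, not against the one in $L_t(X)$; the latter then supplies the common denominator $1-\kappa^2 q^{2x_i-S-t+1}$).
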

\begin{proof} We use
\begin{multline*}
 {\rm
 Prob}\{X(t-1)=Y|X(t)=X\}=\frac{L_{t-1}(Y)C_{t-1}(Y)C_{t}(X)R_{t}(X)}{L_t(X)C_t(X)R_t(X)}\\
 =\frac{L_{t-1}(Y)C_{t-1}(Y)}{L_{t}(X)}
\end{multline*}
and Propositions \ref{prop_L_t}, \ref{prop_C_t}.
\end{proof}

\section{Families of stochastic matrices}

This section and the next one are similar to \cite{BG}, where the
Hahn case was treated, and we have tried to keep the notations and
statements of theorems unchanged where possible.

\subsection{Definition of matrices}

We want to introduce four families of stochastic matrices
$P^{S,t}_{t+}$, $P^{S,t}_{t-}$, $P^{S,t}_{S+}$, $P^{S,t}_{S-}$.

$P^{S,t}_{t+}(X,Y)$ is an $|\mathcal X^{S,t}|\times|\mathcal
X^{S,t+1}|$ matrix, $X=(x_1<\dots<x_N)\in\mathcal X^{S,t}$,
$Y=(y_1<\dots<y_N)\in\mathcal X^{S,t+1}$;

if  $y_i-x_i\in\{0,1\}$ for every $i$, then
\begin{multline*}
P^{S,t}_{t+}(X,Y) =
const\cdot\prod_{i<j}\frac{\mu_{t+1,S}(y_i)-\mu_{t+1,S}(y_j)}
 {\mu_{t,S}(x_i)-\mu_{t,S}(x_j)}\prod_{y_i=x_i+1}
 w_1(x_i)\prod_{y_i=x_i}
 w_0(x_i),
\end{multline*}
 where
$$
 w_0(x)=-(1-q^{x+T-t-S})\frac{1- \kappa ^2q^{x+N-t}} {1- \kappa ^2
 q^{2x-t-S+1}}\,,
$$$$
 w_1(x)=q^{T+N-1-t}(1-q^{x-S-N+1}) \frac{1- \kappa ^2 q^{x-T+1}} {1- \kappa ^2
 q^{2x-t-S+1}},
$$
 and $P^{S,t}_{t+}(X,Y)=0$
otherwise.

$P^{S,t}_{S+}(X,Y)$ is an $|\mathcal X^{S,t}|\times|\mathcal
X^{S+1,t}|$ matrix, $X=(x_1<\dots<x_N)\in\mathcal X^{S,t}$,
$Y=(y_1<\dots<y_n)\in\mathcal X^{S+1,t}$;

If  $y_i-x_i\in\{0,1\}$ for every $i$, then
\begin{multline*}
P^{S,t}_{S+}(X,Y)=
const\cdot\prod_{i<j}\frac{\mu_{t,S+1}(y_i)-\mu_{t,S+1}(y_j)}
 {\mu_{t,S}(x_i)-\mu_{t,S}(x_j)}\prod_{y_i=x_i+1}
 w_1(x_i)\prod_{y_i=x_i}
 w_0(x_i),
\end{multline*}
where
$$
 w_0(x)=-(1-q^{x+T-t-S})\frac{1- \kappa ^2q^{x+N-S}} {1- \kappa ^2
 q^{2x-t-S+1}}\,,
$$
$$
 w_1(x)=q^{T+N-1-S}(1-q^{x-t-N+1}) \frac{1- \kappa ^2 q^{x-T+1}} {1-\kappa ^2
 q^{2x-t-S+1}},
$$
 and $P^{S,t}_{S+}(X,Y)=0$
otherwise.

$P^{S,t}_{t-}(X,Y)$ is an $|\mathcal X^{S,t}|\times|\mathcal
X^{S,t-1}|$ matrix, $X=(x_1<\dots<x_N)\in\mathcal X^{S,t}$,
$Y=(y_1<\dots<y_n)\in\mathcal X^{S,t-1}$;

If  $y_i-x_i\in\{-1,0\}$ for every $i$, then
\begin{equation*}
P^{S,t}_{t-}(X,Y)=const\cdot
\prod_{i<j}\frac{\mu_{t-1,S}(y_i)-\mu_{t-1,S}(y_j)}
 {\mu_{t,S}(x_i)-\mu_{t,S}(x_j)}\prod_{y_i=x_i-1}
 \tilde w_1(x_i)\prod_{y_i=x_i}
 \tilde w_0(x_i),
\end{equation*}
where
$$
 \tilde w_0(x)=-(1-q^{x-t-N+1})\frac {1- \kappa ^2q^{x-S-t+1}}{1- \kappa ^2
 q^{2x-t-S+1}}\,,
$$
$$
 \tilde w_1(x)=q^{-(t+N-1)}(1-q^x)\frac {1- \kappa ^2q^{x+N-S}}{1- \kappa ^2
 q^{2x-t-S+1}},
$$
 and $P^{S,t}_{t-}(X,Y)=0$
otherwise.

$P^{S,t}_{S-}(X,Y)$ is an $|\mathcal X^{S,t}|\times|\mathcal
X^{S-1,t}|$ matrix, $X=(x_1<\dots<x_N)\in\mathcal X^{S,t}$,
$Y=(y_1<\dots<y_n)\in\mathcal X^{S-1,t}$;

If  $y_i-x_i\in\{-1,0\}$ for every $i$, then
\begin{equation*}
P^{S,t}_{S-}(X,Y)=const\cdot\prod_{i<j}\frac{\mu_{t,S-1}(y_i)-\mu_{t,S-1}(y_j)}
 {\mu_{t,S}(x_i)-\mu_{t,S}(x_j)}\prod_{y_i=x_i-1}
 \tilde w_1(x_i)\prod_{y_i=x_i}
 \tilde w_0(x_i),
\end{equation*}
where
$$
 \tilde w_0(x)=-(1-q^{x-S-N+1})\frac {1- \kappa ^2q^{x-S-t+1}}{1- \kappa ^2
 q^{2x-t-S+1}}\,,
$$
$$
 \tilde w_1(x)=q^{-(S+N-1)}(1-q^x)\frac {1- \kappa ^2q^{x+N-t}}{1- \kappa ^2
 q^{2x-t-S+1}},
$$
 and $P^{S,t}_{S-}(X,Y)=0$
otherwise.

\medskip

Looking at the sets that parameterize rows and columns of these
matrices one can say that $P^{S,t}_{t+}$ increases $t$,
$P^{S,t}_{t-}$ decreases $t$, while $P^{S,t}_{S+}$ increases $S$ and
$P^{S,t}_{S-}$ decreases $S$. This explains our notation.

\begin{theorem}
\label{Th_4stochmatr} With appropriate choices of normalizing
constants, all four types of matrices defined above are stochastic.
They preserve the family of measures $\rho_{S,t}$. In other words
\begin{equation}
\label{Simple_stM}
 \sum\limits_{Y\in\mathcal X^{S,t\pm 1}} P^{S,t}_{t\pm}(X,Y)=1,\quad
 \sum\limits_{Y\in\mathcal X^{S,S\pm 1}} P^{S,t}_{t\pm}(X,Y)=1,
\end{equation}
\begin{gather*}
 \rho_{S,t\pm 1}(Y)=\sum\limits_{X\in\mathcal X^{S,t}}
 P^{S,t}_{t\pm}(X,Y)\cdot\rho_{S,t}(X),\\
 \rho_{S\pm 1,t}(Y)=\sum\limits_{X\in\mathcal X^{S,t}}
 P^{S,t}_{S\pm}(X,Y)\cdot\rho_{S,t}(X).
\end{gather*}

\end{theorem}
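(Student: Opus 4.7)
The four families split naturally into two pairs.

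For $P^{S,t}_{t\pm}$, a direct inspection shows that these matrices coincide, up to the choice of normalizing constant, with the one-step transition and cotransition kernels of the Markov chain $X(t)$ already constructed in Propositions \ref{Proposition_tr_prob} and \ref{Proposition_co-tr}. Stochasticity is then automatic, and the preservation identity $\rho_{S,t\pm1}=\rho_{S,t}P^{S,t}_{t\pm}$ is equivalent to the statement of Theorem \ref{Theorem_One_dim_distribution} that $\rho_{S,t}$ is the time-$t$ distribution of $X(t)$. So this pair of claims requires only bookkeeping.

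For $P^{S,t}_{S\pm}$, no underlying Markov chain is immediately available and a direct verification is needed. The plan is to replicate the derivation that worked for the $t$-shift, but with a \emph{horizontal} rather than vertical decomposition of the tiling. Concretely, I would establish a second factorization
\[\rho_{S,t}(X)\propto\widehat L(X)\,\widehat C(X)\,\widehat R(X),\]
where $\widehat L$, $\widehat C$, $\widehat R$ are weight sums on either side of a horizontal line of constant $j$ through the hexagon, with $\widehat C$ the product of the weights on that line. Either one re-derives these sums by telescoping as in Propositions \ref{prop_L_t}--\ref{prop_C_t}, or one observes that the weights $w_0,w_1$ in the definition of $P^{S,t}_{S+}$ are obtained from the corresponding weights in $P^{S,t}_{t+}$ by the formal substitution $t\leftrightarrow S$ (and similarly for the ``$-$'' versions), so that analogues of Propositions \ref{Proposition_tr_prob} and \ref{Proposition_co-tr} hold with $\widehat L,\widehat C,\widehat R$ in place of $L_t,C_t,R_t$. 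From an identity of the form $P^{S,t}_{S+}(X,Y)\propto \widehat R^{S+1}(Y)\widehat C^{S+1}(Y)/\widehat R^{S}(X)$ one then reads off stochasticity and measure preservation exactly as in the vertical case.

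The main obstacle is carrying out these horizontal analogues of Propositions \ref{prop_L_t}--\ref{prop_R_t} rigorously. Under the shift $S\to S+1$ (as opposed to $t\to t+1$) the Vandermonde-like product $\prod_{i<j}(\mu_{t,S}(x_i)-\mu_{t,S}(x_j))$ changes by a nontrivial rescaling in addition to the index shift, and the q-Pochhammer factors in $w_{t,S}$ reshuffle in a less transparent way. Verifying that the resulting ratios collapse to the prescribed weights in the definitions of $P^{S,t}_{S\pm}$ amounts to an identity in q-shifted factorials of q-Racah type, analogous to the Hahn-case computation in \cite{BG}; its proof should ultimately rest on the three-term recurrence and Pearson-type difference equation satisfied by the q-Racah weight $w_{t,S}$.
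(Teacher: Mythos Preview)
Your treatment of $P^{S,t}_{t\pm}$ is correct and matches the paper's: these are literally the transition and cotransition kernels of Propositions \ref{Proposition_tr_prob} and \ref{Proposition_co-tr}, so stochasticity and the preservation of $\rho_{S,t}$ follow immediately.

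For $P^{S,t}_{S\pm}$ your plan would work in principle, but you are missing a one-line shortcut that the paper exploits. You already noticed that the weights $w_0,w_1$ in the definition of $P^{S,t}_{S+}$ are obtained from those of $P^{S,t}_{t+}$ by the formal swap $t\leftrightarrow S$ (and likewise for the ``$-$'' versions). What you did not use is that the \emph{measures themselves} enjoy the same symmetry: inspecting the formula in Theorem \ref{Theorem_One_dim_distribution}, one sees that $w_{t,S}(x)$ and $\mu_{t,S}(x)$ are symmetric in $t$ and $S$, so $\rho_{S,t}=\rho_{t,S}$; and trivially $\mathcal X^{S,t}=\mathcal X^{t,S}$. (This symmetry is the reason the paper built the factor $q^{-S/2}$ into $\kappa$.) Thus the identities for $P^{S,t}_{S\pm}$ are obtained from the already-proved identities for $P^{S,t}_{t\pm}$ simply by relabelling $t\leftrightarrow S$---no new weight-sum computation, no horizontal decomposition, no $q$-Racah recurrence needed.

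Your proposed ``horizontal'' approach is essentially this symmetry spelled out the hard way: the analogues $\widehat L,\widehat C,\widehat R$ you want are precisely $L,C,R$ with $t$ and $S$ interchanged, and the ``main obstacle'' you flag (collapsing the $q$-Pochhammer ratios) disappears once you recognise that the target formula is literally the $t\leftrightarrow S$ image of one you already have. So your route is not wrong, just unnecessarily laborious; the paper's argument is a two-sentence symmetry observation.
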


\begin{proof}
Propositions \ref{Proposition_tr_prob} and \ref{Proposition_co-tr}
imply the claim for $P^{S,t}_{t+}(X,Y)$ and $P^{S,t}_{t-}(X,Y)$.

Now observe that the space $\mathcal X^{S,t}$ is unaffected when we
interchange parameters $t$ and $S$, i.e.,
$$
 \mathcal X^{S,t}=\mathcal X^{t,S}.
$$

Moreover, the measures $\rho_{S,t}$ are also invariant under
$S\leftrightarrow t$, i.e.,
$$
\rho_{S,t}=\rho_{t,S}.
$$
(This is a consequence of our special choice of the parameter
$\kappa $ which included additional factor $q^{-S/2}$.)

Finally, note that $P^{S,t}_{t+}(X,Y)$ becomes $P^{S,t}_{S+}(X,Y)$
under $S\leftrightarrow t$ and $P^{S,t}_{t-}(X,Y)$ becomes
$P^{S,t}_{S-}(X,Y)$.

Therefore, applying $S\leftrightarrow t$ to the relations for
$P_{t\pm}$ we obtain the needed relations for $P_{S\pm}$.
\end{proof}

\subsection{Determinantal representation}
In this section we write our stochastic matrices in a determinantal
form. This representation is very convenient for various
computations.

First, we introduce $4$ new two-diagonal matrices.

For $x\in\mathfrak X^{S,t}$,  $y\in\mathfrak X^{S,t+1}$,
$$
U^{S,t}_{t+}(x,y)=\begin{cases}
              -q^{T+N-1-t}(1-q^{x-S-N+1}) \frac{1- \kappa ^2 q^{x-T+1}} {1- \kappa ^2
 q^{2x-t-S+1}},&\text{if }y=x+1,\\
              (1-q^{x+T-t-S})\frac{1- \kappa ^2q^{x+N-t}} {1- \kappa ^2
 q^{2x-t-S+1}},&\text{if }y=x,\\
              0,&\text{otherwise;}
             \end{cases}
$$
for $x\in\mathfrak X^{S,t}$, $y\in\mathfrak
             X^{S+1,t}$,
$$
U^{S,t}_{S+}(x,y)=\begin{cases}
              -q^{T+N-1-S}(1-q^{x-t-N+1}) \frac{1- \kappa ^2 q^{x-T+1}} {1-\kappa ^2
 q^{2x-t-S+1}},&\text{if }y=x+1,\\
              (1-q^{x+T-t-S})\frac{1-\kappa ^2q^{x+N-S}} {1-\kappa ^2
 q^{2x-t-S+1}},&\text{if }y=x,\\
              0,&\text{otherwise;}
             \end{cases}
$$
for $x\in\mathfrak X^{S,t}$, $y\in\mathfrak
             X^{S,t-1}$,
$$
U^{S,t}_{t-}(x,y)=\begin{cases}
              q^{-(t+N-1)}(1-q^x)\frac {1-\kappa ^2q^{x+N-S}}{1-\kappa ^2
 q^{2x-t-S+1}},&\text{if }y=x-1,\\
              -(1-q^{x-t-N+1})\frac {1-\kappa ^2q^{x-S-t+1}}{1-\kappa ^2
 q^{2x-t-S+1}},&\text{if }y=x,\\
              0,&\text{otherwise;}
             \end{cases}
$$
and for  $x\in\mathfrak X^{S,t}$, $y\in\mathfrak X^{S-1,t}$,
$$
U^{S,t}_{S-}(x,y)=\begin{cases}
              q^{-(S+N-1)}(1-q^x)\frac {1-\kappa ^2q^{x+N-t}}{1-\kappa ^2
 q^{2x-t-S+1}},&\text{if }y=x-1,\\
              -(1-q^{x-S-N+1})\frac {1-\kappa ^2q^{x-S-t+1}}{1-\kappa ^2
 q^{2x-t-S+1}},&\text{if }y=x,\\
              0,&\text{otherwise.}
             \end{cases}
$$

It is possible to express the stochastic matrices $P^{S,t}_{t\pm}$,
$P^{S,t}_{S\pm}$ as certain minors of the matrices defined
above.

\begin{proposition} We have
\label{Proposition_det_form_transitional_probabilities}
$$P^{S,t}_{t+}(X,Y)=const\cdot\prod_{i<j}\frac{\mu_{t+1,S}(y_i)-\mu_{t+1,S}(y_j)}
 {\mu_{t,S}(x_i)-\mu_{t,S}(x_j)}
   \det[U^{S,t}_{t+}(x_i,y_j)]_{i,j=1,\dots,N}
$$

$$P^{S,t}_{S+}(X,Y)=const\cdot\prod_{i<j}\frac{\mu_{t,S+1}(y_i)-\mu_{t,S+1}(y_j)}
 {\mu_{t,S}(x_i)-\mu_{t,S}(x_j)} \det[U^{S,t}_{S+}(x_i,y_j)]_{i,j=1,\dots,N}
 $$

$$P^{S,t}_{t-}(X,Y)=const\cdot\prod_{i<j}\frac{\mu_{t-1,S}(y_i)-\mu_{t-1,S}(y_j)}
 {\mu_{t,S}(x_i)-\mu_{t,S}(x_j)}
   \det[U^{S,t}_{t-}(x_i,y_j)]_{i,j=1,\dots,N}
 $$

$$P^{S,t}_{S-}(X,Y)= const\cdot\prod_{i<j}\frac{\mu_{t,S-1}(y_i)-\mu_{t,S-1}(y_j)}
 {\mu_{t,S}(x_i)-\mu_{t,S}(x_j)}
   \det[U^{S,t}_{S-}(x_i,y_j)]_{i,j=1,\dots,N}
 $$
\end{proposition}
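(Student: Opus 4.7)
The plan is to observe that, for each of the four bidiagonal matrices $U^{S,t}_{\bullet}$, the $N\times N$ matrix $[U^{S,t}_{\bullet}(x_i,y_j)]$ has exactly one nonzero term in its permutation expansion, so that the determinant collapses to the product of the diagonal entries. Those diagonal entries are, up to a uniform sign $(-1)^N$, exactly the products of $w_0,w_1$ (or $\tilde w_0,\tilde w_1$) that enter the defining formulas of $P^{S,t}_{t\pm}$ and $P^{S,t}_{S\pm}$. Since the Vandermonde-type prefactor $\prod_{i<j}(\mu(y_i)-\mu(y_j))/(\mu(x_i)-\mu(x_j))$ is literally identical on both sides of each claimed identity, the proposition reduces to this combinatorial collapse of the determinant, with the sign $(-1)^N$ absorbed into the overall constant.

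For the first identity (the case of $P^{S,t}_{t+}$), I would begin by writing
\[
\det[U^{S,t}_{t+}(x_i,y_j)]_{i,j=1}^N = \sum_{\sigma\in S_N}\operatorname{sgn}(\sigma)\prod_{i=1}^N U^{S,t}_{t+}(x_i,y_{\sigma(i)}).
\]
By definition $U^{S,t}_{t+}(x,y)=0$ unless $y\in\{x,x+1\}$, so a term survives only when $y_{\sigma(i)}\in\{x_i,x_i+1\}$ for every $i$. The key step is the following pigeonhole argument: if $\sigma\neq\mathrm{id}$, pick $i<j$ with $\sigma(i)>\sigma(j)$. Since the $y_k$ are strictly increasing, $y_{\sigma(i)}>y_{\sigma(j)}$; but $y_{\sigma(i)}\le x_i+1\le x_j\le y_{\sigma(j)}$ (using $x_j\ge x_i+1$ as consecutive values of an increasing integer sequence), contradicting $y_{\sigma(i)}>y_{\sigma(j)}$. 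Hence only $\sigma=\mathrm{id}$ contributes, and $\det[U^{S,t}_{t+}(x_i,y_j)]=\prod_i U^{S,t}_{t+}(x_i,y_i)$. A direct comparison with the definitions yields $U^{S,t}_{t+}(x_i,x_i)=-w_0(x_i)$ and $U^{S,t}_{t+}(x_i,x_i+1)=-w_1(x_i)$, so
\[
\det[U^{S,t}_{t+}(x_i,y_j)] = (-1)^N\prod_{y_i=x_i}w_0(x_i)\prod_{y_i=x_i+1}w_1(x_i),
\]
which, plugged into the right-hand side of the proposition, reproduces the definition of $P^{S,t}_{t+}(X,Y)$ after absorbing $(-1)^N$ into the constant.

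For the remaining three identities the argument is structurally identical. The matrices $U^{S,t}_{S+}$, $U^{S,t}_{t-}$ and $U^{S,t}_{S-}$ are bidiagonal with supports $\{x,x+1\}$ in the $+$ cases and $\{x-1,x\}$ in the $-$ cases; combined with the hypothesis $y_i-x_i\in\{0,1\}$ (resp.\ $\{-1,0\}$) and the strict orderings of $X$ and $Y$, the same pigeonhole bound kills every $\sigma\neq\mathrm{id}$ (in the $-$ cases one uses $y_{\sigma(i)}\le x_i\le x_j-1\le y_{\sigma(j)}$ instead). The surviving diagonal entries match $-w_0,-w_1$ or $-\tilde w_0,-\tilde w_1$ entry by entry, and the identity follows as before.

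I do not anticipate a serious obstacle; the heart of the argument is the one pigeonhole step above. The only bookkeeping items are the consistent verification of the sign conventions $U=-w$ and $U=-\tilde w$ on both the diagonal and off-diagonal, and checking in the $-$ cases that the corresponding chain of inequalities is still strong enough to contradict $y_{\sigma(i)}>y_{\sigma(j)}$.
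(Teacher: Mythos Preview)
Your argument is correct and matches the paper's own proof, which simply says ``straightforward computation'' and remarks that any submatrix of a two-diagonal matrix with nonzero determinant is block-diagonal with triangular blocks, so the determinant is a product of matrix elements. Your permutation/pigeonhole argument is just an explicit unpacking of that same observation; the content is identical.
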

\begin{proof}
Straightforward computation using the definitions of the stochastic
matrices $P^{S,t}_{t\pm}$, $P^{S,t}_{S\pm}$ and the matrices
$U^{S,t}_{t\pm}$, $U^{S,t}_{S\pm}$.

Any submatrix of a two-diagonal matrix, which has a nonzero
determinant, is block-diagonal, where each block is either an upper
or a lower triangular matrix. Thus, any nonzero minor is a product
of suitable matrix elements.
\end{proof}

\subsection{Commutativity}

\begin{theorem}
 The families of stochastic matrices $P^{S,t}_{t\pm}$ and
 $P^{S,t}_{S\pm}$ commute, that is
 $$
   P^{S,t}_{t+}\cdot P^{S,t+1}_{S-}=P^{S,t}_{S-}\cdot
   P^{S-1,t}_{t+},
 $$
 $$
   P^{S,t}_{t-}\cdot P^{S,t-1}_{S-}=P^{S,t}_{S-}\cdot
   P^{S-1,t}_{t-},
 $$
$$
   P^{S,t}_{t+}\cdot P^{S,t+1}_{S+}=P^{S,t}_{S+}\cdot
   P^{S+1,t}_{t+},
 $$
 $$
   P^{S,t}_{t-}\cdot P^{S,t-1}_{S+}=P^{S,t}_{S+}\cdot
   P^{S+1,t}_{t-},
 $$
 for any meaningful values of $S$ and $t$.
\end{theorem}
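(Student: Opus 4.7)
The strategy is to lift each of the four commutation relations among the stochastic matrices to a matrix identity among the two-diagonal matrices $U^{S,t}_{t\pm},U^{S,t}_{S\pm}$, and then to verify that identity entrywise. Everything flows from the determinantal representation in Proposition~\ref{Proposition_det_form_transitional_probabilities}, which writes each $P$ in the form
$$
P(X,Y)=const\cdot\frac{V_{\mathrm{out}}(Y)}{V_{\mathrm{in}}(X)}\cdot\det[U(x_i,y_j)]_{i,j=1}^N,\qquad V_{\tau,\sigma}(X):=\prod_{i<j}\bigl(\mu_{\tau,\sigma}(x_i)-\mu_{\tau,\sigma}(x_j)\bigr).
$$

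The key algebraic observation is that $\mu_{t,S}(x)=q^{-x}+\kappa^{2}q^{x-S-t+1}$ depends on $(t,S)$ only through the sum $t+S$. In particular $\mu_{t+1,S-1}=\mu_{t,S}$, $\mu_{t-1,S+1}=\mu_{t,S}$, and so on. Consequently, in each of the four commutation relations, the intermediate Vandermonde factors at the summation variable $Z$ cancel in both the left- and right-hand products, and the remaining Vandermonde prefactors on the two sides coincide. For example, in $P^{S,t}_{t+}\cdot P^{S,t+1}_{S-}=P^{S,t}_{S-}\cdot P^{S-1,t}_{t+}$ the common prefactor is
$$
\prod_{i<j}\frac{\mu_{t+1,S-1}(y_i)-\mu_{t+1,S-1}(y_j)}{\mu_{t,S}(x_i)-\mu_{t,S}(x_j)},
$$
and the other three relations yield analogous expressions involving $\mu_{t\pm1,S\pm1}$ or $\mu_{t\pm 2,S}=\mu_{t,S\pm 2}$.

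The second step is Cauchy--Binet: the sum over the intermediate $N$-point configuration $Z$ of products of determinants of $U$-matrices collapses to the determinant of a product of two-diagonal matrices, e.g.
$$
\sum_Z \det[U^{S,t}_{t+}(x_i,z_j)]\,\det[U^{S,t+1}_{S-}(z_i,y_j)]=\det\bigl[(U^{S,t}_{t+}U^{S,t+1}_{S-})(x_i,y_j)\bigr].
$$
Thus the first commutation relation is reduced to the matrix identity $U^{S,t}_{t+}\cdot U^{S,t+1}_{S-}=U^{S,t}_{S-}\cdot U^{S-1,t}_{t+}$, and similarly the other three. Because both sides of each original relation are compositions of stochastic matrices and hence stochastic, proportionality of their entries for fixed $X$ automatically promotes to exact equality, so the overall normalizing constants require no separate tracking.

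Finally, each $U$-matrix identity is verified entry by entry. Each $U$ has only two nonzero diagonals, shifting the argument by $0$ and by $\pm 1$, so both sides of the reduced identity are tridiagonal: at most three entries $y\in\{x-1,x,x+1\}$ per $x$ to compare, each a sum of at most two products of explicit $q$-rational factors of the form $(1-q^{a})$ and $(1-\kappa^{2}q^{b})/(1-\kappa^{2}q^{c})$. I expect the main --- and only --- obstacle to be the bookkeeping: four commutation relations, three diagonals each, and up to two summands per side, giving roughly two dozen elementary $q$-identities to confirm. The $S\leftrightarrow t$ symmetry already exploited in the proof of Theorem~\ref{Th_4stochmatr} will reduce the independent checks by about half, and no deep identity beyond elementary manipulation of $q$-Pochhammer factors is needed.
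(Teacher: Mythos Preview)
Your proposal is correct and follows essentially the same route as the paper: use the determinantal representation, apply Cauchy--Binet to reduce each commutation relation to a matrix identity among the two-diagonal $U$-matrices, and then verify that identity by a direct tridiagonal computation, with stochasticity of both sides taking care of the normalizing constants. The paper carries out only the first of the four cases explicitly (writing out the resulting tridiagonal product $U^{S,t}_{t+S-}$) and declares the others similar; your observation about the $S\leftrightarrow t$ symmetry is a reasonable way to organize the remaining checks.
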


\begin{proof}
Proofs of all four cases are very similar and we consider only the first one.
\begin{multline*}
 (P^{S,t}_{t+}\cdot P^{S,t+1}_{S-})(X,Y)=\sum\limits_{Z\in\mathcal
 X^{S,t+1}}P^{S,t}_{t+}(X,Z)\cdot P^{S,t+1}_{S-}(Z,Y)\\
 =const\cdot \prod_{i<j}\frac{\mu_{t+1,S-1}(y_i)-\mu_{t+1,S-1}(y_j)}
 {\mu_{t,S}(x_i)-\mu_{t,S}(x_j)} \\ \times
  \sum\limits_{Z\in\mathcal X^{S,t+1}}  \det[U^{S,t}_{t+}(x_i,z_j)]_{i,j=1,\dots,N}
  \det[U^{S,t+1}_{S-}(z_i,y_j)]_{i,j=1,\dots,N}.
\end{multline*}
Applying the Cauchy-Binet identity we obtain
\begin{multline*}
  \sum\limits_{Z\in\mathcal X^{S,t+1}}  \det[U^{S,t}_{t+}(x_i,z_j)]_{i,j=1,\dots,N}
  \det[U^{S,t+1}_{S-}(z_i,y_j)]_{i,j=1,\dots,N}\\ =
  \det[ (U^{S,t}_{t+}\cdot
  U^{S,t+1}_{S-})(x_i,y_j)]_{i,j=1,\dots,N}.
\end{multline*}
Thus,
\begin{multline*}
(P^{S,t}_{t+}\cdot P^{S,t+1}_{S-})(X,Y)\\ =const\cdot
\prod_{i<j}\frac{\mu_{t+1,S-1}(y_i)-\mu_{t+1,S-1}(y_j)}
 {\mu_{t,S}(x_i)-\mu_{t,S}(x_j)}\det[ (U^{S,t}_{t+}\cdot  U^{S,t+1}_{S-})(x_i,y_j)]_{i,j=1,\dots,N}.
\end{multline*}
Similarly,
\begin{multline*}
(P^{S,t}_{S-}\cdot P^{S-1,t}_{t+})(X,Y)\\=const\cdot
\prod_{i<j}\frac{\mu_{t+1,S-1}(y_i)-\mu_{t+1,S-1}(y_j)}
 {\mu_{t,S}(x_i)-\mu_{t,S}(x_j)} \det[ (U^{S,t}_{S-}\cdot
U^{S-1,t}_{t+})(x_i,y_j)]_{i,j=1,\dots,N}.
\end{multline*}
Our claim reduces to verifying the equality
$$
U^{S,t}_{t+}\cdot  U^{S,t+1}_{S-}=U^{S,t}_{S-}\cdot U^{S-1,t}_{t+}.
$$
Note that this will also imply the coincidence of normalization
constants, since all matrices under consideration are stochastic.

A straightforward computation yields
$$
 U^{S,t}_{t+S-}=U^{S,t}_{t+}\cdot  U^{S,t+1}_{S-}=U^{S,t}_{S-}\cdot
 U^{S-1,t}_{t+},
$$
where
$$U_{t+S-}^{S,t}(x,y)=\begin{cases}
  u_1 ,& \text{ if } y=x+1,\\
  u_0,&\text{ if } y=x,\\
   u_{-1},& \text{ if } y=x-1,\\
  0,&\text{ otherwise,}
\end{cases}
$$
and
$$
 u_1=q^{T+N-1-t}(1-q^{x-S-N+1})
 (1-q^{x-S-N+2})\frac {(1-\kappa ^2q^{x-S-t+1})({1- \kappa ^2 q^{x-T+1}})}{(1-\kappa ^2
 q^{2x-t-S+2}))1- \kappa ^2
 q^{2x-t-S+1})},
$$
\begin{multline*}
 u_0= -(1-q^{x-S-N+1})\frac{1- \kappa ^2 q^{x+N-t}}{1- \kappa ^2
 q^{2x-t-S+1}}\\
 \times \left(q^{T-S-t}(1-q^{x+1}) \frac{1- \kappa ^2 q^{x-T+1}} {1-\kappa ^2
 q^{2x-t-S+2}}
 + (1-q^{x+T-t-S}) \frac {1-\kappa ^2q^{x-S-t}}{1-\kappa ^2
 q^{2x-t-S}} \right)  ,
\end{multline*}
$$
 u_{-1}=q^{-(S+N-1)} (1-q^{x+T-t-S})(1-q^x) \frac{(1- \kappa ^2q^{x+N-t})(1-\kappa ^2q^{x+N-t-1})} {(1- \kappa ^2
 q^{2x-t-S+1})(1-\kappa ^2
 q^{2x-t-S})}\,.
$$

\end{proof}

\section{Perfect sampling algorithm}

\label{Section_perfect_sampling_algorithm}

\subsection{Definition of transition matrices}

In this section we aim to define two new stochastic matrices
$$P_{S\mapsto S+1}^S(X,Y),\quad X\in\Omega(N,T,S),\quad
Y\in\Omega(N,T,S+1)$$ and
$$P_{S\mapsto S-1}^S(X,Y),\quad X\in\Omega(N,T,S),\quad
Y\in\Omega(N,T,S-1)$$ that preserve the measures $\mu(N,T,S,q,\kappa
)$. Both $P_{S\mapsto S+1}^S$ and $P_{S\mapsto S-1}^S$ depend on
parameters $N$, $T$, $q$, $\kappa $ but we omit these parameters
from the notation.

Suppose we are given a sequence $X=(X(0),X(1),\dots,X(T))\in
\Omega(N,T,S)$ (recall that $X(t)\in \mathcal X^{S,t}$). Below we
construct a random sequence $Y=(Y(0),\dots,Y(T))\in\Omega(N,T,S+1)$
and therefore define the transition probability (or, equivalently,
stochastic matrix) $P_{S\mapsto S+1}^S(X,Y)$.

First note that $Y(0)\in \mathcal X^{S+1,0}$ and $|\mathcal
X^{S+1,0}|=1$. Thus, $Y(0)$ is uniquely defined. We will perform a
\emph{sequential update}. Suppose $Y(0),Y(1),\dots,Y(t)$ have been
already defined. Define the conditional distribution of $Y(t+1)$ given
$X$, $Y(0),Y(1),\dots,Y(t)$ by
\begin{multline}
\label{Two_kinds}
 {\rm Prob}\{Y(t+1)=Z\}=\frac{P^{S+1,t}_{t+}(Y(t),Z)\cdot
 P^{S+1,t+1}_{S-}(Z,X(t+1))}{
 (P^{S+1,t}_{t+}P^{S+1,t+1}_{S-})(Y(t),X(t+1))}\\
 =\frac{P^{S,t+1}_{S+}(X(t+1),Z)\cdot P^{S+1,t+1}_{t-}(Z,Y(t))
 }{
 (P^{S,t+1}_{S+}P^{S+1,t+1}_{t-})(X(t+1),Y(t))}.
\end{multline}
(The second equality follows from $
\rho_{S+1,t+1}(X)P^{S+1,t+1}_{t-}(X,Y)=\rho_{S+1,t}(Y)P^{S+1,t}_{t+}(Y,X)$.)

This definition follows the idea of \cite[Section 2.3]{DF}, see
also \cite{BF}.

Observe that $(P^{S+1,t}_{t+}P^{S+1,t+1}_{S-})(Y(t),X(t+1))>0$ (here
and below see \cite{BG} for more details).

One could say that we choose $Y(t+1)$ using conditional distribution
of the middle point in the successive application of
$P^{S+1,t}_{t+}$ and $P^{S+1,t+1}_{S-}$ (or $P^{S,t+1}_{S+}$ and
$P^{S+1,t+1}_{t-}$ ), provided that we start at $Y(t)$ and finish at
$X(t+1)$ (or start at $X(t+1)$ and finish at $Y(t)$).

After performing $T$ updates we obtain the sequence $Y$.

Equivalently, define $P_{S\mapsto S+1}^S$  by
$$
 P_{S\mapsto S+1}^S(X,Y)=\begin{cases}
 \prod\limits_{t=0}^{T-1}\dfrac{P^{S+1,t}_{t+}(Y(t),Y(t+1))\cdot
 P^{S+1,t+1}_{S-}(Y(t+1),X(t+1))}{
 (P^{S+1,t}_{t+}P^{S+1,t+1}_{S-})(Y(t),X(t+1))},\\
  \quad \quad \text{if } \prod\limits_{t=0}^{T-1}
  (P^{S+1,t}_{t+}P^{S+1,t+1}_{S-})(Y(t),X(t+1))>0,\\
  0, \text{ otherwise.}
  \end{cases}
$$

\begin{theorem}
\label{Th_Mes_pr} The matrix $P^S_{S\mapsto S+1}$ on
$\Omega(N,T,S)\times\Omega(N,T,S+1)$ is stochastic.
 The transition probabilities $P_{S\mapsto S+1}^S(X,Y)$ preserve the measures
 $\mu(N,T,S,q,\kappa )$:
 $$
   \mu(N,T,S+1,q,\kappa )(Y)=\sum_{X\in\Omega(N,T,S)}P_{S\mapsto
   S+1}^S(X,Y)\mu(N,T,S,q,\kappa )(X).
 $$
\end{theorem}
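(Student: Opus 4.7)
The plan is to treat the sequential update as a bivariate Markov chain on pairs $(X(t), Y(t)) \in \mathcal{X}^{S,t} \times \mathcal{X}^{S+1,t}$ and then extract the marginal law of the $Y$-component using the commutation relation proved in the previous section. Stochasticity of $P_{S\mapsto S+1}^S$ is immediate: $Y(0)$ is forced because $|\mathcal{X}^{S+1,0}| = 1$, and for each $t$ the kernel \eqref{Two_kinds} used to draw $Y(t+1)$ is a probability measure on $\mathcal{X}^{S+1,t+1}$ by construction. Summing $P_{S\mapsto S+1}^S(X,Y)$ over $Y(T), Y(T-1), \ldots, Y(1)$ in turn therefore yields $1$.

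For measure preservation, view the construction as the marginal on $Y$ of a bivariate Markov chain with one-step transitions
\[
Q_t\bigl((X,Y),(X',Y')\bigr) = P^{S,t}_{t+}(X,X') \cdot \frac{P^{S+1,t}_{t+}(Y,Y')\, P^{S+1,t+1}_{S-}(Y',X')}{(P^{S+1,t}_{t+} P^{S+1,t+1}_{S-})(Y,X')},
\]
started at the unique pair $(X(0),Y(0))$. The $X$-component evolves as a Markov chain under $P^{S,t}_{t+}$, so the marginal law of the full $X$-path is $\mu(N,T,S,q,\kappa)$; consequently the joint pair $(X,Y)$ has law $\mu(N,T,S,q,\kappa)(X)\cdot P_{S\mapsto S+1}^S(X,Y)$, and its $Y$-marginal is exactly the quantity we wish to identify with $\mu(N,T,S+1,q,\kappa)(Y)$.

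The crux is the inductive claim that, under the bivariate chain,
\[
P\bigl(X(t) = x \,\mid\, Y(0), \ldots, Y(t) = y\bigr) = P^{S+1,t}_{S-}(y, x),
\]
and the marginal $Y$-process is Markov with transitions $P^{S+1,t}_{t+}$ depending only on the current $Y(t)$. For the inductive step, compute $P(X(t+1), Y(t+1) \mid Y(0), \ldots, Y(t))$ via $Q_t$: the inner sum $\sum_{X(t)} P^{S+1,t}_{S-}(Y(t), X(t))\, P^{S,t}_{t+}(X(t), X(t+1))$ equals $(P^{S+1,t}_{t+}P^{S+1,t+1}_{S-})(Y(t), X(t+1))$ by commutation, cancelling the denominator of $Q_t$ exactly, so the expression collapses to $P^{S+1,t}_{t+}(Y(t),Y(t+1))\, P^{S+1,t+1}_{S-}(Y(t+1), X(t+1))$. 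Summing out $X(t+1)$ then extracts the marginal transition $P^{S+1,t}_{t+}(Y(t), Y(t+1))$ via stochasticity of $P^{S+1,t+1}_{S-}$, and the remaining factor is exactly the invariant at time $t+1$. Since $Y(0)$ is deterministic, the marginal $Y$-law is the telescoped product $\prod_{t=0}^{T-1} P^{S+1,t}_{t+}(Y(t), Y(t+1)) = \mu(N,T,S+1,q,\kappa)(Y)$, as required. The main obstacle is identifying this invariant; once it is in place, the commutation relation from the previous section does all the work algebraically, with each step of the telescoping made possible by a single application of that identity.
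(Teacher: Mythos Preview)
Your argument is correct and is precisely the Diaconis--Fill style intertwining computation that the paper defers to \cite{BG} for; the paper's own proof is simply a citation, and you have spelled out the argument that appears there. The key step---using the commutation relation $P^{S+1,t}_{S-}\cdot P^{S,t}_{t+}=P^{S+1,t}_{t+}\cdot P^{S+1,t+1}_{S-}$ to collapse the inner sum and propagate the conditional law $P(X(t)=x\mid Y(0),\dots,Y(t)=y)=P^{S+1,t}_{S-}(y,x)$---is exactly the mechanism in \cite{DF,BF,BG}.
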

\begin{proof}
See \cite{BG}.
\end{proof}

Similarly to $P_{S\mapsto S+1}$, one defines a transition matrix
$$
P_{S\mapsto S-1}^S(X,Y),\quad X\in\Omega(N,T,S),\quad
Y\in\Omega(N,T,S-1),
$$
by
$$
 P_{S\mapsto S-1}^S(X,Y)=\begin{cases}
 \prod\limits_{t=0}^{T-1}\dfrac{P^{S-1,t}_{t+}(Y(t),Y(t+1))\cdot
 P^{S-1,t+1}_{S+}(Y(t+1),X(t+1))}{
 (P^{S-1,t}_{t+}P^{S-1,t+1}_{S+})(Y(t),X(t+1))},\\
  \quad \quad \text{if } \prod\limits_{t=0}^{T-1}
  (P^{S-1,t}_{t+}P^{S-1,t+1}_{S+})(Y(t),X(t+1))>0,\\
  0, \text{ otherwise.}
  \end{cases}
$$
Similarly to \eqref{Two_kinds} there is another way to write
$P^S_{S\mapsto S-1}$ because of the equality
\begin{multline*}
 \dfrac{P^{S-1,t}_{t+}(Y(t),Y(t+1))\cdot
 P^{S-1,t+1}_{S+}(Y(t+1),X(t+1))}{
 (P^{S-1,t}_{t+}P^{S-1,t+1}_{S+})(Y(t),X(t+1))}\\=
 \dfrac{
 P^{S,t+1}_{S-}(X(t+1),Y(t+1))\cdot P^{S-1,t+1}_{t-}(Y(t+1),Y(t))}{
 (P^{S,t+1}_{S-}P^{S-1,t+1}_{t-})(X(t+1),Y(t))}
\end{multline*}

Similarly to Theorem \ref{Th_Mes_pr} one proves the following claim.

\begin{theorem}
\label{Th_Mes_pr2}
 The matrix $P^S_{S\mapsto S-1}$ on
$\Omega(N,T,S)\times\Omega(N,T,S-1)$ is stochastic.
 The transition probabilities $P_{S\mapsto S-1}^S(X,Y)$ preserve the measures
 $\mu(N,T,S,q,\kappa )$:
 $$
   \mu(N,T,S-1,q,\kappa )(Y)=\sum_{X\in\Omega(N,T,S)}P_{S\mapsto
   S-1}^S(X,Y)\mu(N,T,S,q,\kappa )(X).
 $$
\end{theorem}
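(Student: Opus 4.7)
\textbf{Proof plan for Theorem \ref{Th_Mes_pr2}.}

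The proof follows the same strategy as Theorem \ref{Th_Mes_pr}, with the roles of $P^{S,t}_{S-}/P^{S,t}_{S+}$ interchanged throughout (and $P^{S-1,t}$ replaced by $P^{S+1,t}$ in the obvious places). I would begin with stochasticity, which is essentially automatic from the sequential construction. For fixed $X\in\Omega(N,T,S)$, the point $Y(0)$ is determined (the set $\mathcal X^{S-1,0}$ is a singleton); inductively, given $Y(0),\ldots,Y(t)$ and $X$, the candidate value $Y(t+1)$ is drawn from a distribution whose normalizing constant is precisely the denominator $(P^{S-1,t}_{t+}P^{S-1,t+1}_{S+})(Y(t),X(t+1))$, so the conditional sums to $1$. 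Multiplying these conditionals over $t=0,\ldots,T-1$ shows $\sum_Y P^S_{S\mapsto S-1}(X,Y)=1$ and simultaneously that all factors are nonnegative (the individual one-step matrices $P^{S-1,t}_{t+}$ and $P^{S-1,t+1}_{S+}$ are stochastic by Theorem \ref{Th_4stochmatr}).

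For measure preservation, I would exploit Proposition \ref{Proposition_tr_prob} to factor
\[
\mu(N,T,S,q,\kappa)(X)=\prod_{t=0}^{T-1} P^{S,t}_{t+}(X(t),X(t+1)),\qquad
\mu(N,T,S-1,q,\kappa)(Y)=\prod_{t=0}^{T-1} P^{S-1,t}_{t+}(Y(t),Y(t+1)),
\]
since $X(0)$ and $Y(0)$ are each deterministic. Substituting the defining product for $P^S_{S\mapsto S-1}$, the factors $\prod_t P^{S-1,t}_{t+}(Y(t),Y(t+1))$ pull out as $\mu(N,T,S-1,q,\kappa)(Y)$, and one is reduced to proving
\[
\sum_{X}\prod_{t=0}^{T-1}\frac{P^{S,t}_{t+}(X(t),X(t+1))\,P^{S-1,t+1}_{S+}(Y(t+1),X(t+1))}{(P^{S-1,t}_{t+}P^{S-1,t+1}_{S+})(Y(t),X(t+1))}=1.
\]
Here I would invoke the commutativity relation $P^{S-1,t}_{t+}P^{S-1,t+1}_{S+}=P^{S-1,t}_{S+}P^{S,t}_{t+}$ from the previous theorem to rewrite the denominator as $(P^{S-1,t}_{S+}P^{S,t}_{t+})(Y(t),X(t+1))$. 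The identity then has the structure of a Doob-$h$ transform (or equivalently a Gibbs sampler on the two-dimensional lattice indexed by $(S,t)$), and one can verify it by a telescoping induction on $t$: the $X(t+1)$ sum is handled using the stochasticity of $P^{S,t}_{t+}$ together with the fact that $P^{S,t}_{S-}$ preserves $\rho_{S,t}$, which gives the detailed balance identity $\rho_{S,t}(X)P^{S,t}_{S-}(X,Z)=\rho_{S-1,t}(Z)P^{S-1,t}_{S+}(Z,X)$ used to pass ratios of denominators through.

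The main obstacle, as in \cite{BG}, is orchestrating the sum over intermediate $X(t)$'s so that the ratios telescope correctly. Concretely, one needs to observe that the transition $(X(t),Y(t))\mapsto (X(t+1),Y(t+1))$ of the coupled chain admits two equivalent descriptions—one where $X(t+1)$ is sampled first and $Y(t+1)$ second (this is the definition), and a dual one where $Y(t+1)$ is sampled first via $P^{S-1,t}_{t+}$ and $X(t+1)$ second via an appropriate conditional. The equivalence of these two orderings is exactly the commutativity of the previous section together with the $\rho$-invariance of the matrices, and it is this equivalence that forces the $Y$-marginal of the coupling to be $\mu_{S-1}$. Since all of this is carried out verbatim as in \cite{BG} with the obvious substitutions $P_{S+}\leftrightarrow P_{S-}$ and $S+1\leftrightarrow S-1$, I would simply record the details that differ and refer to \cite{BG} for the bookkeeping.
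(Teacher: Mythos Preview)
Your proposal is correct and follows essentially the same approach as the paper. The paper's own ``proof'' of this theorem is merely the sentence ``Similarly to Theorem~\ref{Th_Mes_pr} one proves the following claim,'' and the proof of Theorem~\ref{Th_Mes_pr} in turn is just ``See \cite{BG}''; your sketch is a faithful expansion of the \cite{BG} argument (sequential construction for stochasticity, factorization of $\mu$ as a product of one-step transitions, commutativity of the $P$-matrices for the telescoping measure-preservation computation), with the expected swap $S+1\leftrightarrow S-1$, $P_{S-}\leftrightarrow P_{S+}$ throughout.
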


\begin{Remark}
The above construction performs a sequential update from $t=0$ to $t=T$. One can
equally well update from $t=T$ to $t=0$ by suitably modifying the definitions. The resulting Markov
chains also preserve the measures $\mu(N,T,S,q,\kappa )$, and they are different from the Markov
chains defined above.
\end{Remark}

\subsection{Algorithm for the $S\mapsto S+1$ step.}

 Now we suggest an algorithmic description of the Markov
chain from the previous section.

Denote
$$
  p(x,t,q,\kappa ,S,T)=\frac{1-q^{x+T-t-S-1}}{q^{T-t-S-1} (1-q^{x+1})} \frac {1- \kappa ^2q^{x-S-t-1}}{1- \kappa ^2 q^{x-T+1}}   \frac{1- \kappa ^2
 q^{2x-t-S+1}} {1-\kappa ^2
 q^{2x-t-S-1}}$$

and
 \begin{multline*}
 P(x,t,q,\kappa ,S,T;k)=\prod_{i=1}^k p(x+i-1,t,q,\kappa ,S,T)\\
 =\frac{(q^{x+T-t-S-1};q)_k}{q^{k(T-t-S-1)} (q^{x+1};q)_k} \frac {(\kappa ^2q^{x-S-t-1};q)_k}
 {(\kappa ^2 q^{x-T+1};q)_k}   \frac{(\kappa ^2
 q^{2x-t-S+1};q^2)_k} {(\kappa ^2
 q^{2x-t-S-1};q^2)_k}
 \,.
 \end{multline*}

Denote by $D(x,t,S;n)$ (it also depends on $q,\kappa,T$, but we omit these parameters) the
probability distribution on $\{0,1,\dots,n\}$ given by
\begin{equation}
\label{Distribution}
 {\rm
Prob}(\{k\})=D(x,t,S;n)\{k\}=\frac{P(x,t,q,\kappa ,S,T;k)}
           {\sum_{j=0}^n{P(x,t,q,\kappa ,S,T;j)}
           }\,.
\end{equation}

 Suppose we are
given $X=(X(0),X(1),\dots,X(T))\in\Omega(N,T,S)$. We want to
construct $Y=(Y(0),Y(1),\dots,Y(T))\in\Omega(N,T,S+1)$.

In the first place we note that $Y(0)$ is uniquely defined,
$$
 Y(0)=(0,1,\dots,N-1).
$$
Then we perform $T$ sequential updates, i.e., for $t=0,1,\dots T-1$
we construct $Y(t+1)$ using $Y(t)$ and $X(t+1)$. Let us describe
each step.

Let $Y(t)=(y_1<y_2<\dots<y_N)$ and $X(t+1)=(x_1<x_2<\dots<x_N)$. We
are going to construct $Y(t+1)=(z_1<z_2<\dots<z_N)$.

Recall that
$$z_i\in\mathfrak X^{S+1,t+1}=\{x\in\mathbb Z\mid \max(0,t+S-T+2)\le x \le \min(t+N,S+N)\}.$$

Observe that $Y(t)$ and $X(t+1)$ satisfy $(P^{S+1,t}_{t+}P^{S+1,t+1}_{S-})(Y(t),X(t+1))>0$. This
implies that $x_i-y_i$ is equal to either $-1$, $0$ or $1$ for every $i$.

$\bullet$ First, consider all indices $i$ such that $x_i-y_i=1$. For
every such $i$ we set $z_i=x_i$.

$\bullet$ Second, consider all indices $i$ such that $x_i-y_i=-1$
and set $z_i=y_i$.

$\bullet$ Finally, consider all remaining indices, i.e., all $i$ such
that $x_i=y_i$. Divide the corresponding $x_i$'s into blocks of
neighboring integers of distance at least one from each other. Call
such a block a $(k,l)$-block, where $k$ is the smallest number in
the block and $l$ is its size. Thus, we have
$$x_i=y_i=k,\quad x_{i+1}=y_{i+1}=k+1,\quad \dots,\quad x_{i+l-1}=y_{i+l-1}=k+l-1$$
and $$ y_{i-1}<k-1,\quad y_{i+l}>k+l.$$

For each $(k,l)$-block we perform the following procedure: consider a random variable $\xi$
distributed according to ${D(k,t,S;l)}$ ($\xi$'s corresponding to different $(k,l)$-blocks are
independent). Set $z_i=x_i$ for the first $\xi$ integers of the block (their coordinates are
$k,k+1,\dots, k+\xi-1$) and set $z_i=x_i+1$ for the rest of the block.

\medskip

At Figure 4 we provide an example of constructing $Y(t+1)$ using
$X(t+1)$ and $Y(t)$: there is only one $(k,l)$-block and it splits
into two groups, here $\xi=2$.

\begin{center}
 {\scalebox{0.5}{\includegraphics{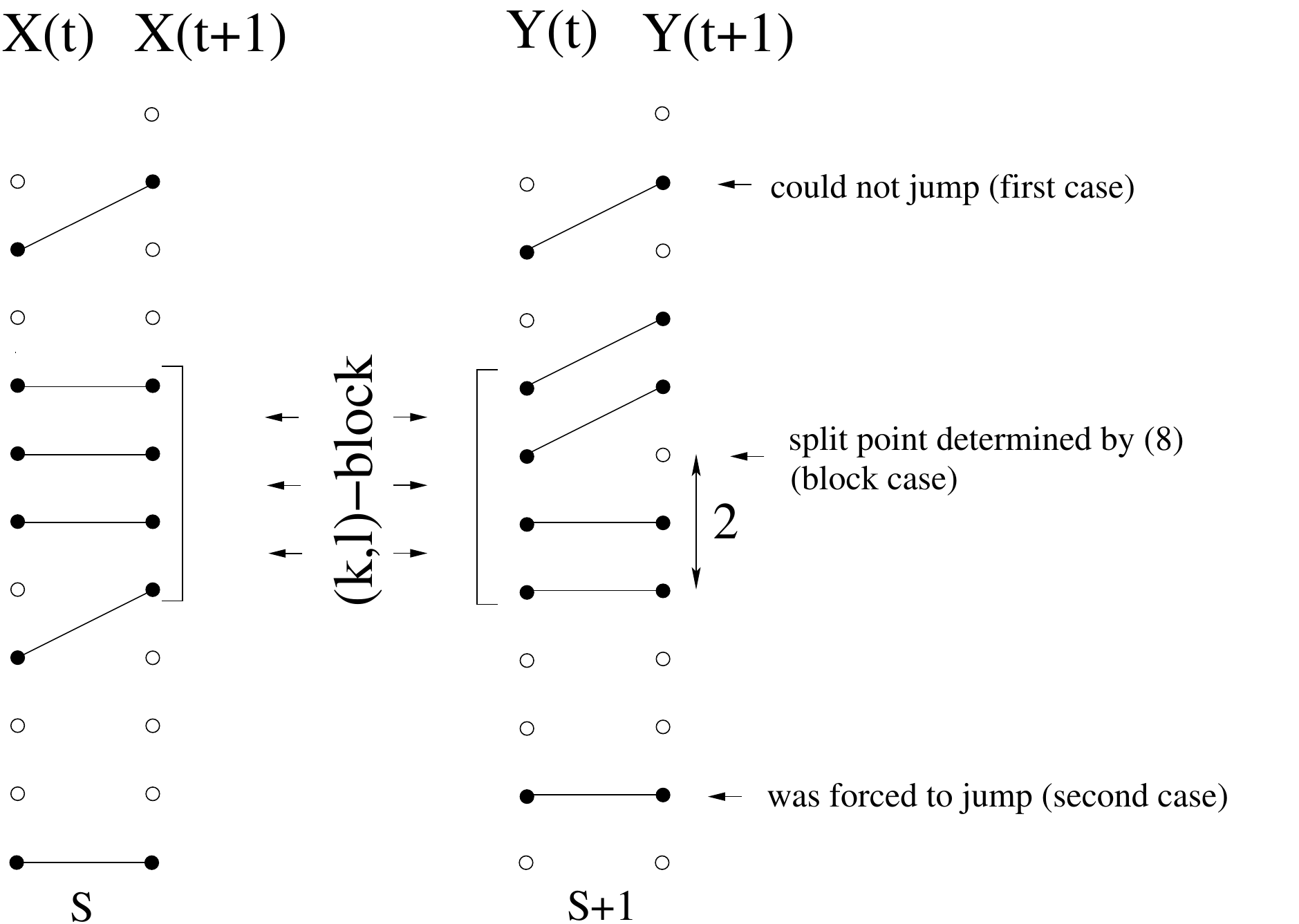}}}

 Figure 4. Example of $(k,l)$-block split, $l=4$, $\xi=2$.
\end{center}

\begin{theorem}
 \label{Alg1}
 The algorithm described above is precisely the $S\mapsto S+1$ Markov
 step given by $P^S_{S\mapsto S+1}$.
\end{theorem}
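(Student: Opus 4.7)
The plan is to unwind the definition of $P^S_{S\mapsto S+1}$ given by the sequential update formula (the unnumbered display above the theorem), which at each step $t\to t+1$ prescribes the conditional distribution of $Y(t+1)$ given $Y(t)$ and $X(t+1)$ as
\[
\frac{P^{S+1,t}_{t+}(Y(t),Z)\cdot P^{S+1,t+1}_{S-}(Z,X(t+1))}{(P^{S+1,t}_{t+}P^{S+1,t+1}_{S-})(Y(t),X(t+1))},
\]
and show that the algorithm samples $Z=Y(t+1)$ from exactly this conditional law. Since the algorithm sets $Y(0)=(0,\dots,N-1)$, which agrees with the unique element of $\mathcal X^{S+1,0}$, it suffices to match the one-step conditional.

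First I would determine the support. Because $P^{S+1,t}_{t+}(Y,Z)$ is nonzero only when $z_i-y_i\in\{0,1\}$ and $P^{S+1,t+1}_{S-}(Z,X)$ is nonzero only when $x_i-z_i\in\{-1,0\}$, the admissible $z_i$ lies in $\{y_i,y_i+1\}\cap\{x_i,x_i+1\}$. A case split on $x_i-y_i\in\{-1,0,1\}$ (the only possibilities consistent with positivity of $P^{S+1,t}_{t+}P^{S+1,t+1}_{S-}$) forces $z_i=x_i$ when $x_i-y_i=1$ and $z_i=y_i$ when $x_i-y_i=-1$; only when $x_i=y_i$ does a binary choice remain. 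The strict-inequality constraint $z_1<z_2<\cdots<z_N$ couples these free choices exactly within maximal runs of consecutive integers on which $x_i=y_i$: within such a $(k,l)$-block, once some position elects $z_i=x_i+1$, every subsequent position in the block is also forced to choose $z_i=x_i+1$, matching the ``first $\xi$ keep, rest shift'' pattern in the algorithm.

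Next I would use the explicit product formulas for $P^{S+1,t}_{t+}$ and $P^{S+1,t+1}_{S-}$ from Section 5. Multiplying them, the $\mu_{t+1,S+1}(z)$ factors in the Vandermonde-type prefactors cancel, leaving a factor depending only on $(X(t+1),Y(t))$ times $\prod_i W(y_i,z_i,x_i)$, where $W$ is built from the appropriate $w_0,w_1,\tilde w_0,\tilde w_1$. In Cases 1 and 2 the per-site factor is fixed, while distinct $(k,l)$-blocks contribute independently. For a block, the weight corresponding to the ``first $\xi$ stay'' choice is
\[
\prod_{m=0}^{\xi-1} w_0(k+m)\tilde w_0(k+m)\;\prod_{m=\xi}^{l-1} w_1(k+m)\tilde w_1(k+m+1),
\]
so the ratio between consecutive values of $\xi$ equals $\frac{w_0(x)\tilde w_0(x)}{w_1(x)\tilde w_1(x+1)}$ evaluated at $x=k+\xi-1$. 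It remains to verify the algebraic identity
\[
\frac{w_0(x)\tilde w_0(x)}{w_1(x)\tilde w_1(x+1)}=p(x,t,q,\kappa,S,T),
\]
using $w_0,w_1$ from $P^{S+1,t}_{t+}$ (i.e.\ the formulas with $S$ shifted to $S+1$) and $\tilde w_0,\tilde w_1$ from $P^{S+1,t+1}_{S-}$ (both $S$ and $t$ shifted). After cancelling the common $(1-q^{x-S-N})$ and $(1-\kappa^2 q^{x+N-t})$ factors and collecting the prefactor $q^{T-t-S-1}$, this becomes the stated definition of $p$; this verification is the only substantive computation, and is the most error-prone step because of the many index shifts in the $q$-Racah parameters.

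Once the identity is established, the ratio of the weight for $\xi$ to the weight for $\xi=0$ is $\prod_{i=1}^{\xi}p(k+i-1,t,q,\kappa,S,T)=P(k,t,q,\kappa,S,T;\xi)$, matching the definition of $D(k,t,S;l)$. Since both the conditional law from the sequential-update formula and the law produced by the algorithm factor over blocks and agree up to a common normalization on each block, they coincide, proving that the algorithm realizes $P^S_{S\mapsto S+1}$.
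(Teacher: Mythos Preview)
Your proposal is correct and follows exactly the approach the paper has in mind: the paper's own proof consists of the single line ``Straightforward computations. See \cite{BG} for some details,'' and what you have written is precisely those straightforward computations carried out in full. Your support analysis, the cancellation of the $\mu_{t+1,S+1}(z)$ Vandermonde factors, the block factorization, and the key identity $\frac{w_0(x)\tilde w_0(x)}{w_1(x)\tilde w_1(x+1)}=p(x,t,q,\kappa,S,T)$ (with the $S\mapsto S+1$ and $t\mapsto t+1$ shifts inserted correctly) all check out.
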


\begin{proof}
Straightforward computations. See \cite{BG} for some details.

\end{proof}

{\bf Remarks.} Setting $\kappa =0$ in the formulas for the
distribution ${D(x,t,S;n)}$ we obtain the perfect sampling algorithm
for boxed plane partitions distributed as $q^{-volume}$.

Sending $q\to 1$ in the formulas for the distribution ${D(x,t,S;n)}$ as described in Section
\ref{Section_pr_models}, we get a perfect sampling algorithm for the Racah case (recall that in
this case the weight of a horizontal lozenge is proportional to a linear function of its vertical
coordinate).

\subsection{Algorithm for $S\mapsto S-1$ step}

Using similar methods we can also obtain $S\mapsto S-1$ Markov step which gives alternative way to
sample a random tiling: We start from the case $T=S$ and then perform some amount of $S\mapsto S-1$
steps.

The $S\mapsto S-1$ step algorithm is very similar to the $S\mapsto S+1$ one.

Denote
$$
  \hat p(x,t,q,\kappa
,S,T,N)=\frac{q^{t+1-S}(1-q^{x-t-N-1})}{(1-q^{x-S-N+1})} \frac {1-
\kappa ^2q^{x+N-t-1}}{1-\kappa ^2
 q^{x+N-S+1}}   \frac{1-\kappa ^2
 q^{2x-t-S+1}}{1- \kappa ^2 q^{2x-t-S-1}}
$$

and
 \begin{multline*}
 \hat P(x,t,q,\kappa ,S,T,N;k)=\prod_{i=1}^k \hat p(x+i-1,t,q,\kappa ,S,T,N)\\
 =\frac{q^{k(t+1-S)}(q^{x-t-N-1};q)_k}{(q^{x-S-N+1};q)_k} \frac {(\kappa ^2q^{x+N-t-1};q)_k}
 {(\kappa ^2
 q^{x+N-S+1};q)_k}   \frac{(\kappa ^2
 q^{2x-t-S+1};q^2)_k}{(\kappa ^2 q^{2x-t-S-1};q^2)_k}
 \,.
 \end{multline*}

Denote by $\hat D(x,t,S;n)$ the probability distribution on $\{0,1,\dots,n\}$ given by
\begin{equation}
\label{Distribution2}
 {\rm
Prob}(\{k\})=\hat D(x,t,S;n)\{k\}=\frac{\hat P(x,t,q,\kappa ,S,T,N;k)}
           {\sum_{j=0}^n{\hat P(x,t,q,\kappa ,S,T,N;j)}
           }\,.
\end{equation}

Suppose we are given $X=(X(0),X(1),\dots,X(T))\in\Omega(N,T,S)$. We want to construct
$Y=(Y(0),Y(1),\dots,Y(T))\in\Omega(N,T,S-1)$.

As above, note that $Y(0)$ is uniquely defined,
$$
 Y(0)=(0,1,\dots,N-1).
$$
Then we  again perform $T$ sequential updates, i.e., for $t=0,1,\dots T-1$ we construct $Y(t+1)$
using $Y(t)$ and $X(t+1)$. Let us describe each step.

Let $Y(t)=(y_1<y_2<\dots<y_N)$ and $X(t+1)=(x_1<x_2<\dots<x_N)$. We are going to construct
$Y(t+1)=(z_1<z_2<\dots<z_N)$.

Recall that
$$z_i\in\mathfrak X^{S-1,t+1}=\{x\in\mathbb Z\mid \max(0,t+S-T)\le x \le \min(t+N,S+N-2)\}.$$

$Y(t)$ and $X(t+1)$ satisfy $(P^{S-1,t}_{t+}P^{S-1,t+1}_{S+})(Y(t),X(t+1))>0$. This implies that
$x_i-y_i$ is equal to either $0$, $1$ or $2$ for every $i$.

$\bullet$ First, consider all indices $i$ such that $x_i-y_i=0$. For every such $i$ we set
$z_i=x_i$.

$\bullet$ Second, consider all indices $i$ such that $x_i-y_i=2$ and set $z_i=y_i+1$.

$\bullet$ Finally, consider all remaining indices, i.e., all $i$ such that $x_i=y_i+1$. Divide the
corresponding $x_i$'s into blocks of neighboring integers of distance at least one from each other.
Call such a block a $(k,l)'$-block, where $k$ is the smallest number in the block and $l$ is its
size. Thus, we have
$$x_i=y_i+1=k,\quad x_{i+1}=y_{i+1}+1=k+1,\quad \dots,\quad x_{i+l-1}=y_{i+l-1}=k+l-1.$$

For each $(k,l)'$-block we perform the following procedure: consider random variable $\xi$
distributed according to $\hat D(k,t,S;l)$ ($\xi$'s corresponding to different $(k,l)'$-blocks are
independent). Set $z_i=y_i$ for the first $\xi$ integers of the block (their coordinates are
$k-1,k,\dots, k+\xi-2$) and set $z_i=y_i+1$ for the rest of the block.

\begin{theorem}
 The algorithm described above is precisely $S\mapsto S-1$ Markov
 step defined by $P^S_{S\mapsto S-1}$.
\end{theorem}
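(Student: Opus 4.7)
The plan is to mirror the proof of Theorem \ref{Alg1}, transported to the opposite direction. Starting from the definition of $P^S_{S\mapsto S-1}(X,Y)$ as a product over $t$ of the conditional factors $P^{S-1,t}_{t+}(Y(t),Y(t+1))\,P^{S-1,t+1}_{S+}(Y(t+1),X(t+1))/(P^{S-1,t}_{t+}P^{S-1,t+1}_{S+})(Y(t),X(t+1))$, I would verify that, conditioned on $Y(t)$ and $X(t+1)$, the algorithmic recipe generates $Y(t+1)$ with exactly this one-step conditional distribution. The assumption that the denominator is positive forces $x_i-y_i\in\{0,1,2\}$; the two-diagonal structure of the $U$-matrices from Proposition \ref{Proposition_det_form_transitional_probabilities} then shows that positions with $x_i-y_i=0$ must have $z_i=x_i$ and positions with $x_i-y_i=2$ must have $z_i=y_i+1$, matching the first two bullets of the algorithm.

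The nontrivial step is handling positions with $x_i-y_i=1$. Here I would invoke Proposition \ref{Proposition_det_form_transitional_probabilities} to write the numerator of the conditional probability as a product of $\mu$-Vandermonde ratios with $\det[U^{S-1,t}_{t+}]\cdot\det[U^{S-1,t+1}_{S+}]$; the Vandermonde factors involving $Y(t)$ and $X(t+1)$ cancel against the denominator, and only a $Y(t+1)$-dependent Vandermonde and the two determinants survive. Since the $U$-matrices are two-diagonal, any minor with a nonzero determinant is block-diagonal, and the blocks are exactly the $(k,l)'$-blocks of the algorithm. Therefore the conditional distribution of $Y(t+1)$ factors over the $(k,l)'$-blocks, yielding the independence of the split variables $\xi$.

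Within a single $(k,l)'$-block, the monotonicity constraint forces the admissible $Y(t+1)$'s to be parametrized by a split point $\xi\in\{0,1,\dots,l\}$: for the first $\xi$ positions one pairs a diagonal entry of $U^{S-1,t}_{t+}$ with a subdiagonal entry of $U^{S-1,t+1}_{S+}$, while the remaining $l-\xi$ positions give the opposite pairing. The ratio of the weights for $\xi+1$ and $\xi$ is the one-site ratio obtained by flipping the pairing at position $i+\xi$, namely $U^{S-1,t}_{t+}(x-1,x-1)\,U^{S-1,t+1}_{S+}(x-1,x)/\bigl(U^{S-1,t}_{t+}(x-1,x)\,U^{S-1,t+1}_{S+}(x,x)\bigr)$ with $x=k+\xi$. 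Substituting the explicit formulas for the matrix entries and cancelling the shared $(1-q^{x+T-t-S})$, $(1-\kappa^2 q^{x-T})$ and $(1-\kappa^2 q^{2x-t-S})$ factors, this one-site ratio simplifies to $\hat p(k+\xi,t,q,\kappa,S,T,N)$. By telescoping, the weight of split $\xi$ is proportional to $\hat P(k,t,q,\kappa,S,T,N;\xi)$, so normalization produces exactly the distribution $\hat D(k,t,S;l)$.

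The main obstacle is the algebraic bookkeeping in the last step: substituting the explicit two-diagonal matrix entries, cancelling the shared $(1-\kappa^2 q^\bullet)$ denominators, and carefully tracking the $q$-powers to recognize the function $\hat p$. This is the analogue, for the $S\mapsto S-1$ direction, of the calculation performed in \cite{BG} for the Hahn case, and poses no conceptual difficulty beyond the extra care needed to handle the $\kappa^2 q^\bullet$ factors introduced by the $q$-Racah weight. The coincidence of the normalization constants follows automatically, since both the algorithm and $P^S_{S\mapsto S-1}$ produce probability distributions on the same finite support.
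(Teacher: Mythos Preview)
Your approach is correct and is exactly what the paper has in mind: it simply says ``The proof is similar to Theorem~\ref{Alg1}'', which in turn defers to the straightforward computation in \cite{BG}, and your proposal fills in precisely those details for the $S\mapsto S-1$ direction. One small slip: when you multiply $P^{S-1,t}_{t+}(Y(t),Z)\cdot P^{S-1,t+1}_{S+}(Z,X(t+1))$ using Proposition~\ref{Proposition_det_form_transitional_probabilities}, the $\mu_{t+1,S-1}$-Vandermonde in $Z=Y(t+1)$ appears in the numerator of the first factor and the denominator of the second, so it \emph{cancels} rather than survives---this is why your one-site ratio correctly involves only the $U$-entries and no Vandermonde contribution, and the telescoping to $\hat P(k,t,q,\kappa,S,T,N;\xi)$ goes through as you describe.
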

The proof is similar to Theorem \ref{Alg1}.

\subsection{Markov evolution of the top path}
\label{Section_Mark_chain_top_path}

The $S\mapsto S+1$ Markov step described in the previous section has
the following property: Its projection to the set of topmost
horizontal lozenges (or the topmost holes in terms of
nonintersecting paths and point configurations) is also a Markov
chain. This Markov chain is an exclusion type process. Let us
describe it.

\medskip
The general setting is as follows. The state space of our discrete
time Markov chain consists of semi-infinite particle configurations
$\{e_1<e_2<e_3<\dots\}$ in $\mathbb Z$. At each time moment every
particle either stays or jumps to the left (any distance)
avoiding collisions and jumps over neighbors. Jumps are performed
sequentially. First, the leftmost particle ($e_1$) jumps, then the
second one and so on. The distribution $D$ of the length of the jump
of a particle depends on the number of the particle, moment of time,
current position of the particle ($e_i$) and the distance between
the current position of the particle and the position of the
previous particle ($e_{i-1}$) in the next moment of time. At time
$0$ we have the step initial condition, i.e., $e_i=i+const$.

\medskip
Now let us turn back to our situation. All particles are enumerated
by the parameter $t$ and our time parameter is $S$ that changes from
$0$ to $T$. Consider a sequence $\{u_t^S\}_{t=1,\dots}$, where
$u_t^S$ is the vertical coordinate (in our notation - $x$)
corresponding to the topmost hole inside the hexagon for $t\le S$
and $u_t^S=N+t-1$ for $t>S$. (We can also view $u_t^S$ as the
vertical coordinate corresponding to the $t$th hole, if we count all
holes, not just the ones inside a hexagon, starting from the line
$x=0$.)

The evolution of $\{u_t^S\}$ is precisely our Markov process. When
$S=0$ the configuration consists of points $N,N+1,N+2,\dots$. The
distribution of the length of jump of the particle with coordinate
$u_t^S$ at the time moment $S$ is given by the distribution
${D(u_{t-1}^{S+1}+1,t,S;u_{t}^S-u_{t-1}^{S+1}-1)}$ (see
\eqref{Distribution} for the definition).

Note that when $S=T$ the configuration consists of points $0,1,2,\dots$.

\medskip

We can also obtain in a similar way a Markov chain for the
bottommost holes.

Finally, we may construct two more similar processes using the
$S\mapsto S-1$ Markov chain (for this chains the direction of
particle jumps changes and distributions $D$ are replaced by
distributions $\hat D$).

\section{Correlation kernel}
\label{Section_Corr_kernel}

The aim of this section is to obtain the formulas for the
correlation functions of random point configurations in $\mathbb
Z^2$ obtained from the random tilings we are interested in.
\smallskip

\subsection{Expression via orthogonal polynomials}

Recall that a tiling of a hexagon corresponds to some family of nonintersecting paths that can be
viewed as a point configuration in $\mathbb Z^2$. Let us denote this configuration by $\mathbb M$.

\begin{center}
 {\scalebox{0.6}{\includegraphics{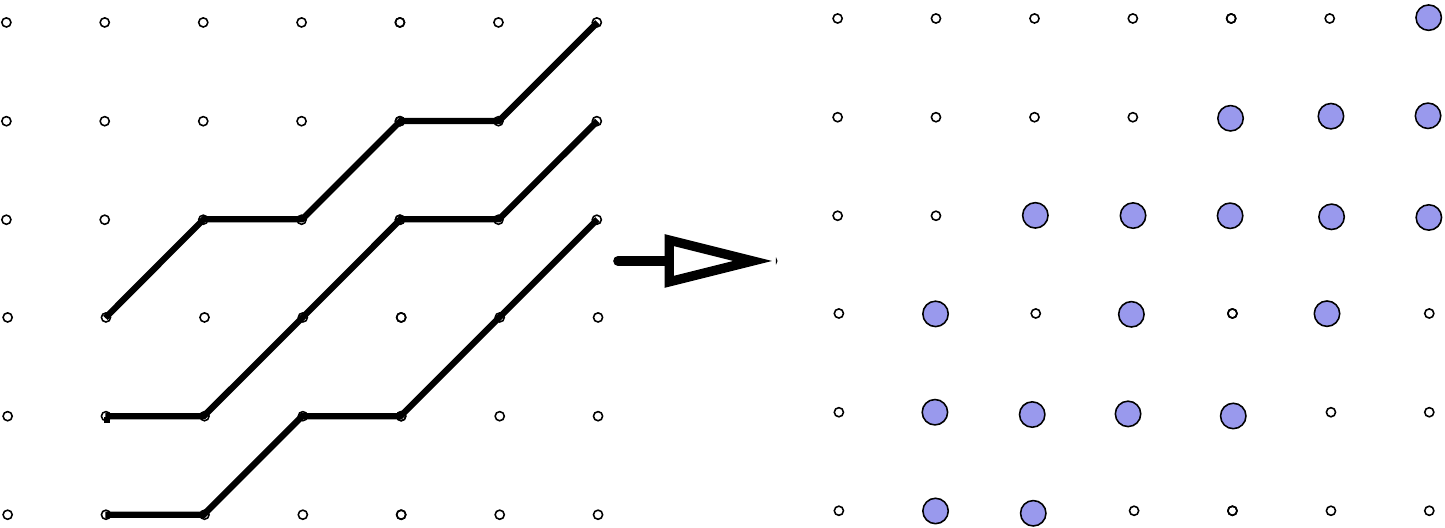}}}
\end{center}

As above, we denote the horizontal coordinate by $t$ and the
vertical coordinate by $x$.

We want to compute the correlation functions of this random point configuration.

Recall that the $n$th correlation function is defined by
$$
 \rho_n(t_1,x_1;\dots;t_n,x_n)=  {\rm Prob}\{(t_1,x_1)\in\mathbb M,\dots,(t_n,x_n)\in \mathbb M \}
$$
for any collection $\{(t_i,x_i)\}_{i=1,\dots,n}$ of distinct points in $\mathbb Z^2$.

To compute the correlation functions $\rho_n$ we are going to use a variant of the Eynard-Mehta
theorem (see \cite{EM} and \cite[Section 7.4]{BO}). Let us state it first.

\begin{proposition}
\label{proposition_EM} Assume that for every time moment $t$ we are given
 an orthonormal system $\{f^t_n\}_{n\ge 0}$ in
$l_2(\{0,1, \dots,L\})$  and a set of numbers
 $c_0^t,c_1^t, \dots$. Denote
$$
v_{t,t+1}(x,y)=\sum_{n\ge 0}c_n^t f^t_n(x)f^{t+1}_n(y).
$$

 Assume also that we are given a discrete time  Markov process
${\cal P}_t$ taking values in $N$-tuples of elements of the set $\{0,1, \dots,L\}$, with
one-dimensional distributions
$$\Bigl(\det\left[f_{i-1}^t(x_j)\right]_{i,j=1,\dots ,N}\Bigr)^2$$
 and transition probabilities

 $$\frac{\det\left[v_{t,t+1}(x_i,y_j)\right]_{i,j=1,\dots ,N}
 \det\left[f_{i-1}^{t+1}(y_j)\right]_{i,j=1,\dots ,N}} {\det\left[f_{i-1}^t(x_j)\right]_{i,j=1,\dots ,N}
  \prod\limits_{n=0}^{N-1}c_n^t}. $$

  Then
$$
 {\rm Prob}\{x_1\in {\cal P}_{k_1},\dots, x_n\in {\cal P}_{k_n}\}\\
 =\det\left[K(k_i,x_i;k_j,x_j)\right]_{i,j=1,\dots ,n},
$$ where
\begin{gather*}
K(k,x;l,y)=\sum_{i=0}^{N-1}\frac{1}{c_i^{l,k}}f_i^k(x)f_i^l(y)
 ,\, k\ge l; \\ K(k,x;l,y)=-\sum_{i\ge N}c_i^{k,l} f_i^k(x)f_i^l(y),
 \,k<l; \\ c_i^{k,k}=1,\, c_i^{k,l}=c_i^k\cdot c_i^{k+1}\cdot\dots\cdot
 c_i^{l-1}.
 \end{gather*}
 \end{proposition}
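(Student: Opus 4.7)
The plan is to reduce the statement to a standard Eynard--Mehta-type theorem for determinantal point processes. First I would write out the joint distribution of the configurations $\mathcal P_{k_1},\dots,\mathcal P_{k_m}$ at times $k_1<k_2<\cdots<k_m$ by multiplying the one-dimensional distribution at $k_1$ with the given transition probabilities. In this product the denominators $\det[f^{k_a}_{i-1}(x^{(k_a)}_j)]$ telescope against the numerators $\det[f^{k_{a+1}}_{i-1}(x^{(k_{a+1})}_j)]$, leaving the canonical multi-time form
\[
\mathrm{const}\cdot\det[f^{k_1}_{i-1}(x^{(k_1)}_j)]\cdot\det[f^{k_m}_{i-1}(x^{(k_m)}_j)]\cdot\prod_{a=1}^{m-1}\det[v_{k_a,k_{a+1}}(x^{(k_a)}_i,x^{(k_{a+1})}_j)].
\]
Before invoking the general theorem, I would record the key consistency identity, which is a direct consequence of the Cauchy--Binet formula and orthonormality: $\sum_X\det[f^{t}_{i-1}(x_j)]\det[v_{t,t+1}(x_i,y_j)]=\prod_i c^{t}_{i-1}\cdot\det[f^{t+1}_{i-1}(y_j)]$. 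This both confirms that the prescribed one-dimensional distributions propagate correctly and supplies the main algebraic lemma used below.

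Next I would apply the Eynard--Mehta / Borodin--Ferrari lemma, which says that any measure of the displayed multi-time form is determinantal, with kernel of shape $K=-v_{k,l}\mathbf 1_{k<l}+\Phi^\top G^{-1}\Psi$, where $G$ is a Gram-type matrix built from the boundary functions and the propagators $v_{k_a,k_{a+1}}$. To identify $G$ in our setting, I would iterate the basic identity $\sum_y v_{t,t+1}(x,y)f_m^{t+1}(y)=c_m^{t}f_m^{t}(x)$ to obtain $\sum_y v_{k,l}(x,y)f_m^l(y)=c_m^{k,l}f_m^k(x)$. Combined with orthonormality of each system $\{f_n^t\}$, this makes the relevant Gram matrix diagonal with entries $c_i^{k_1,k_m}$, so the quadratic form $\Phi^\top G^{-1}\Psi$ collapses to the single sum $\sum_{i=0}^{N-1}(c_i^{l,k})^{-1}f_i^k(x)f_i^l(y)$ asserted in the statement.

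The main obstacle will be handling the $k<l$ case correctly. There the stated formula is $-\sum_{i\ge N}c_i^{k,l}f_i^k(x)f_i^l(y)$, which I would obtain by observing that $v_{k,l}(x,y)=\sum_{i\ge 0}c_i^{k,l}f_i^k(x)f_i^l(y)$ (by iterating the definition and using completeness of the orthonormal basis inside the kernel's support), and then subtracting the $i<N$ portion supplied by the $\Phi^\top G^{-1}\Psi$ piece from the full $i\ge 0$ expansion of $v_{k,l}$. A secondary point, which is bookkeeping rather than mathematics, is tracking the constants $\prod_n c_n^{k_a,k_{a+1}}$ and the $N!$ factors so that the measure is normalized and the kernel is well-defined; this is routine given the consistency identity above. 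I would simply cite the Eynard--Mehta theorem in the form found in \cite{BO}, Section 7.4, for the generic determinantal reduction, and present only the specialization to the orthonormal setting in detail.
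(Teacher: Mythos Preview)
Your proposal is correct, and in fact goes further than the paper does: the paper does not prove this proposition at all but simply states it as a known variant of the Eynard--Mehta theorem, citing \cite{EM} and \cite[Section 7.4]{BO}. Your sketch---telescoping the transition probabilities to the canonical multi-time product form, invoking Cauchy--Binet and orthonormality to diagonalize the Gram matrix, and splitting the $k<l$ case by subtracting the $i<N$ piece from the full expansion $v_{k,l}=\sum_{i\ge 0}c_i^{k,l}f_i^k f_i^l$---is exactly the argument one finds in those references, so you are aligned with the paper's (implicit) approach.
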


\begin{theorem}
 The Markov process $X(t)$ meets the assumptions of Proposition
 \ref{proposition_EM}.
\end{theorem}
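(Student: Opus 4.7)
The plan is to match the three ingredients required by Proposition \ref{proposition_EM} with quantities already produced in Sections \ref{Section_weight_sums}--\ref{Section_distributions} and in Proposition \ref{Proposition_det_form_transitional_probabilities}, using the $q$-Racah orthogonal polynomials as the natural candidates for $\{f_n^t\}$. Concretely, let $R_n^{S,t}(\mu_{t,S}(x))$ denote the $q$-Racah polynomials with the parameter assignment prescribed in \eqref{QRacah_case1}--\eqref{QRacah_case4}, and set
\[
 f_n^t(x)=\frac{R_n^{S,t}(\mu_{t,S}(x))\sqrt{w_{t,S}(x)}}{\|R_n^{S,t}\|_{w_{t,S}}},
\]
with $L$ equal to the appropriate right endpoint of $\mathfrak X_{N,T}^{S,t}$. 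Orthonormality in $\ell_2(\{0,1,\dots,L\})$ is precisely the orthogonality of the $q$-Racah polynomials relative to the weight $w_{t,S}$ identified after Theorem \ref{Theorem_One_dim_distribution}.

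The first step is to rewrite the one-dimensional law from Theorem \ref{Theorem_One_dim_distribution} in the required form $(\det[f_{i-1}^t(x_j)])^2$. Since $R_n^{S,t}$ has degree $n$ in the variable $\mu_{t,S}(x)$, elementary row operations give
\[
 \det[R_{i-1}^{S,t}(\mu_{t,S}(x_j))]_{i,j=1}^N = c_N^t\prod_{i<j}\bigl(\mu_{t,S}(x_j)-\mu_{t,S}(x_i)\bigr),
\]
and combined with the $\prod_i w_{t,S}(x_i)$ factor this reproduces the density in Theorem \ref{Theorem_One_dim_distribution} after fixing the normalization constant appropriately.

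The second and most involved step is to exhibit numbers $c_n^t$ and the decomposition $v_{t,t+1}(x,y)=\sum_{n\ge 0}c_n^t f_n^t(x)f_n^{t+1}(y)$ that makes the transition matrix take the form assumed in Proposition \ref{proposition_EM}. Proposition \ref{Proposition_det_form_transitional_probabilities} already displays $P^{S,t}_{t+}$ as
\[
 \text{const}\cdot\prod_{i<j}\frac{\mu_{t+1,S}(y_i)-\mu_{t+1,S}(y_j)}{\mu_{t,S}(x_i)-\mu_{t,S}(x_j)}\det[U^{S,t}_{t+}(x_i,y_j)],
\]
so after pulling the Vandermonde factors inside the two determinants we are reduced to verifying the single scalar identity $v_{t,t+1}(x,y)=U^{S,t}_{t+}(x,y)$ (up to a harmless diagonal rescaling by $\|R_n^{S,t}\|$). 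The key analytic input is a contiguous relation for $q$-Racah polynomials that shifts the parameters corresponding to $t\mapsto t+1$ by a two-term operator; since $U^{S,t}_{t+}$ is two-diagonal, such a relation expresses $U^{S,t}_{t+}$ as the bilinear kernel $\sum c_n^t R_n^{S,t}(\mu_{t,S}(x))R_n^{S,t+1}(\mu_{t+1,S}(y))w_{t,S}(x)$, and after normalizing by $\sqrt{w_{t,S}(x)w_{t+1,S}(y)}$ this is exactly the required expansion. This will be the main obstacle: producing the contiguous relation with explicit eigenvalues $c_n^t$, which one expects from the $q$-Askey scheme and which reduces in the Hahn limit to the identity used in \cite{BG}.

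Once these three verifications are complete, Proposition \ref{proposition_EM} applies directly to $X(t)$ with the orthonormal system $\{f_n^t\}$, the transition kernel $v_{t,t+1}$, and the eigenvalues $c_n^t$ identified above, yielding the theorem. As a sanity check one may compare with the Hahn degeneration of Section \ref{Section_pr_models}: setting $\kappa\to 0$, $q=1$ should reduce our $f_n^t$ to the Hahn orthonormal functions and $U^{S,t}_{t+}$ to the corresponding two-diagonal matrix used in \cite{BG}, recovering the known result.
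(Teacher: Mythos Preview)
Your outline matches the paper's proof essentially step for step: the orthonormal system is taken to be the normalized $q$-Racah functions $f_n^t(x)=\sqrt{w_{t,S}(x)}\,R_n^t(x)/\|R_n^t\|$, the one-dimensional law is rewritten via the Vandermonde determinant in $\mu_{t,S}$, and the transition kernel is handled through the determinantal form of Proposition \ref{Proposition_det_form_transitional_probabilities} together with a two-term contiguous relation for $q$-Racah polynomials.

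The only substantive gap is the one you flag yourself: you do not actually produce the contiguous relation or the constants $c_n^t$. In the paper this is the real work, carried out in Lemma \ref{Lemma_qRacah_relation} and Lemma \ref{Lemma_qRacah_relation_funct}. The relevant identity is the ${}_4\phi_3$ contiguous relation \eqref{phi_relation_1} (and its dual \eqref{phi_relation_2}), which under the four parameter correspondences \eqref{QRacah_case1}--\eqref{QRacah_case4} becomes a relation of the form
\[
 c_n^t f_n^{t+1}(y)=\sqrt{\tfrac{w_{t,S}(y-1)}{w_{t+1,S}(y)}}\,w_1(y-1)f_n^t(y-1)+\sqrt{\tfrac{w_{t,S}(y)}{w_{t+1,S}(y)}}\,w_0(y)f_n^t(y),
\]
with $c_n^t=\sqrt{(1-q^{-N-t+n})(1-q^{T+N-t-n-1})}$. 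Summing against $f_n^t(x)$ and using completeness gives $v_{t,t+1}=U^{S,t}_{t+}$ up to the weight-function conjugation, exactly as you anticipate. Until you write down and verify this identity (in each of the four parameter regimes), the proposal remains a plan rather than a proof.
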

The Markov process ${\cal P}_t$ is precisely our Markov process
$X(t)$. The orthonormal functions $f^t_n(x)$ are the normalized
$q$-Racah polynomials multiplied by the square root of their weight
function (see Section \ref{Section_distributions} for the definition
of $q$-Racah polynomials, their weight function, and the
correspondence between parameters of these polynomials and our
parameters $t,q,\kappa ,N,T,S$):
\begin{equation}
\label{EM_Orthonormal_system}
 f^t_n(x)=\sqrt{w_{t,S}(x)}\frac{R_n^t(x)}{\sqrt{(R_n^t,R_n^t)}}\,,
\end{equation}
where $(R_n^t,R_n^t)$ is the squared norm of the $q$-Racah
polynomials with respect to the weight function $w_{t,S}(x)$. This
norm can be obtained from the norm of the $q$-Racah polynomials
provided in $\cite{KS}$ ($w_{t,S}(x)$ differs from the weight
function of \cite{KS} by a factor not depending on $x$). The
explicit formula is a little bit different in the four cases of
correspondence between parameters of polynomials and
$t,q,\kappa,N,T,S$. For instance, in the case given by formula
\eqref{QRacah_case1}:
\begin{multline*}
(R_n^t,R_n^t)=\frac{(-1)^{t+S}(q^{-2N-T+2},k^{-2}q^{S-N};q)_{t+N-1}}{({\kappa^{-2}}{q^{-2N+1}},q^{S-T-N+1},\kappa^2
q^{-t-S+2},q^{-t-N+1};q)_{t+N-1} }\\
\times \frac{(1-q^{-T-2N})(q,q^{-T-N+t+1},{\kappa^{-2}}{q^{-2N+1}},q^{S-T-N};q)_n}
{(1-q^{-2N-T+2n+1})(q^{-S-N},q^{-2N-T+1},\kappa^2 q^{-T},q^{-t-N+1};q)_n }
\\
\times \frac{\kappa^{2n} q^{-n(S+t+1)}} {(q;q)_{T-S-t} (q^{-S-N+1};q)_{S+N-1} (\kappa^2
q^{-T+1};q)_{T+N-t}}.
\end{multline*}
However, this long formula is not important for us, since factors involving it always cancel out.
In particular, in the case given by \eqref{QRacah_case1} the quotient
$(R_n^{t+1},R_n^{t+1})/(R_n^t,R_n^t)$, that is crucial for us, is simply
$$
 \frac{(R_n^{t+1},R_n^{t+1})}{(R_n^t,R_n^t)}=-\frac{(1-q^{T+N-t-n-1})(1-q^{-t-N+n})}{(1-q^{-t-N})^2}\,.
$$

The constants $c_n^t$ are given by
\begin{equation}
\label{EM_numbers}
 c_n^t=\sqrt{(1-q^{-N-t+n})(1-q^{T+N-t-n-1})}.
\end{equation}
\begin{proof}

Theorem \ref{Theorem_One_dim_distribution} yields
$$
 {\rm Prob}\{X(t)=(x_1,x_2,\dots,x_N)\} = \frac1Z \prod_{i<j}(\mu_{t,S}(x_i)-\mu_{t,S}(x_j))^2 \prod_{i=1}^N
 w_{t,S}(x_i).
$$
On the other hand
$$
\det\left[f_{i-1}^t(x_j)\right]_{i,j=1,\dots
,N}=const\cdot\prod_{i=1}^N
 \sqrt{w_{t,S}(x_i)} \det\left[R_{i-1}(x_j)\right]_{i,j=1,\dots ,N}.
$$
The last determinant is a Vandermonde determinant in variables $\mu_{t,S}(x_j)$, hence
$$
\Bigl(\det\left[f_{i-1}^t(x_j)\right]_{i,j=1,\dots ,N}\Bigr)^2=\frac1{Z'}
\prod_{i<j}(\mu_{t,S}(x_i)-\mu_{t,S}(x_j))^2 \prod_{i=1}^N
 w_{t,S}(x_i).
$$
Coincidence of the constants ($Z=Z'$) follows from the fact that the
left-hand side in the last equality defines a probability
distribution.

Thus, the one-dimensional distributions of our process have the required form.

Next, we need the following standard facts.

\begin{lemma} The following relation for the basic hypergeometric function holds:
\label{Lemma_qRacah_relation}
\begin{multline}
\label{phi_relation_1}
(c-w)(1-d)_4\phi_3\left(\genfrac{}{}{0pt}{}{a,b,c,qd}{u,v,qw}\biggr|q;q\right)+(w-d)(1-c)_4\phi_3\left(\genfrac{}{}{0pt}{}{a,b,qc,d}{u,v,qw}\biggr|q;q\right)
\\=(c-d)(1-w)_4\phi_3\left(\genfrac{}{}{0pt}{}{a,b,c,d}{u,v,w}\biggr|q;q\right).
\end{multline}
In terms of $q$-Racah polynomials, this relation can be rewritten  as
\begin{multline}
 \label{qRacah_relation_case1}
 (q^{-x}-q\gamma)(1-\gamma\delta q^{x+1}) R_n\biggl(\mu(x); \alpha, \beta, q\gamma, \delta \mid q\biggr)
 \\+ (q\gamma-\gamma\delta q^{x+1})(1-q^{-x})R_n\biggl(\mu(x-1); \alpha, \beta, q\gamma, \delta \mid q\biggr)
 \\= (q^{-x}-\gamma\delta q^{x+1})(1-q\gamma)R_n\biggl(\mu(x); \alpha, \beta, \gamma, \delta \mid q\biggr)
\end{multline}
or as
\begin{multline}
 \label{qRacah_relation_case2}
 (q^{-x}-q\alpha)(1-\gamma\delta q^{x+1}) R_n\biggl(\mu(x); q\alpha, q^{-1}\beta, \gamma, q\delta \mid q\biggr)
 \\+ (q\alpha-\gamma\delta q^{x+1})(1-q^{-x})R_n\biggl(\mu(x-1); q\alpha, q^{-1}\beta, \gamma, q\delta \mid q\biggr)
 \\= (q^{-x}-\gamma\delta q^{x+1})(1-q\alpha)R_n\biggl(\mu(x); \alpha, \beta, \gamma, \delta \mid
 q\biggr),
\end{multline}

where $R_n$ is given by (\ref{qRacah}).

\smallskip

For the balanced terminating $_4\phi_3\left(\genfrac{}{}{0pt}{}{a,b,c,d}{u,v,w}\biggr|q;q\right)$
(i.e., one of $a$, $b$, $c$ or $d$ equals $q^{-n}$ and $a\cdot b\cdot c\cdot d=u\cdot v\cdot w$)
we also have the following relation:

\begin{multline}
\label{phi_relation_2}
(c-u)(1-vc^{-1})(wq^{-1}-1)_4\phi_3\left(\genfrac{}{}{0pt}{}{a,b,q^{-1}c,d}{u,v,q^{-1}w}\biggr|q;q\right)\\
+(u-d)(1-vd^{-1})(wq^{-1}-1)_4\phi_3\left(\genfrac{}{}{0pt}{}{a,b,c,q^{-1}d}{u,v,q^{-1}w}\biggr|q;q\right)
\\=(c-d)(wq^{-1}-b)(1-aqw^{-1})_4\phi_3\left(\genfrac{}{}{0pt}{}{a,b,c,d}{u,v,w}\biggr|q;q\right).
\end{multline}

In terms of $q$-Racah polynomials, the last relation can \blue{be} rewritten as

\begin{multline}
 \label{qRacah_relation_case3}
 (q^{-x}-q\gamma)(1-\beta\delta q^{x+1})(\alpha -1) R_n\biggl(\mu(x+1); q^{-1}\alpha, q\beta, \gamma, q^{-1}\delta \mid q\biggr)
 \\+ (q\gamma-\gamma\delta q^{x+1})(1-q^{-x}\beta\gamma^{-1})(\alpha -1)R_n\biggl(\mu(x); q^{-1}\alpha, q\beta, \gamma, q^{-1}\delta \mid q\biggr)
 \\= (q^{-x}-\gamma\delta q^{x+1})(\alpha-\alpha\beta q^{n+1})(1-\alpha^{-1}q^{-n})R_n\biggl(\mu(x); \alpha, \beta, \gamma, \delta \mid q\biggr)
\end{multline}
or as
\begin{multline}
 \label{qRacah_relation_case4}
 (q^{-x}-q\alpha)(1-\beta\delta q^{x+1})(\gamma -1) R_n\biggl(\mu(x+1); \alpha, \beta, q^{-1}\gamma, \delta \mid q\biggr)
 \\+ (q\alpha-\gamma\delta q^{x+1})(1-q^{-x}\beta\gamma^{-1})(\gamma -1)R_n\biggl(\mu(x); \alpha, \beta, q^{-1}\gamma, \delta \mid q\biggr)
 \\= (q^{-x}-\gamma\delta q^{x+1})(\gamma-\alpha\beta q^{n+1})(1-\gamma^{-1}q^{-n})R_n\biggl(\mu(x); \alpha, \beta, \gamma, \delta \mid
 q\biggr),
\end{multline}

\end{lemma}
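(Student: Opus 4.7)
The plan is to prove the two basic hypergeometric identities \eqref{phi_relation_1} and \eqref{phi_relation_2} at the level of the $_4\phi_3$ series, and then to derive each of the four $q$-Racah polynomial forms \eqref{qRacah_relation_case1}--\eqref{qRacah_relation_case4} by direct substitution of the $_4\phi_3$ representation of $R_n$. Recalling that
\[R_n(\mu(x);\alpha,\beta,\gamma,\delta\mid q)=\,_4\phi_3\left(\genfrac{}{}{0pt}{}{q^{-n},\,\alpha\beta q^{n+1},\,q^{-x},\,\gamma\delta q^{x+1}}{\alpha q,\,\beta\delta q,\,\gamma q}\biggr|q;q\right),\]
the shifts $c\mapsto qc$, $d\mapsto qd$, $w\mapsto qw$ in \eqref{phi_relation_1} translate, on the $q$-Racah side, into combinations of $x\mapsto x-1$ with $\gamma\mapsto q\gamma$, producing \eqref{qRacah_relation_case1}; the variant \eqref{qRacah_relation_case2} arises from the same \eqref{phi_relation_1} after relabeling the lower parameters $(u,v,w)$ (which enter \eqref{phi_relation_1} symmetrically), realizing the shift $\alpha\mapsto q\alpha,\,\beta\mapsto q^{-1}\beta,\,\delta\mapsto q\delta$. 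The cases \eqref{qRacah_relation_case3} and \eqref{qRacah_relation_case4} come from \eqref{phi_relation_2} in the same way, with the $q^{-1}$-shifts $c,d,w\mapsto q^{-1}c,q^{-1}d,q^{-1}w$ realizing $x\mapsto x+1$ combined with the appropriate $q^{-1}$-shift of $\alpha$ or $\gamma$.

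For identity \eqref{phi_relation_1} I would give a direct term-by-term argument. Expanding each $_4\phi_3$ as $\sum_{k\ge 0}(a,b,c,d;q)_k q^k/(q,u,v,w;q)_k$ and factoring out the common piece $(a,b;q)_k q^k/(q,u,v;q)_k$, the $k$-th term identity reads
\begin{equation*}
(c-w)(1-d)\frac{(c;q)_k(qd;q)_k}{(qw;q)_k}+(w-d)(1-c)\frac{(qc;q)_k(d;q)_k}{(qw;q)_k}=(c-d)(1-w)\frac{(c;q)_k(d;q)_k}{(w;q)_k}.
\end{equation*}
Using the elementary identity $(qx;q)_k/(x;q)_k=(1-q^k x)/(1-x)$ for $x\in\{c,d,w\}$, this collapses to the polynomial identity
\begin{equation*}
(c-w)(1-q^k d)+(w-d)(1-q^k c)=(c-d)(1-q^k w),
\end{equation*}
which is immediate after expansion. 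Note that neither the balanced condition nor terminating is used here, so \eqref{phi_relation_1} is a contiguous relation for the general $_4\phi_3$.

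For identity \eqref{phi_relation_2} one first checks that all three $_4\phi_3$'s on both sides remain balanced and terminating under the given shifts (indeed $a\cdot b\cdot(q^{-1}c)\cdot d=q^{-1}(abcd)=q^{-1}uvw=uv\cdot(q^{-1}w)$, and $a=q^{-n}$ is unshifted). Unlike in \eqref{phi_relation_1}, termwise comparison now fails: after clearing common factors one obtains a relation between the shifted $q$-Pochhammer ratios that does not vanish for each $k$ separately, but rather a polynomial identity in $\zeta:=q^{k-1}$ of degree at most two whose validity genuinely requires the balanced condition $abcd=uvw$. The cleanest route I would take is to apply Sears' transformation for balanced terminating $_4\phi_3$ simultaneously to each of the three series in \eqref{phi_relation_2}, converting them to $_4\phi_3$'s of the form appearing in \eqref{phi_relation_1}; reversing Sears then delivers \eqref{phi_relation_2}. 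The main obstacle is precisely this second identity: locating the correct simultaneous Sears transformation (or, equivalently, carrying out the polynomial verification in $\zeta$ with the balanced relation used in the right place) is the delicate step, whereas \eqref{phi_relation_1} and the translations \eqref{qRacah_relation_case1}--\eqref{qRacah_relation_case4} are mechanical once the two master identities are established.
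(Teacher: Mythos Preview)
Your treatment of \eqref{phi_relation_1} and the translations \eqref{qRacah_relation_case1}--\eqref{qRacah_relation_case2} is correct and matches the paper: termwise expansion reduces \eqref{phi_relation_1} to the linear identity $(c-w)(1-q^kd)+(w-d)(1-q^kc)=(c-d)(1-q^kw)$, and the two $q$-Racah forms are direct substitutions.

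The gap is in \eqref{phi_relation_2}. You correctly observe that termwise comparison no longer works and propose to route through Sears' transformation, but you do not carry this out and explicitly flag it as the ``delicate step''. As stated, this is not a proof but a plan with an acknowledged hole. The paper sidesteps this difficulty entirely by reversing the logical order: it first obtains \eqref{qRacah_relation_case3}--\eqref{qRacah_relation_case4} as the \emph{dual} relations to \eqref{qRacah_relation_case1}--\eqref{qRacah_relation_case2}, using that the $q$-Racah polynomials form an orthogonal basis in the relevant weighted $\ell_2$ space. Concretely, relations \eqref{qRacah_relation_case1}--\eqref{qRacah_relation_case2} assert that a certain two-term operator in $x$ acts on $f_n^t$ to give $c_n^t f_n^{t+1}$; taking the adjoint of this operator with respect to the two weighted inner products yields a two-term operator sending $f_n^{t+1}$ back to $c_n^t f_n^t$, and writing that adjoint out explicitly is precisely \eqref{qRacah_relation_case3}--\eqref{qRacah_relation_case4}. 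The $_4\phi_3$ identity \eqref{phi_relation_2} then follows by rewriting these $q$-Racah relations back in hypergeometric notation.

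So the paper's route is: prove \eqref{phi_relation_1} termwise $\Rightarrow$ rewrite as \eqref{qRacah_relation_case1}--\eqref{qRacah_relation_case2} $\Rightarrow$ dualize via orthogonality to get \eqref{qRacah_relation_case3}--\eqref{qRacah_relation_case4} $\Rightarrow$ read off \eqref{phi_relation_2}. This avoids any direct manipulation of \eqref{phi_relation_2} and uses only the orthogonality of the $q$-Racah system, which is already available. Your Sears approach could presumably be made to work, but the duality argument is both shorter and more conceptual, and it closes the gap you identified.
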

\begin{proof}
To prove the first relation for the basic hypergeometric function we expand $_4\phi_3$ into series
in $q$ and perform straightforward computations in every term.

To obtain \eqref{qRacah_relation_case1} and \eqref{qRacah_relation_case2} we simply rewrite the
relation \eqref{phi_relation_1} in terms of $q$-Racah polynomials using their definition
\eqref{qRacah}.

Next, we observe that $q$-Racah polynomials form an orthogonal
basis in the corresponding $l_2$ space. Consequently, we can write
dual relations for \eqref{qRacah_relation_case1} and
\eqref{qRacah_relation_case2} and these are precisely
\eqref{qRacah_relation_case3} and \eqref{qRacah_relation_case4}. It
is easily seen that two last relations are equivalent to just one
relation for basic hypergeometric function \eqref{phi_relation_2}.
\end{proof}

Using the last lemma we obtain the following one:
\begin{lemma}
\label{Lemma_qRacah_relation_funct}
\begin{multline*}
v_{t,t+1}(x,y)=\sum_{n\ge 0}c_n^t
f^t_n(x)f^{t+1}_n(y)=\sqrt\frac{w_{t,S}(x)}{w_{t+1,S}(y)}\left(\delta_{x+1}^y
 w_1(x)+\delta_{x}^{y} w_0(x)\right),
\end{multline*}
where
$$
 w_0(x)=-(1-q^{x+T-t-S})\frac{1- \kappa ^2q^{x+N-t}} {1- \kappa ^2
 q^{2x-t-S+1}},
$$

$$
 w_1(x)=q^{T+N-1-t}(1-q^{x-S-N+1}) \frac{1- \kappa ^2 q^{x-T+1}} {1- \kappa ^2
 q^{2x-t-S+1}},
$$
while $w_{t,S}(x)$ stands for the weight function corresponding to the parameters $t,q,\kappa
,N,T,S$ (see Section \ref{Section_distributions} and Theorem \ref{Theorem_One_dim_distribution} for
details).
\end{lemma}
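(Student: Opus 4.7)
The plan is to combine the contiguous relation from Lemma \ref{Lemma_qRacah_relation}, the explicit ratio of squared norms quoted just above the statement, and the dual orthogonality (completeness) of the $q$-Racah polynomials to collapse the spectral sum to a two-diagonal kernel.

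First, I would write
$$v_{t,t+1}(x,y)=\sqrt{w_{t,S}(x)\,w_{t+1,S}(y)}\,\sum_{n\ge0}\frac{c_n^t\,R_n^t(x)\,R_n^{t+1}(y)}{\sqrt{(R_n^t,R_n^t)\,(R_n^{t+1},R_n^{t+1})}}.$$
Now the parameters attached to $R_n^{t+1}$ differ from those of $R_n^t$ only by a single shift of one of $\alpha,\beta,\gamma,\delta$ (depending on which of \eqref{QRacah_case1}--\eqref{QRacah_case4} we sit in), and this is exactly the kind of shift appearing in \eqref{qRacah_relation_case1}--\eqref{qRacah_relation_case4}. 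Using the appropriate one of these four relations, I can express $R_n^{t+1}(y)$ as $A(y)\,R_n^t(y)+B(y)\,R_n^t(y-1)$ for explicit functions $A,B$ of $y$ that are independent of $n$.

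The second key point is that the constants $c_n^t$ from \eqref{EM_numbers} were designed precisely so that
$$\bigl(c_n^t\bigr)^2\,\frac{(R_n^t,R_n^t)}{(R_n^{t+1},R_n^{t+1})}=-(1-q^{-t-N})^2$$
is independent of $n$, using the explicit ratio
$$\frac{(R_n^{t+1},R_n^{t+1})}{(R_n^t,R_n^t)}=-\frac{(1-q^{T+N-t-n-1})(1-q^{-t-N+n})}{(1-q^{-t-N})^2}$$
quoted before the statement. So after substituting the contiguous relation, the $n$-dependence of the coefficient in front of $R_n^t(x)R_n^t(y)$ (and similarly $R_n^t(x)R_n^t(y-1)$) collapses to a constant times $1/(R_n^t,R_n^t)$.

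Then I invoke the dual orthogonality relation
$$\sum_{n\ge0}\frac{R_n^t(x)\,R_n^t(y)}{(R_n^t,R_n^t)}=\frac{\delta_{x,y}}{w_{t,S}(x)},$$
which expresses completeness of the orthonormal system $\{f_n^t\}$. Summing termwise against $A(y)R_n^t(y)+B(y)R_n^t(y-1)$ produces a combination of $\delta_{x,y}$ and $\delta_{x,y-1}$, with prefactors built from $A(y),B(y)$, the constant $\sqrt{-(1-q^{-t-N})^2}$, and the square root $\sqrt{w_{t,S}(x)/w_{t+1,S}(y)}$. At $y=x$ the prefactor becomes $(A(x)/w_{t,S}(x))\sqrt{w_{t,S}(x)/w_{t+1,S}(x)}$, and similarly for $y=x+1$.

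The last and most laborious step is to verify algebraically that these two surviving prefactors agree with $w_0(x)$ and $w_1(x)$ respectively, using the explicit product form of $w_{t,S}(x)$ from Theorem \ref{Theorem_One_dim_distribution}. The main obstacle is precisely this final bookkeeping: one must do the case analysis \eqref{QRacah_case1}--\eqref{QRacah_case4} separately, since each case uses a different instance of Lemma \ref{Lemma_qRacah_relation}, and in each case simplify the ratio $w_{t,S}(x)/w_{t+1,S}(x)$ (respectively $w_{t,S}(x)/w_{t+1,S}(x+1)$) using many $q$-Pochhammer identities; the sign coming from the square root of $-(1-q^{-t-N})^2$ must also be tracked carefully and absorbed into the overall sign of $w_0$, $w_1$. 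The four cases are mutually symmetric under the standard symmetries of the $q$-Racah family, so in practice one case suffices and the others follow by the same mechanism.
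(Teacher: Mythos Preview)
Your approach is essentially the same as the paper's: both use the contiguous relations of Lemma~\ref{Lemma_qRacah_relation} together with the completeness of $\{f_n^t\}$, just in a different order. The paper first absorbs all the bookkeeping into the single-$n$ identity
\[
c_n^t\,f^{t+1}_n(y)=\sqrt{\tfrac{w_{t,S}(y-1)}{w_{t+1,S}(y)}}\,w_1(y-1)\,f^t_n(y-1)+\sqrt{\tfrac{w_{t,S}(y)}{w_{t+1,S}(y)}}\,w_0(y)\,f^t_n(y),
\]
and only then multiplies by $f_n^t(x)$ and sums; you instead write out the sum first and postpone the algebra to the end. Both lead to the same computation.

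One small imprecision: your claim that $R_n^{t+1}(y)=A(y)R_n^t(y)+B(y)R_n^t(y-1)$ with $A,B$ \emph{independent of $n$} is literally correct only in cases \eqref{QRacah_case1}--\eqref{QRacah_case2}. In cases \eqref{QRacah_case3}--\eqref{QRacah_case4} the relevant relations \eqref{qRacah_relation_case3}--\eqref{qRacah_relation_case4} carry the $n$-dependent factor $(\alpha-\alpha\beta q^{n+1})(1-\alpha^{-1}q^{-n})$ (resp.\ with $\gamma$), and the norm ratio $(R_n^{t+1},R_n^{t+1})/(R_n^t,R_n^t)$ is correspondingly different from the case-1 formula you quote. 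The extra $n$-dependence again cancels against $c_n^t$ and the norm ratio, so the mechanism is the same, but you should not assert $n$-independence of $A,B$ uniformly across all four cases.
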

\begin{proof}

First, we substitute the parameters of $q$-Racah polynomials given by formulas
\eqref{QRacah_case1}-\eqref{QRacah_case4} into the statement of Lemma \ref{Lemma_qRacah_relation}.

We use \eqref{qRacah_relation_case1}, \eqref{qRacah_relation_case2}
in  cases \eqref{QRacah_case1}, \eqref{QRacah_case2},  and we use
\eqref{qRacah_relation_case3}, \eqref{qRacah_relation_case4} in
cases \eqref{QRacah_case3}, \eqref{QRacah_case4}.

In all 4 cases we rewrite the corresponding relation in terms of orthogonal functions $f_n^t(x)$
and get the following
\begin{equation}
 c_n^t f^{t+1}_n(y)=\sqrt\frac{w_{t,S}(y-1)}{w_{t+1,S}(y)}f^t_n(y-1)
 w_1(y-1)+\sqrt\frac{w_{t,S}(y)}{w_{t+1,S}(y)}f^t_n(y) w_0(y)
\end{equation}

 Multiply the last relation by $f^t_n(x)$ and sum over all meaningful $n$.

Since functions $f^t_n(y)$ form an orthonormal basis in the
corresponding $l_2$ space,
$$
 \sum_n f^{t}_n(x)f^{t}_n(y)=\delta_x^y,
$$
and the needed relation follows.
\end{proof}

Proposition \ref{Proposition_det_form_transitional_probabilities}
implies that the transition probabilities $P^{S,t}_{t+}(X,Y)$ have a
determinantal form. The last lemma yields that this form is exactly
the one required for the application of Proposition
\ref{proposition_EM}. Thus, the theorem is proved.
\end{proof}

Applying Proposition \ref{proposition_EM} for the process $X(t)$ we
obtain the following statement.
\begin{theorem}
\label{Theorem_Cor_functions}
$$
 \rho_n(t_1,x_1;\dots;t_n,x_n)= \det\left[K(k_i,x_i;k_j,x_j)\right]_{i,j=1,\dots ,n},
$$ where
\begin{gather*}
K(k,x;l,y)=\sum_{i=0}^{N-1}\frac{1}{c_i^{l,k}}f_i^k(x)f_i^l(y)
 ,\, k\ge l; \\ K(k,x;l,y)=-\sum_{i\ge N}c_i^{k,l} f_i^k(x)f_i^l(y),
 \,k<l; \\ c_i^{k,k}=1,\, c_i^{k,l}=c_i^k\cdot c_i^{k+1}\cdot\dots\cdot
 c_i^{l-1}
\end{gather*}
and functions $f_i^k(x)$ and numbers $c_i^t$ are given by the
formulas \eqref{EM_Orthonormal_system} and \eqref{EM_numbers}.
\end{theorem}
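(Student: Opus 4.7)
The plan is to apply Proposition \ref{proposition_EM} directly to the Markov process $X(t)$, since the theorem immediately preceding the target statement has already verified all of its hypotheses. Specifically, that theorem identifies the orthonormal system $\{f_n^t\}$ from \eqref{EM_Orthonormal_system} and the constants $c_n^t$ from \eqref{EM_numbers}, and establishes both that the one-dimensional distribution of $X(t)$ matches $(\det[f_{i-1}^t(x_j)])^2$ (via the Vandermonde-in-$\mu_{t,S}$ trick applied to Theorem \ref{Theorem_One_dim_distribution}) and that the transition probabilities have the required Eynard-Mehta determinantal form. Once this is in hand, the conclusion of Proposition \ref{proposition_EM} is exactly the formula for $\rho_n$ stated in the target theorem, with kernel $K(k,x;l,y)$ assembled from the $f_n^t$ and the cumulative products $c_i^{l,k}$ of the $c_n^t$.

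In more detail, I would trace through three checks to make sure nothing is lost in translation. First, Lemma \ref{Lemma_qRacah_relation_funct} identifies the Eynard-Mehta kernel $v_{t,t+1}(x,y)=\sum_n c_n^t f_n^t(x)f_n^{t+1}(y)$ with a gauge transform of the two-diagonal operator $U_{t+}^{S,t}$, and combining this with the determinantal representation of $P_{t+}^{S,t}$ from Proposition \ref{Proposition_det_form_transitional_probabilities} produces precisely the Eynard-Mehta transition kernel $\det[v_{t,t+1}(x_i,y_j)]\det[f_{i-1}^{t+1}(y_j)]/(\det[f_{i-1}^t(x_j)]\prod_n c_n^t)$. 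Second, I would verify that the $c_n^t$ emerging from the three-term relation in Lemma \ref{Lemma_qRacah_relation_funct} agrees with the explicit expression in \eqref{EM_numbers}; this boils down to checking the squared-norm ratio $(R_n^{t+1},R_n^{t+1})/(R_n^t,R_n^t)$ quoted in the text, which up to sign equals $(c_n^t)^2$. Third, I would confirm that the branches of the square roots in \eqref{EM_Orthonormal_system} and \eqref{EM_numbers} are consistent throughout, so that the identifications above hold without spurious signs.

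The only real obstacle has already been absorbed into the preceding theorem and Lemma \ref{Lemma_qRacah_relation_funct}, namely, extracting the correct three-term (contiguous-parameter) relations for the $q$-Racah polynomials in Lemma \ref{Lemma_qRacah_relation} and recasting them as the identity $v_{t,t+1}=$ (gauge)$\cdot U_{t+}^{S,t}$. A secondary bookkeeping issue is that the parameter correspondence between $q$-Racah polynomials and the hexagon parameters splits into the four regimes \eqref{QRacah_case1}--\eqref{QRacah_case4}, so the identification of $v_{t,t+1}$ has to be checked in each; this is handled uniformly by the four versions \eqref{qRacah_relation_case1}--\eqref{qRacah_relation_case4} of the three-term relation. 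With these building blocks in place, the proof of the final theorem is a one-line citation of Proposition \ref{proposition_EM}.
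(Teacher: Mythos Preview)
Your proposal is correct and matches the paper's approach exactly: the paper simply states that Theorem \ref{Theorem_Cor_functions} follows by applying Proposition \ref{proposition_EM} to $X(t)$, all hypotheses having been verified in the immediately preceding theorem (via Theorem \ref{Theorem_One_dim_distribution}, Lemma \ref{Lemma_qRacah_relation_funct}, and Proposition \ref{Proposition_det_form_transitional_probabilities}). Your additional bookkeeping checks on norm ratios and sign consistency are reasonable elaborations but are already subsumed in the paper's proof of that preceding theorem.
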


\subsection{Inverse Kasteleyn matrix}

Let us present another way to view the correlation kernel derived in the previous section.

Recall that we deal with lozenge tilings of a hexagon. Divide every
lozenge into two unit triangles and color the resulting triangles
into black and white (west triangle is black). In this way a tiling
turns into a perfect matching of the part of the dual hexagonal
lattice that fits in our hexagon. Correlation functions of the
perfect matchings can be computed using Kasteleyn's theorem (see
\cite{Ka}). Let us describe it.

Associate to every triangle the midpoint of its vertical side. Note
that in this way both black and white triangles can be parameterized
by the points of the two-dimensional lattice. Thus, we can use our
usual coordinates $(t,x)$ for the triangles.

\begin{center}

\scalebox{0.7}{\includegraphics{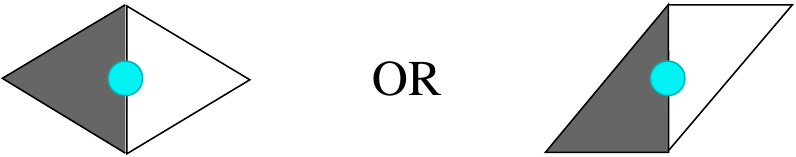}}

\end{center}

The Kasteleyn matrix ${\rm Kast} (t,x;r,y)$ is a weighted adjacency matrix. Here $(t,x)$ stand for
the coordinates of a white triangle and $(r,y)$ stand for the coordinates of a black triangle. In
our case,
$$
 {\rm Kast} (t,x;r,y)=\begin{cases}
  \kappa q^{-S/2+x-t/2+1/2}-\dfrac1{\kappa q^{-S/2+x-t/2+1/2}},&  (t,x)=(r,y),\\
  1,& (t,x)=(r-1,y-1), \\
  1,& (t,x)=(r-1,y),\\
  0,& \text{otherwise.}
  \end{cases}
$$

%
%
%
%

Set
\begin{multline}
\label{eq_conjug_factor}
g(t,x)=\frac{1}{\sqrt{w_{t,S}(x)}}\frac{(-1)^{t+x} \kappa^{-t}q^{x(T+N-t-1)+t(S/2-1/2)+t(t+1)/4}
(1- \kappa ^2q^{2x-t-S+1})}{ (q^{-1};q^{-1})_{S+N-1-x} (q;q)_{T-S+x-t} (\kappa^2
q^{x-T+1};q)_{T+N-t}}.
\end{multline}

\begin{theorem}
 Consider $n$ lozenges enumerated by pairs of triangles $((t_i,x_i),(r_i,y_i))$. The
 probability that a random tiling contains these lozenges equals
 $$
 \prod_{i=1}^n {\rm Kast} (t_i,x_i;r_i,y_i) \cdot \det \left[{K^{ext}}
 (r_i,y_i;t_j,x_j)\right]_{i,j=1,\dots,n},
 $$
 where
$$
 K^{ext}(r,y;t,x)=\frac {g(r,y)}{g(t,x)} \left(\delta_{(r,y)}^{(t,x)}- K(r,y;t,x)\right),
$$
the function $g$ is given by \eqref{eq_conjug_factor}, and $K(r,y;t,x)$ is given in Theorem
 \ref{Theorem_Cor_functions}.
\end{theorem}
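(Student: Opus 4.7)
The plan is to invoke Kasteleyn's theorem for the bipartite dimer interpretation of lozenge tilings of the hexagon: the probability that a random perfect matching contains a specified set of $n$ edges $e_1,\dots,e_n$ equals $\prod_i {\rm Kast}(e_i)$ times the determinant of the inverse Kasteleyn matrix evaluated at the endpoints of the $e_i$. The theorem thus reduces to identifying $K^{ext}(r,y;t,x)$ with ${\rm Kast}^{-1}(r,y;t,x)$. Since conjugation by a diagonal gauge leaves the determinant unchanged, it suffices to verify
\[
 \sum_{(r,y)} {\rm Kast}(t,x;r,y)\, g(r,y)\, \bigl(\delta_{(r,y),(t',x')} - K(r,y;t',x')\bigr) \;=\; g(t',x')\,\delta_{(t,x),(t',x')}.
\]

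Because the Kasteleyn matrix has only three non-zero entries per row (a diagonal entry with the horizontal-lozenge weight $w(t,x)=\kappa q^{x-t/2-S/2+1/2}-\kappa^{-1}q^{-x+t/2+S/2-1/2}$ and two unit entries corresponding to the other two lozenge orientations), the left-hand side collapses to a three-term linear relation in $K(\,\cdot\,;t',x')$ connecting its values at $(t,x)$, $(t+1,x)$ and $(t+1,x+1)$. The key input is Lemma \ref{Lemma_qRacah_relation_funct}, which provides exactly this kind of recurrence for the orthonormal $q$-Racah functions $f_n^t$:
\[
 c_n^t f_n^{t+1}(y) = \sqrt{\tfrac{w_{t,S}(y-1)}{w_{t+1,S}(y)}}\, w_1(y-1)\, f_n^t(y-1) + \sqrt{\tfrac{w_{t,S}(y)}{w_{t+1,S}(y)}}\, w_0(y)\, f_n^t(y).
\]
Substituting the series expansion of $K$ from Theorem \ref{Theorem_Cor_functions} and applying this recurrence to convert $f_n^{t+1}$ into $f_n^t$ reduces the identity to an algebraic verification on the scalar prefactors. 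The definition of $g$ in \eqref{eq_conjug_factor} is engineered precisely so that the ratios $g(t+1,x)/g(t,x)$ and $g(t+1,x+1)/g(t,x)$ produce the coefficients $w_0$, $w_1$ of the Lemma while absorbing the $\sqrt{w_{t,S}}$ factors and taking care of the sign choices implicit in ${\rm Kast}$.

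The subtle point, and the main obstacle, is the Kronecker $\delta$ on the right-hand side. The kernel $K$ is defined piecewise, as a sum over $0\le n<N$ when $k\ge l$ and as a different sum over $n\ge N$ when $k<l$; applying the $t\mapsto t+1$ recurrence to each piece shifts the summation range across the cutoff $n=N$, and the two pieces must recombine via the completeness identity $\sum_{n\ge 0} f_n^t(x)f_n^t(y)=\delta_{x,y}$ to produce the required $\delta_{(t,x),(t',x')}$. Handling this cutoff together with the constants $c_n^t$, and the boundary behavior at the edge of the hexagon, where some terms of the three-term identity drop out because the corresponding Kasteleyn entries fall outside the graph, is the heart of the computation; both phenomena are accommodated by the $q$-Pochhammer factors built into $g$.
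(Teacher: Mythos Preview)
Your proposal is correct and follows essentially the same route as the paper: invoke Kasteleyn's theorem, reduce to showing that the conjugated complementary kernel $g(\delta-K)g^{-1}$ inverts ${\rm Kast}$, and verify the resulting three-term identity using the recurrence of Lemma~\ref{Lemma_qRacah_relation_funct}, with the gauge $g$ chosen to match coefficients. The paper organizes the verification by cases on the relative position of the time indices ($t<r-1$, $t>r$, $t=r$, $t=r-1$); your remark that the delta arises when the piecewise branches of $K$ meet via completeness is exactly what happens in the borderline cases $t=r$ and $t=r-1$, though note that the recurrence does not shift the $n$-summation range itself---rather, adjacent time indices force some of the three terms into the $k\ge l$ branch and others into the $k<l$ branch, and it is reconciling these via $\sum_{n\ge 0}f_n^t(x)f_n^t(y)=\delta_{x,y}$ that produces the Kronecker delta.
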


\begin{proof}

Kasteleyn's theorem states that the probability to find lozenges
$$
((t_1,x_1);(r_1,y_1)),\dots,((t_n,x_n);(r_n,y_n))
$$
can be expressed via the inverse of the Kasteleyn matrix:
$$
 \prod_{i=1}^n {\rm Kast} (t_i,x_i;r_i,y_i)
  \cdot \det \left[{\rm Kast^{-1}}(r_i,y_i;t_j,x_j)\right]_{i,j=1,\dots,n}
$$

We can compare this statement with Theorem \ref{Theorem_Cor_functions}. Note that Theorem
\ref{Theorem_Cor_functions} describes the correlation functions of the particles. Consequently the
correlation kernel
$$
  \hat K(t,x;r,y)=\delta_{(t,x)}^{(r,y)}-K(t,x;r,y)
$$
(where $K(t,x;r,y)$ is the correlation kernel of Theorem
\ref{Theorem_Cor_functions}) is the correlation kernel of holes or,
equivalently of horizontal lozenges. (See, e.g., \cite[Appendix
A.3]{BOO} for some details on the particle-hole involution.)

Since the correlation kernel appears only in a determinant, it is only
determined up to conjugation: the transformation
$$
 \hat K(t,x;r,y)\mapsto \frac {g(t,x)}{g(r,y)} \hat K(t,x;r,y)
$$
does not change correlation functions. We conclude that
$$
 \frac{\hat K(t,x;r,y)}{\kappa q^{-S/2+x-t/2+1/2}-(\kappa q^{-S/2+x-t/2+1/2})^{-1}}
$$
should be (perhaps, after some conjugation) the inverse Kasteleyn matrix.

Let us verify this fact and find the appropriate conjugation factor.

We have
\begin{equation}
 \label{eq_kast_inv}
 \sum_{(h,z)} \frac{g(t,x)}{g(h,z)} \frac{\hat K(t,x;h,z)}{\kappa q^{-S/2+x-t/2+1/2}-(\kappa q^{-S/2+x-t/2+1/2})^{-1}}
 {\rm Kast} (h,z;r,y) = \delta_{(t,x)}^{(r,y)}\,.
\end{equation}

First, suppose that $t<r-1$. In this case the $((t,x);(r,y))$ matrix element of the right-hand side
is zero while the one of the left-hand side of \eqref{eq_kast_inv} is
\begin{multline*}
 \frac{g(t,x)}{\kappa q^{-S/2+t-x/2+1/2}-(\kappa q^{-S/2+t-x/2+1/2})^{-1}}
 \sum_{i\ge N}f_i^t(x)c_i^{t,r-1}\\ \times
 \Biggl[\frac{\left(\kappa q^{-S/2+y-r/2+1/2}-\frac1{\kappa
q^{-S/2+y-r/2+1/2}}\right)c_i^{r-1}f_i^r(y)}{g(r,y)}
 +
 \frac{f_i^{r-1}(y)}{g(r-1,y)}+\frac{f_i^{r-1}(y-1)}{g(r-1,y-1)}\Biggr].
\end{multline*}

Let us find such function $g$ that for every $i$
\begin{equation}
\label{eq_qRacah_relation_needed}
 \frac{\left(\kappa q^{-S/2+y-r/2+1/2}-\frac1{\kappa
q^{-S/2+y-r/2+1/2}}\right)c_i^{r-1}f_i^r(y)}{g(r,y)}
 +
 \frac{f_i^{r-1}(y)}{g(r-1,y)}+\frac{f_i^{r-1}(y-1)}{g(r-1,y-1)}=0.
\end{equation}

We know that (see Lemma \ref{Lemma_qRacah_relation_funct})
\begin{multline}
\label{eq_qRacah_relation_funct}
 c_i^t f^{t+1}_i(y)-\sqrt\frac{w_{t,S}(y-1)}{w_{t+1,S}(y)}
w_1^t(y-1)f^t_i(y-1)-\sqrt\frac{w_{t,S}(y)}{w_{t+1,S}(y)} w_0^t(y)f^t_i(y)=0,
\end{multline}
where
$$
 w_0^t(x)=-(1-q^{x+T-t-S})\frac{1- \kappa ^2q^{x+N-t}} {1- \kappa ^2
 q^{2x-t-S+1}},
$$

$$
 w_1^t(x)=q^{T+N-1-t}(1-q^{-(S+N-1-x)}) \frac{1- \kappa ^2 q^{x-T+1}} {1- \kappa ^2
 q^{2x-t-S+1}},
$$
while $w_{t,S}(x)$ stands for the weight function corresponding to the parameters $t,q,\kappa
,N,T,S$ (see Section \ref{Section_distributions} and Theorem \ref{Theorem_One_dim_distribution} for
details).

For every pair $(r,y)$ we get the following three equations defining $g$
$$
 g(r,y)\propto \frac{\kappa q^{-S/2+y-r/2+1/2}-\frac1{\kappa
q^{-S/2+z-r/2+1/2}} }{ \sqrt{w_{r,S}(y)}},
$$
$$
 g(r-1,y-1)\propto -\frac{1}{\sqrt{w_{r-1,S}(y-1)} w_1^{r-1}(y-1)},
$$
$$
 g(r-1,y)\propto -\frac{1}{\sqrt{w_{r-1,S}(y)}
 w_0^{r-1}(y)},
$$
where the proportionality coefficient is the same for all three equations (but it may depend on the
pair $(r,y)$). One checks that $g$ given by \eqref{eq_conjug_factor} satisfies these relations and
after the conjugation with $g$ the $((t,x);(r,y))$ matrix element of the left-hand side of
\eqref{eq_kast_inv} is zero.

Next, suppose that $t>r$. In this case the $((t,x);(r,y))$ matrix element of the left-hand side of
\eqref{eq_kast_inv} is zero by the similar reasoning.

If either $t=r$ or $t=r-1$ the argument becomes a little more involved, but the computation
requires no new ideas.

\end{proof}

\section{Bulk limits. Limit shapes.}

\label{Section_Bulk_limits}

\subsection{Bulk limit theorem}

In this section we compute so-called ``bulk limits'' of the correlation functions introduced in the
previous section.

We are interested in the following limit regime. Fix positive
numbers $\mathsf S$, $\mathsf T$, $\mathsf N$, $\mathsf t$, $\mathsf
x$, $\mathsf q$.
 Introduce a small parameter $\varepsilon \ll 1$, and set
 $$
   S= \mathsf S\varepsilon^{-1} + o(\varepsilon^{-1}),\quad
   T= \mathsf T\varepsilon^{-1} + o(\varepsilon^{-1}),\quad
   N= \mathsf N\varepsilon^{-1} + o(\varepsilon^{-1}),\quad
   q= {\mathsf q}^{\varepsilon + o(\varepsilon)}.
 $$

  Consider also integer valued functions $t_i=t_i(\varepsilon)$ and
  $x_i=x_i(\varepsilon)$, $i=1,\dots,n$, such that
  $$
   \lim\limits_{\varepsilon\to 0} \varepsilon
   t_i(\varepsilon)=\mathsf t,\quad \lim\limits_{\varepsilon\to 0} \varepsilon x_i(\varepsilon)=\mathsf x,\quad
   i=1,\dots,n,
  $$
  and pairwise differences $t_i-t_j$, and $x_i-x_j$ do not depend on
  $\varepsilon$.

  Then the correlation functions $\rho_n$ computed in Theorem \ref{Theorem_Cor_functions} tend to a limit
 $\hat \rho_n$ which
 depends on the parameters of the limit regime $\mathsf q,\mathsf S$, $\mathsf T$, $\mathsf N$, $\mathsf t$, $\mathsf
 x$ and the original parameter $\kappa $.

 We consider the region where the limit correlation functions are nontrivial. This region is commonly referred to as
the ``bulk'', sometimes also called the ``liquid region''. This is a simply connected domain inside
 the hexagon.

 The main result of this section is Theorem \ref{Theorem_bulk_limit_1}.

 Note that the first limit correlation function allows us to predict \emph{the limit
 shape} which appears in our model.

\begin{theorem} We have
\label{Theorem_bulk_limit_1}
 $$
 \lim_{\varepsilon\to 0}\rho_n(t_1,x_1;\dots;t_n,x_n)=
  \det\left[\hat K(t_i,x_i;t_j,x_j)\right]_{i,j=1,\dots ,n},
$$
 where
 $$\hat K( x, s; y, t)= \frac{1}{2\pi i}\oint_{e^{-i\phi}}^{e^{i\phi}}
  \left(1+cw\right)^{ t- s}w^{ x- y-1}dw. $$
  Here the integration is to be performed over the right side of the unit
  circle when
   $s\ge t$ and over the left side otherwise,

 $$
   c=\left({ \frac{{\mathsf q}^{\mathsf T-2\mathsf t}(1-{\mathsf q}^{-(\mathsf S+\mathsf N-\mathsf x)})(1-{\mathsf q}^{\mathsf x} )}
   {(1-{\mathsf q}^{\mathsf x+\mathsf T-\mathsf t-\mathsf S})(1-{\mathsf q}^{-\mathsf t-\mathsf N+\mathsf x})}
    \frac{(1-\kappa ^2 {\mathsf q}^{\mathsf x+\mathsf N-\mathsf S})(1- \kappa ^2 {\mathsf q}^{\mathsf x-\mathsf T})}
 {(1- \kappa ^2 {\mathsf q}^{\mathsf x+\mathsf N-\mathsf t})(1-\kappa ^2 {\mathsf q}^{\mathsf x-\mathsf t-\mathsf
 S})}}\right)^\frac12,
 $$
and $\phi$ is given by the formula:
 $$\phi=
   \arccos\frac{{\mathsf q}^{-\mathsf N}(1-{\mathsf q}^{\mathsf N})(1-{\mathsf q}^{-\mathsf T-\mathsf N})(1-\kappa ^2
{\mathsf q}^{-\mathsf t-\mathsf S+2\mathsf x})^2+A+B}{2\sqrt{AB}},
 $$
where
$$
A= (1-{\mathsf q}^{-\mathsf S-\mathsf N+\mathsf x})(1-\kappa ^2
{\mathsf q}^{-\mathsf T+\mathsf x})(1-{\mathsf q}^{-\mathsf
t-\mathsf N+\mathsf x})(1-\kappa ^2 {\mathsf q}^{-\mathsf t-\mathsf
S+\mathsf x}),
$$
$$
B= {\mathsf q}^{-2\mathsf N-\mathsf T}(1-{\mathsf q}^{\mathsf
x})(1-\kappa ^2 {\mathsf q}^{-\mathsf t+\mathsf N+\mathsf x})(1-
{\mathsf q}^{-\mathsf t-\mathsf S+\mathsf T+\mathsf x})(1-\kappa ^2
{\mathsf q}^{-\mathsf S+\mathsf N+\mathsf x}).
$$
 If the expression under $\arccos$ is greater than 1, then we set $\phi=0$. If the expression
 is less than $-1$, then $\phi=\pi$.

\end{theorem}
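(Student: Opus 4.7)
The plan is to start from the determinantal formula of Theorem \ref{Theorem_Cor_functions}, where the correlation kernel is written as
$K(k,x;l,y)=\sum_{i} (c_i^{\cdot,\cdot})^{\pm 1}\, f_i^k(x)f_i^l(y)$ with $f_i^t(x)=\sqrt{w_{t,S}(x)}\,R_i^t(x)/\|R_i^t\|$, and to carry out an asymptotic analysis under the scaling $S,T,N\sim\varepsilon^{-1}$, $q=\mathsf q^{\varepsilon}$. Since the q-Racah polynomials $R_i^t$ are terminating ${}_4\phi_3$ series whose parameters all scale like $\varepsilon^{-1}$, the natural tool is an integral representation that converts the summation over $i$ into a double contour integral. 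This is exactly the scheme of \cite{BO} and \cite{O}, carried out for the uniform (Hahn) case in \cite{Gor}; the only crucial input that survives the passage to the q-Racah level is the second-order difference equation used in Lemma \ref{Lemma_qRacah_relation_funct}, which produces the discrete-Christoffel--Darboux identity needed below.

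First I would write $K(k,x;l,y)$ (say for $k\ge l$) as a double contour integral in $(z,w)$ by substituting a Barnes-type integral representation of the ${}_4\phi_3$ for $R_i^k(x)$ and $R_i^l(y)$ and then summing over $i$ from $0$ to $N-1$. The geometric sum yields a factor $(1-(zw)^N)/(1-zw)$ after a suitable change of variables, and the $\sqrt{w_{t,S}}$ prefactors together with $\|R_i^t\|^{-1}$ and $c_i^t$ combine into an integrand that separates into a factor depending only on $(k,x)$ and one depending only on $(l,y)$. Under the bulk scaling each of these factors takes the form $\exp(\varepsilon^{-1}S_j(\cdot))\cdot(\text{slowly varying})$, so steepest-descent is applicable.

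The saddle-point equation $S'(w)=0$ is a quadratic in $w$; a direct computation using the $q$-Racah difference equation shows that it has conjugate roots $e^{\pm i\phi}$ on the unit circle precisely when the expression inside the $\arccos$ lies in $[-1,1]$ (this is the liquid region), with $\phi$ as stated. The contour $|zw|=1$ is then pulled to pass through both saddles; the pole at $zw=1$ collapses the double integral to the single integral stated, and a local analysis of the remaining slow factors at $w$ on the unit arc identifies the constant $c=c(\mathsf t,\mathsf x)$ and reproduces the integrand $(1+cw)^{t-s}w^{x-y-1}$. The sign of $s-t$ dictates which side of the unit circle is used because the factor $w^{x-y-1}$ leaves a pole at $w=0$ whose residue contribution is desired when $s<t$ but not when $s\ge t$.

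The main obstacle will be the bookkeeping of the asymptotics: one must handle each of the four parameter correspondences \eqref{QRacah_case1}--\eqref{QRacah_case4}, carry the conjugation factor $\sqrt{w_{t,S}(x)/w_{t+1,S}(y)}$ consistently through the contour deformation, and verify that the quadratic saddle equation coming from the q-Racah difference operator coincides with the quadratic $Q(u,v)=0$ of Theorem \ref{Th_Local_measures} --- so that the resulting slope of the ergodic Gibbs measure matches the one announced. Outside the liquid region the two saddles collide ($\phi=0$) or separate along the real axis ($\phi=\pi$), the integration arc degenerates to a point, and $\hat K$ reduces to the projection kernel of a frozen configuration, consistent with the last sentence of Theorem \ref{Th_Local_measures}.
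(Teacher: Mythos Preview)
Your outline is a genuinely different route from the paper's, and while steepest descent on a double contour integral is a plausible strategy for such kernels, you mischaracterize the method of \cite{BO}, \cite{O}, and \cite{Gor}. Those references, and the paper itself, do \emph{not} pass through any Barnes-type integral or saddle-point analysis. Instead they use the \emph{spectral projection method}: for the static part ($k=l$), the projection $\mathcal P_t(x,y)=\sum_{n<N} f^t_n(x)f^t_n(y)$ is viewed as a spectral projection of the second-order $q$-difference operator for which the $f^t_n$ are eigenfunctions. Under the bulk scaling this difference operator converges strongly (on finite vectors) to a constant-coefficient operator $f\mapsto \tfrac12(f(x+1)+f(x-1))-\text{const}\cdot f(x)$, and one invokes strong resolvent convergence (Reed--Simon, Thm.~VIII.24--25) to conclude that the spectral projections converge; the limit projection is read off after Fourier transform as the discrete sine kernel on the arc $[e^{-i\phi},e^{i\phi}]$. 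For $k\ne l$ the kernel is factored as a spectral projection times a product of two-diagonal transition operators $U_h$ (coming from Lemma \ref{Lemma_qRacah_relation_funct}); each $U_h$ limits to $\mathrm{const}\cdot(\mathsf u\,\delta^y_{x+1}+\delta^y_x)$, which under Fourier transform becomes multiplication by $(1+cw)$, yielding the factor $(1+cw)^{t-s}$ in the integrand. No quadratic saddle equation is solved anywhere; the formula for $\phi$ drops out of the limiting spectral interval, and $c=\mathsf u$ comes from the ratio $\tilde w_1/\tilde w_0$ of the off-diagonal to diagonal entries of the limit transition operator.

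So the two approaches trade different difficulties. The paper's method is soft: it needs only the difference equation and the explicit transition operators, and the functional-analytic input does the rest; no asymptotics of ${}_4\phi_3$ are ever computed. Your approach would need, first, a genuine integral representation of the $q$-Racah polynomials amenable to steepest descent with all four parameters going to infinity (this is not in the literature in the form you need, and ``Barnes-type'' for terminating ${}_4\phi_3$ with these scalings is not routine), and second, a careful contour deformation argument across the four parameter regimes \eqref{QRacah_case1}--\eqref{QRacah_case4}. Either of these would be substantial new work, whereas the spectral-projection argument is essentially a transcription of \cite{Gor} with Hahn replaced by $q$-Racah.
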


Setting $v=-cw$ (and omitting some ``conjugation factors'' again) we
get the incomplete beta-kernel form of the integral, cf. \cite{OR},
$$
 K( x, s; y, t)= \frac{1}{2\pi i}\oint_{-c\cdot e^{-i\phi}}^{-c \cdot e^{i\phi}}
  \left(1-v\right)^{ t- s}v^{ x- y-1}dv.
$$
Here the contour of integration intersects $(-\infty,0)$, if
   $ s\ge  t$, and intersects $(0,1)$ otherwise. For an explanation
   of the relation of the incomplete beta-kernel and Gibbs measures
   see \cite{KOS}, \cite{BS}.

It is not hard to compute that $z=-ce^{i\phi}$ has the form
\begin{multline*}
z=\frac12\ \frac{{\mathsf q}^{\mathsf T+\mathsf N-\mathsf t}} {(1-{\mathsf q}^{\mathsf x+\mathsf
T-\mathsf t-\mathsf S})(1-{\mathsf q}^{-\mathsf t-\mathsf N+\mathsf x}) (1-\kappa ^2 {\mathsf
q}^{-\mathsf t+\mathsf
N+\mathsf x})(1-\kappa ^2{\mathsf q}^{\mathsf x-\mathsf t-\mathsf S})}\\
\cdot\Biggl[
 {{\mathsf q}^{-\mathsf N}(1-{\mathsf q}^{\mathsf N})(1-{\mathsf q}^{-\mathsf T-\mathsf N})(1-\kappa ^2
{\mathsf q}^{-\mathsf t-\mathsf S+2\mathsf x})^2+A+B}
\\+i\sqrt{4AB-\left({{\mathsf q}^{-\mathsf N}(1-{\mathsf q}^{\mathsf N})(1-{\mathsf q}^{-\mathsf T-\mathsf N})
(1-\kappa ^2 {\mathsf q}^{-\mathsf t-\mathsf S+2\mathsf x})^2+A+B}\right)^2} \Biggr],
\end{multline*}
with
$$
A= (1-{\mathsf q}^{-\mathsf S-\mathsf N+\mathsf x})(1-\kappa ^2 {\mathsf q}^{-\mathsf T+\mathsf
x})(1-{\mathsf q}^{-\mathsf t-\mathsf N+\mathsf x})(1-\kappa ^2 {\mathsf q}^{-\mathsf t-\mathsf
S+\mathsf x})
$$
and
$$
B= {\mathsf q}^{-2\mathsf N-\mathsf T}(1-{\mathsf q}^{\mathsf
x})(1-\kappa ^2 {\mathsf q}^{-\mathsf t+\mathsf N+\mathsf x})(1-
{\mathsf q}^{-\mathsf t-\mathsf S+\mathsf T+\mathsf x})(1-\kappa ^2
{\mathsf q}^{-\mathsf S+\mathsf N+\mathsf x}).
$$

\begin{proposition}
 The parameter $z$ defined above coincides with the one defined in
 Theorem \ref{Th_Local_measures}.
\end{proposition}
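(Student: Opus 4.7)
The plan is to verify directly that the explicit $z$ produced in Section \ref{Section_Bulk_limits} satisfies the quadratic equation $Q(u(z),v(z))=0$ of Theorem \ref{Th_Local_measures}. Since $u(z)$ and $v(z)$ are rational in $z$ sharing the common denominator $D(z)=1-z\kappa^{2}\mathsf q^{-\mathsf S+2\mathsf x-\mathsf t}$, multiplying $Q(u(z),v(z))$ by $D(z)^{2}$ turns the condition into a single polynomial identity
\[
\tilde Q(z)\;=\;\alpha\,z^{2}+\beta\,z+\gamma\;=\;0,
\]
whose coefficients $\alpha,\beta,\gamma$ are explicit polynomials in the monomials $\mathsf q^{\pm\mathsf N},\mathsf q^{\pm\mathsf T},\mathsf q^{\pm\mathsf S},\mathsf q^{\pm\mathsf t},\mathsf q^{\pm\mathsf x},\kappa^{2}$. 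Thus it suffices to match the sum and product of the two roots of $\tilde Q$ with those coming from the closed form of $z$.

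Writing $N_{c}$ and $D_{c}$ for the numerator and denominator bracketed in the formula for $c^{2}$, i.e.\ $c^{2}=\mathsf q^{\mathsf T-2\mathsf t}\,N_{c}/D_{c}$, a short manipulation using the explicit form $z=\tfrac12\mathsf q^{\mathsf T+\mathsf N-\mathsf t}D_{c}^{-1}\bigl[E+i\sqrt{4AB-E^{2}}\bigr]$ gives
\[
z\bar z\;=\;\frac{\mathsf q^{2(\mathsf T+\mathsf N-\mathsf t)}\,AB}{D_{c}^{2}}\;=\;\frac{\mathsf q^{\mathsf T-2\mathsf t}N_{c}}{D_{c}}\;=\;c^{2},\qquad z+\bar z\;=\;\frac{\mathsf q^{\mathsf T+\mathsf N-\mathsf t}\,E}{D_{c}},
\]
where the reduction $AB\,\mathsf q^{2\mathsf N+\mathsf T}=N_{c}D_{c}$ is immediate once one pairs the eight factors of $AB$ against those of $N_{c}D_{c}$. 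The problem therefore reduces to the two polynomial identities
\[
\gamma\,D_{c}\;=\;\alpha\,\mathsf q^{\mathsf T-2\mathsf t}\,N_{c},\qquad -\beta\,D_{c}\;=\;\alpha\,\mathsf q^{\mathsf T+\mathsf N-\mathsf t}\,E.
\]

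I would carry out the computation as follows: expand $u(z)D(z)=z\mathsf q^{\mathsf t}-\kappa^{2}\mathsf q^{-\mathsf S+2\mathsf x}$ and $v(z)D(z)=(1-z)\mathsf q^{\mathsf x}$, substitute into the six terms of \eqref{eq_Q_def_qRacah}, and collect coefficients of $z^{0},z^{1},z^{2}$. Then compare with the right-hand sides above, exploiting the natural grouping of terms according to the power of $\kappa^{2}$ (there are pieces of order $\kappa^{0},\kappa^{2},\kappa^{4}$ on both sides). One useful shortcut is the degeneration $\kappa\to 0$: setting $\kappa=0$ reduces $\tilde Q$ and the bulk formula to the already established $\mathsf q$-Hahn statement (cf.\ \cite{Gor},\cite{KO}), which takes care of the $\kappa^{0}$ component of each identity for free, so only the $\kappa^{2}$ and $\kappa^{4}$ parts need fresh work.

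The main obstacle is not conceptual but combinatorial: both sides of $-\beta D_{c}=\alpha\mathsf q^{\mathsf T+\mathsf N-\mathsf t}E$ produce long sums of monomials in six independent variables, and matching them requires care with exponent arithmetic. The cleanest organizational device is the involution $z\mapsto (\kappa^{2}\mathsf q^{-\mathsf S+2\mathsf x-\mathsf t}\,z)^{-1}$, under which $u$ and $v$ transform by simple rational factors; this symmetry is apparent in $\tilde Q$ and in the closed form of $z+\bar z$, and it cuts the bookkeeping roughly in half. With this in place, the remaining verification is mechanical, and can be confirmed on a symbolic algebra system as a sanity check before being written out by hand.
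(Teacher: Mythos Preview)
Your proposal is correct and follows essentially the same approach as the paper: form the quadratic $Z^{2}-(z+\bar z)Z+z\bar z=0$ from the explicit $z$ and verify it is equivalent to the relation $Q(u(Z),v(Z))=0$ of Theorem~\ref{Th_Local_measures} after clearing the common denominator $D(Z)^{2}$. The paper records this as a single ``substitute and simplify'' remark, whereas you spell out the Vieta-matching, the factorization $AB\,\mathsf q^{2\mathsf N+\mathsf T}=N_{c}D_{c}$, and some organizational shortcuts; but the underlying argument is the same.
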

\begin{proof}
The quadratic equation satisfied by $z$ is
$$
Z^2-(z+\bar{z})Z+z\bar{z}=0.
$$
Substituting the expression for $z$ given above, one obtains a
relation equivalent to the one in Theorem  \ref{Th_Local_measures}.
\end{proof}

\subsection{Proof of the bulk limit theorem}

 In this section we prove Theorem \ref{Theorem_bulk_limit_1}.

Recall that the correlation kernel before the limit is given by
\begin{gather*}
  K(x,k;y,l)=\sum_{i=0}^{N-1}\frac{1}{c_i^{l,k}}f_i^k(x)f_i^l(y)
 ,\, k\ge l; \\ K(x,k;y,l)=-\sum_{i\ge N}c_i^{k,l} f_i^k(x)f_i^l(y),
 \,k<l; \\ c_i^{k,k}=1,\, c_i^{k,l}=c_i^k\cdot c_i^{k+1}\cdot\dots\cdot
 c_i^{l-1}.
 \end{gather*}
  The functions $f_i^k(x)$ and the coefficients $c_i^k$ were defined in Section
  \ref{Section_Corr_kernel}.

 First, let us consider the case $k=l$. We want to find a limit of the projection kernel
  $$
   {\cal P}_t(x,y)=\sum_{n=0}^{N-1}f_n^t(x)f_n^t(y).
  $$

  In order to find a limit of ${\cal P}_t(x,y)$ we use the spectral projection method proposed by
  G.~Olshanski, see \cite{BO2} and \cite{O}.

  We want to consider ${\cal P}_t(x,y)$ as a matrix element of the operator ${\cal P}_t$.  It turns out that finding the limit of the operator is easier than
 computing the limit of the matrix elements. Note that functions
  $f_n^t(x)$ are eigenvectors of some difference operator (it will be explicitly given below).
  The projection operator can be regarded as the spectral projection on the segment containing
 the first $N$ eigenvalues of this difference operator.
 Now, to find the limit of the spectral projection operators we will take the limit of the difference
 operators. Note that both the difference operator and the spectral
 segment are varying simultaneously.

 To justify the limit transition we use some facts from functional analysis.

 Consider the set  $l_2^0({\mathbb Z})$ of the finite vectors from
 $l_2({\mathbb Z})$ (i.e., the algebraic span of the basis elements $\delta_x$)
 as a common essential domain of all considered difference
 operators.
 It will be clear from the following that the difference operators strongly converge
 on this domain.
 It follows that the operators converge in the strong resolvent sense  (see \cite{RS}, Theorem VIII.25).
 The last fact, continuity of the spectrum of the limit operator, and Theorem
 VIII.24 from \cite{RS} imply that the spectral projections associated with the difference
 operators strongly converge on the set of finite vectors to the limit
 spectral projection associated with the limit difference operator.

 \medskip
 Now we present some details and computations.

 Note that since $q$-Racah polynomials
 are eigenfunctions of a certain difference operator (see \cite{KS}), the same
 is true for the functions $f_n^t(x)$. The difference operator is

\begin{multline}
\label{diff_op} {q^{-n}(1-q^n)(1-\alpha\beta q^{n+1})f_n^t(x)} =
B(x)f_n^t(x +
1)\sqrt{\frac{w_{t,S}(x)}{w_{t,S}(x+1)}} \\
 - [B(x) + D(x)]
f_n^t(x) + D(x)f_n^t(x - 1)\sqrt{\frac{w_{t,S}(x)}{w_{t,S}(x-1)}},
\end{multline}
where $w_{t,S}(x)$ is the weight function corresponding to the parameters $t,q,\kappa ,N,T,S$ (see
Theorem \ref{Theorem_One_dim_distribution}), and
$$
 B(x)=\frac{(1-\alpha q^{x+1})(1-\beta\delta q^{x+1})(1-\gamma q^{x+1})(1-\gamma\delta q^{x+1})}{(1-\gamma\delta q^{2x+1})(1-\gamma\delta q^{2x+2})}\,,
$$
$$
 D(x)=\frac{q(1-q^x)(\alpha-\gamma\delta q^x)(\beta-\gamma q^x)(1-\delta q^x)}{(1-\gamma\delta q^{2x})(1-\gamma\delta q^{2x+1})}\,.
$$
(Here $\alpha$, $\beta$, $\gamma$, $\delta$ are the corresponding
parameters of $q$-Racah polynomials.)

We find
\begin{multline*}
 w_{t,S}(x+1)/w_{t,S}(x)=\frac{q^{2N+T-1}(1-\kappa ^2 q^{2x-t-S+3})}{1-\kappa ^2 q^{2x-t-S+1}}
\\ \times \frac{
 (1-q^{x-t-N+1})(1-q^{x-S-N+1})(1-\kappa ^2 q^{x-T+1})(1-\kappa ^2 q^{x-t-S+1})}
 {(1-q^{x+1})(1-q^{T-S-t+x+1})(1-\kappa ^2 q^{x+N-t+1})(1-\kappa ^2
 q^{x+N-S+1})}\,,
\end{multline*}

\begin{multline*}
 w_{t,S}(x-1)/w_{t,S}(x)=\frac{q^{-2N-T+1}(1-\kappa ^2 q^{2x-t-S-1})}{1-\kappa ^2 q^{2x-t-S+1}}
\\ \times\frac{(1-q^x)(1-q^{T-S-t+x})(1-\kappa ^2q^{x+N-t})(1-\kappa ^2 q^{x+N-S})}
{(1-q^{x-t-N})(1-q^{x-S-N})(1-\kappa ^2 q^{x-T})(1-\kappa ^2
q^{x-t-S})}\,.
\end{multline*}

Substituting the $q$-Racah parameters of our model (see Section
\ref{Section_distributions}) into $B(x)$ and $D(x)$ one computes
$$
 B(x)=\frac{(1-q^{-S-N+x+1})(1-\kappa ^2 q^{-T+x+1})(1-q^{-t-N+x+1})(1-\kappa ^2 q^{-t-S+x+1})}{(1-\kappa ^2 q^{-t-S+2x+1})(1-\kappa ^2 q^{-t-S+2x+2})},
$$
\begin{multline*}
D(x)=\frac{q(1-q^x)(q^{-S-N}-\kappa ^2 q^{-t-S+x})(q^{S-T-N}- q^{-t-N+x})(1-\kappa ^2
q^{-S+N+x})}{(1-\kappa ^2
q^{-t-S+2x})(1-\kappa ^2 q^{-t-S+2x+1})} \\
=q^{1-2N-T}\frac{(1-q^x)(1-\kappa ^2 q^{-t+N+x})(1- q^{-t-S+T+x})(1-\kappa ^2
q^{-S+N+x})}{(1-\kappa ^2 q^{-t-S+2x})(1-\kappa ^2 q^{-t-S+2x+1})} .
\end{multline*}

It follows that
\begin{multline*}
 \frac{B(x)}{\sqrt{w_{t,S}(x+1)/w_{t,S}(x)}}
\\=\left(q^{1-2N-T}
{(1-q^{x+1)}(1-q^{T-S-t+x+1})(1-q^{-S-N+x+1})(1-q^{-t-N+x+1})}\right)^\frac12
\\ \times \left(\frac{(1-\kappa ^2 q^{-T+x+1})(1-\kappa ^2
q^{-t-S+x+1})(1-\kappa ^2 q^{x+N-t+1})(1-\kappa ^2
q^{x+N-S+1})}{(1-\kappa ^2 q^{-t-S+2x+1})(1-\kappa ^2
q^{2x-t-S+3})(1-\mathsf \kappa ^2 q^{-t-S+2x+2})^2}\right)^\frac12,
\end{multline*}

\begin{multline*}
  \frac{D(x)}{\sqrt{w_{t,S}(x-1)/w_{t,S}(x)}}\\
  =
\left(q^{{1-2N-T}}{(1-q^x)(1-q^{x-t-N})}(1-q^{x-S-N})(1-
q^{x-t-S+T}) \right)^\frac12
\\
\times \left(\frac{(1-\kappa ^2 q^{-t+N+x})(1-\kappa ^2
q^{-S+N+x})(1-\kappa ^2 q^{x-T})(1-\kappa ^2 q^{x-t-S})} {(1-\kappa
^2 q^{-t-S+2x+1})(1-\kappa ^2 q^{2x-t-S-1}) (1-\kappa ^2
q^{-t-S+2x})^2}\right)^\frac12.
\end{multline*}

The eigenvalues in the left-hand side of (\ref{diff_op}) become
$$
 q^{-n}(1-q^n)(1-q^{-T-2N+n+1}), \quad n=0,\dots,N-1
$$

Note that in the ``bulk limit'' regime, $q\to 1$, $N,T,S\to\infty$
as $\varepsilon\to 0$ in such a way that $q^N, q^T, q^S$ have finite
limits.

In the limit $\frac{B(x)}{\sqrt{w(x+1)/w(x)}}$ tends to some
constant and $\frac{D(x)}{\sqrt{w(x-1)/w(x)}}$ tends to the very
same constant. After dividing by twice this constant the limit
operator becomes
$$
 f(x)\mapsto \frac{f(x+1)+f(x-1)}2-f(x)\frac{A+B}{2\sqrt{AB}},
$$
where
$$
A= (1-{\mathsf q}^{-\mathsf S-\mathsf N+\mathsf x})(1-\kappa ^2 {\mathsf q}^{-\mathsf T+\mathsf
x})(1-{\mathsf q}^{-\mathsf t-\mathsf N+\mathsf x})(1-\kappa ^2 {\mathsf q}^{-\mathsf t-\mathsf
S+\mathsf x})
$$
and
$$
B= {\mathsf q}^{-2\mathsf N-\mathsf T}(1-{\mathsf q}^{\mathsf
x})(1-\kappa ^2 {\mathsf q}^{-\mathsf t+\mathsf N+\mathsf x})(1-
{\mathsf q}^{-\mathsf t-\mathsf S+\mathsf T+\mathsf x})(1-\kappa ^2
{\mathsf q}^{-\mathsf S+\mathsf N+\mathsf x}),
$$
while the spectral interval becomes
$$
 \left[\frac{{\mathsf q}^{-\mathsf N}(1-{\mathsf q}^{\mathsf N})(1-{\mathsf q}^{-\mathsf T-\mathsf N})(1-\kappa ^2
{\mathsf q}^{-\mathsf t-\mathsf S+2\mathsf x})^2}{2\sqrt{AB}},0\right]
$$

Since the spectrum of the operator $f\to \frac{f(x+1)+f(x-1)}2$ is $[-1,1]$, while
$\frac{\sqrt{A+B}}{2\sqrt{AB}}>1$, one can equivalently write the operator in the form
$$
 f\to \frac{f(x+1)+f(x-1)}2
$$
with spectral interval
\begin{equation}
 \label{Spectral_interval_qRacah}
 \left[\frac{{\mathsf q}^{-\mathsf N}(1-{\mathsf q}^{\mathsf N})(1-{\mathsf q}^{-\mathsf T-\mathsf N})(1-\kappa ^2
{\mathsf q}^{-\mathsf t-\mathsf S+2\mathsf x})^2+A+B}{2\sqrt{AB}},1\right].
\end{equation}

If we perform the Fourier transform $l_2({\mathbb Z})\to L_2(S^1)$, where $S^1$
 is the unit circle in ${\mathbb C}$, we get an operator of the projection
 on the part of the unit
 circle with $x$-coordinate varying over precisely the spectral interval
 \eqref{Spectral_interval_qRacah}.

 If we do the inverse Fourier transform we will get the discrete sine
 kernel, cf. the end of Section 3.2 in \cite{Gor}.

 \medskip
 Next, let us consider the case $k<l$. The prelimit correlation kernel is given
 by
\begin{gather*}
  K(x,k;y,l)=-\sum_{i\ge N}c_i^{k,l} f_i^k(x)f_i^l(y),
 \qquad k<l; \\ c_i^{k,k}=1,\qquad c_i^{k,l}=c_i^k\cdot c_i^{k+1}\cdot\dots\cdot
 c_i^{l-1}.
 \end{gather*}

 Let us decompose the correlation kernel (i.e., the operator given by it)
 into the product of the static projection kernel:
  $$
   {\cal P'}_t(x,y)=-\sum_{i\ge N}^{N-1}f_i^t(x)f_i^t(y),
  $$
and a collection of transition operators (or their inverses)
  $U_h(x,y)$ with
  $$
   U_h(x,y)=\begin{cases} \sum\limits_{i \ge 0}c_i^hf_i^h(x)f_i^{h+1}(y),
        &x\in{\mathfrak X_h},y\in{\mathfrak X_{h+1}},\\
            0 &\text{for other } x,y.
  \end{cases}
  $$

 For the operators ${\cal P'}_t(x,y)$ we can use the same methods as for ${\cal P}_t(x,y)$. We get
 minus the operator of projection on the part of the unit circle complementary to the spectral interval
 \eqref{Spectral_interval_qRacah}.

 Let us turn to the transition operators $U_t$.

 By virtue of already proved facts (see Section \ref{Section_Corr_kernel}), we obtain
 $$
   U_t(x,y)=const\cdot \sqrt{\frac{w_{t,S}(x)}{w_{t+1,S}(y)}}
  \left[w_1(x)\delta_{x+1}^y+
  w_0(x)\delta_x^y\right],
 $$
 where

  \begin{multline*}
w_{t,S}(x)=\frac{q^{x(2N+T-1)}(1- \kappa ^2q^{2x-t-S+1})}{(q;q)_x (q;q)_{T-S-t+x}
(q^{-1};q^{-1})_{t+N-x-1}(q^{-1};q^{-1})_{S+N-x-1}}\\
\cdot\frac1{ ( \kappa ^2q^{x-T+1};q)_{T+N-t}( \kappa ^2q^{x-t-S+1};q)_{N+t} },
\end{multline*}

$$
 w_0(x)=-(1-q^{x+T-t-S})\frac{1-\kappa ^2q^{x+N-t}} {1-\kappa ^2
 q^{2x-t-S+1}}\,,
$$
and
$$
 w_1(x)=q^{T+N-1-t}(1-q^{-(S+N-1-x)}) \frac{1-\kappa ^2 q^{x-T+1}} {1-\kappa ^2
 q^{2x-t-S+1}}.
$$

Thus,
 $$
   U_t(x,y)=const\cdot
  \left[\tilde w_1(x)\delta_{x+1}^y+
  \tilde w_0(x)\delta_x^y\right],
 $$
where
\begin{equation*}
 \tilde w_0(x)
=\left(\frac{(1-q^{x+T-t-S})({1-q^{-t-N+x}})(1-\kappa ^2
q^{x-t-S})(1-\kappa ^2q^{x+N-t})} {(1-\kappa ^2
 q^{2x-t-S+1})(1-\kappa ^2
 q^{2x-t-S})}\right)^\frac12,
\end{equation*}
and
\begin{equation*}
 \tilde w_1(x)=\left(\frac{q^{T-t-1}{(1-q^{x-S-N+1})}{(1-q^{x+1}) }(1-\kappa ^2 q^{x+N-S+1})
 (1- \kappa ^2 q^{x-T+1}) } {(1-\kappa ^2
 q^{2x-t-S+1})(1-\kappa ^2
 q^{2x-t-S+2})}\right)^\frac12.
\end{equation*}

Passing to the limit we get the operator
$$
   \mathsf U(x,y)=const\cdot
  \left[\mathsf U_1\delta_{x+1}^y+
   \mathsf U_0\delta_x^y\right],
$$

where
$$
\mathsf U_0=
 \left(\frac{(1-{\mathsf q}^{\mathsf x+\mathsf T-\mathsf
t-\mathsf S})(1-{\mathsf q}^{-\mathsf t-\mathsf N+\mathsf
x})(1-\kappa ^2 {\mathsf q}^{\mathsf x-\mathsf t-\mathsf S})(1-
\kappa ^2 {\mathsf q}^{\mathsf x+\mathsf N-\mathsf t})}
 {(1- \kappa ^2 {\mathsf q}^{2\mathsf x-\mathsf t-\mathsf S})\
 (1- \kappa ^2 {\mathsf q}^{2\mathsf x-\mathsf t-\mathsf
 S})}\right)^\frac 12
$$
 and
$$
\mathsf U_1=\left(\frac{{\mathsf q}^{\mathsf T-\mathsf
2t}{(1-{\mathsf q}^{\mathsf x-\mathsf S-\mathsf N})(1- \kappa ^2
{\mathsf q}^{\mathsf x-\mathsf T})
  {(1-{\mathsf q}^{\mathsf x} )}(1-\kappa ^2 {\mathsf q}^{\mathsf x+\mathsf
N-\mathsf S} )}} {(1-\kappa ^2
 {\mathsf q}^{2\mathsf x-\mathsf t-\mathsf S})
 (1- \kappa ^2 {\mathsf q}^{2\mathsf x-\mathsf t-\mathsf
 S})}\right)^\frac12.
$$

Equivalently,
$$
   \mathsf U(x,y)=const\cdot
  \left[\mathsf u \delta_{x+1}^y+
  \delta_x^y\right],
$$
where
$$
\mathsf u=\frac{\mathsf U_1}{\mathsf U_0}=\left({ \frac{{\mathsf
q}^{\mathsf T-2\mathsf t}(1-{\mathsf q}^{\mathsf x-\mathsf S-\mathsf
N})(1-{\mathsf q}^{\mathsf x} )}
 {(1-{\mathsf q}^{\mathsf x+\mathsf T-\mathsf t-\mathsf S})(1-{\mathsf q}^{-\mathsf t-\mathsf N+\mathsf x})}
 \frac{(1-\kappa ^2 {\mathsf q}^{\mathsf x+\mathsf N-\mathsf S})(1- \kappa ^2 {\mathsf q}^{\mathsf x-\mathsf T})}
 {(1- \kappa ^2{\mathsf q}^{\mathsf x+\mathsf N-\mathsf t})(1-\kappa ^2
{\mathsf q}^{\mathsf x-\mathsf t-\mathsf S})}}\right)^\frac12.
$$

Fourier transform gives us the operator of multiplication by
$const(1+\mathsf u/w)$, where $w$ is the coordinate on the circle
$|w|=1$.

If we now multiply all necessary operators, perform inverse Fourier
transform and substitute $w\to 1/w$ in the resulting integral, we
get the desired limit kernel. Note that the constant prefactor in
$\mathsf U$ can be omitted since it corresponds to the conjugation
of the kernel that does not affect the correlation functions.

\medskip

The final case  $k>l$ is similar.  The interested reader can find some
details in \cite{Gor} where similar computations (with Hahn
polynomials instead of $q$-Racah polynomials) were performed.

\section{Computer simulations. Different limit regimes.}

\label{Section_comp_simulations}

Using the perfect sampling algorithm described in Section
\ref{Section_perfect_sampling_algorithm}, we performed some computer
simulations; the program that we used can be found at
\texttt{http://www.
math.caltech.edu/papers/Borodin-Gorin-Rains.exe}. We are mostly
interested in the case when the hexagon is large, since in this case
we can see some limit shapes appearing. In all pictures we color the
three types of lozenges in three different colors (as in Figure 1).
When we draw big pictures, we erase the borders between lozenges and
get some coloring of a hexagon in three colors which can also be
viewed as a stepped surface in $\mathbb R^3$.

Although we mostly show not very big pictures, our algorithm can
generate random tilings of a $1000\times 1000\times 1000$ hexagon in a
reasonable amount of time.

\smallskip

 The following picture shows a plane partition
in a $70\times 90\times 70$ box sampled from the distributions with
parameters $q=0.97$, $\kappa =1$.  The formation of a limit shape with
frozen regions is clearly visible on the picture, and the next
picture shows the border of the frozen region as predicted by
Theorem \ref{Th_Local_measures}.

\begin{center}
 {\scalebox{0.28}{\includegraphics{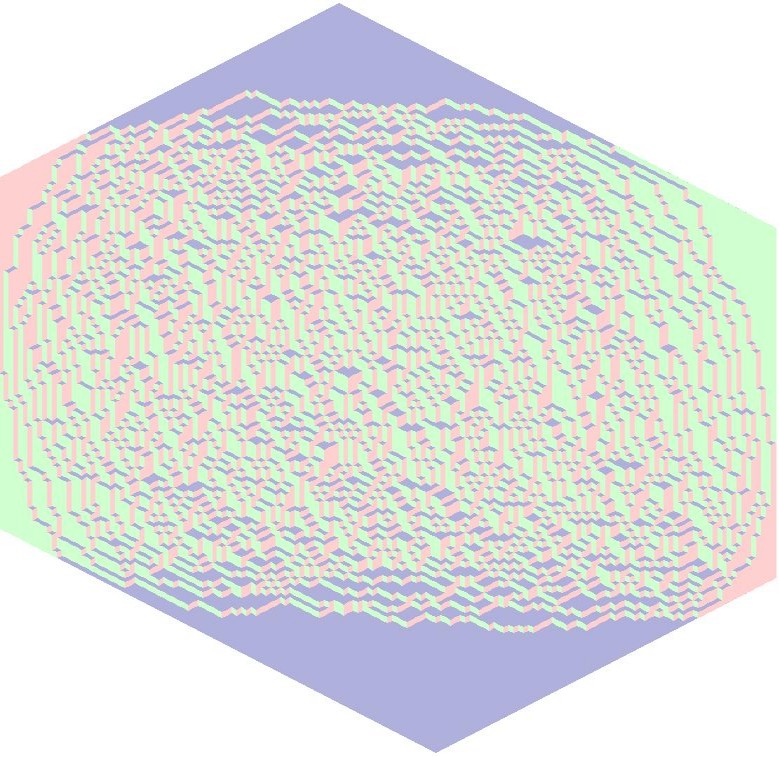}}}
 {\scalebox{0.37}{\includegraphics{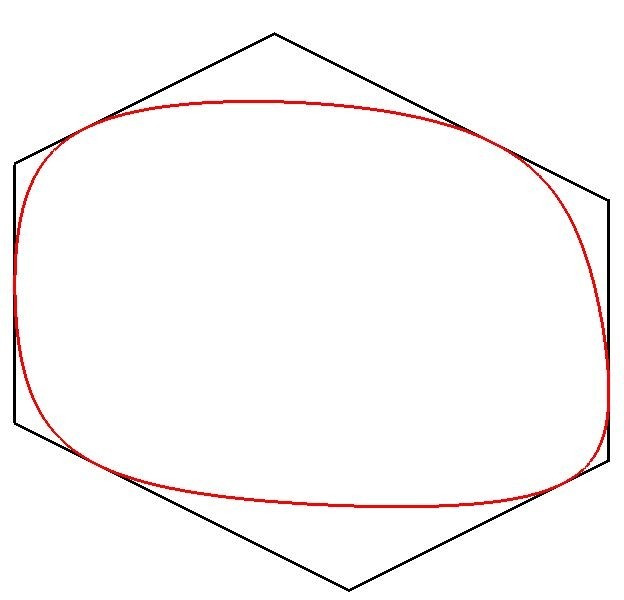}}}
\end{center}

By the appropriate limit transition, we get plane partitions
distributed as $q^{volume}$. The following picture shows a random
plane partitions in a $70\times 90\times 70$ box and the corresponding
theoretical frozen boundary with $q=1.04$.

\begin{center}
 {\scalebox{0.28}{\includegraphics{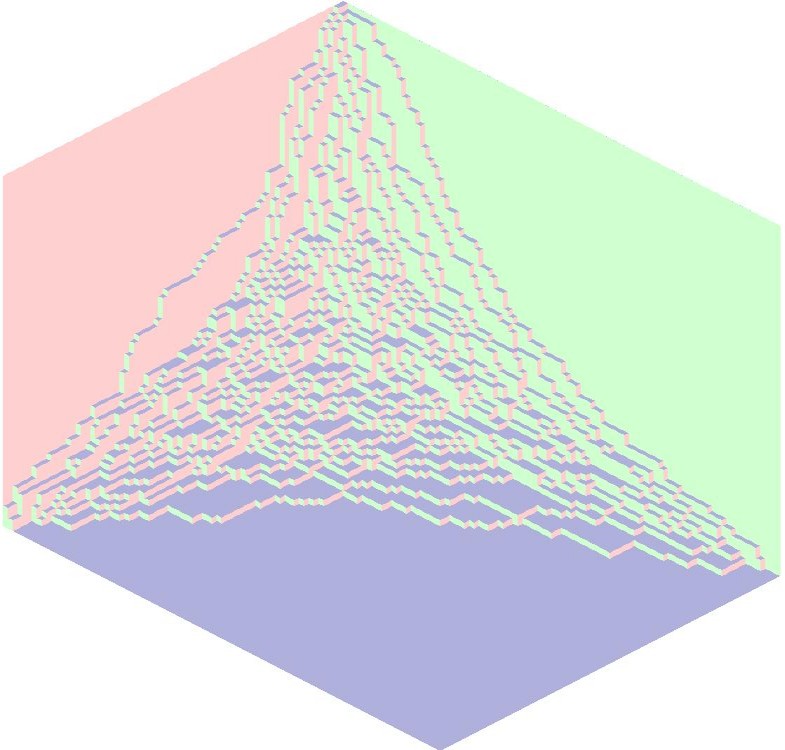}}}
 {\scalebox{0.37}{\includegraphics{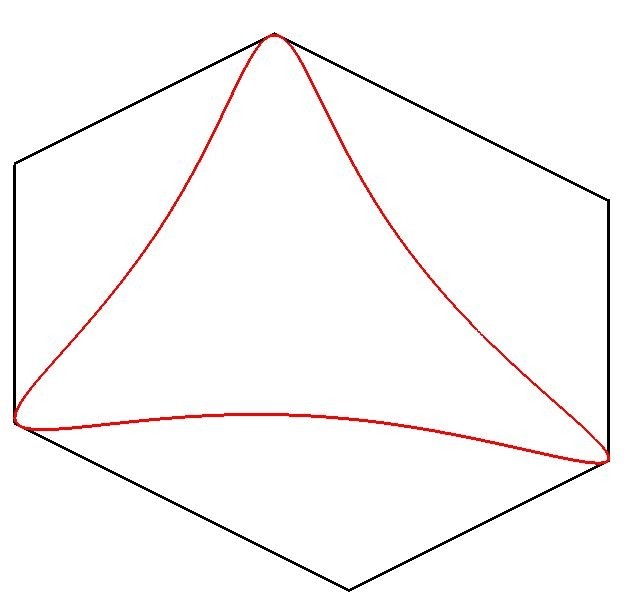}}}

\end{center}

As was explained in Section \ref{Section_distributions}, if we send
$q\to 1$ in the original model, then the weight of a horizontal
lozenge becomes a linear function in the vertical coordinate. If we
tune the parameters in such a way that our linear function has a
zero at the bottommost point of a hexagon, then we get the following
random plane partition in a $70\times 90\times 70$ box, cf. the
corresponding theoretical frozen boundary.

\begin{center}
 {\scalebox{0.28}{\includegraphics{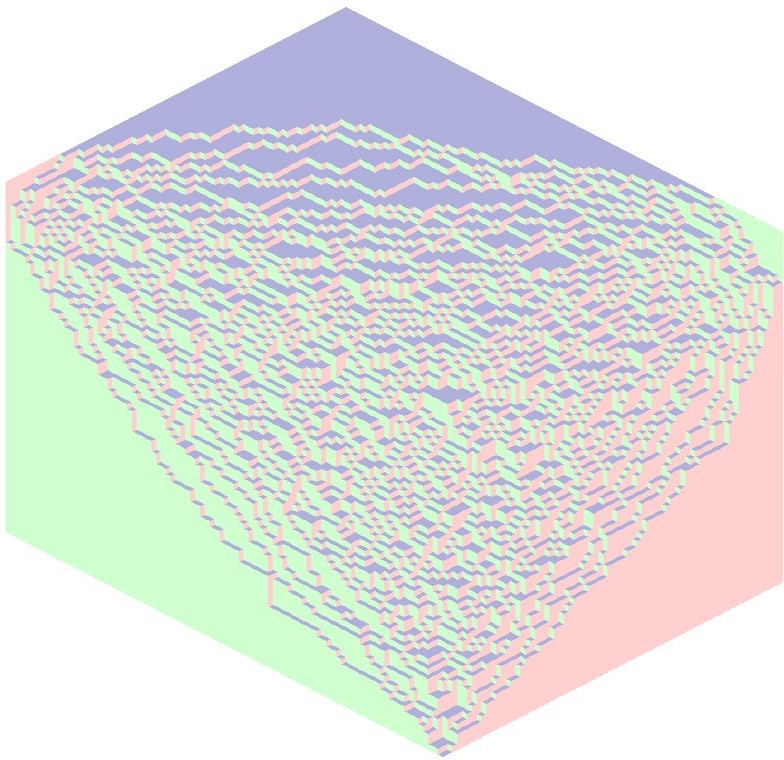}}}
 {\scalebox{0.41}{\includegraphics{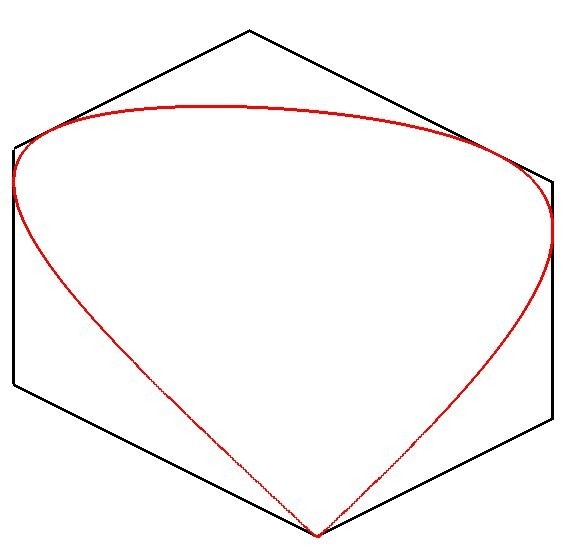}}}
\end{center}

We see that the border of frozen region has a node near the
bottommost point of a hexagon.

\medskip

Another interesting case to consider is when $q$ does not tend to
$1$ as the size of the hexagon tends to infinity. Look at the
following picture where the random plane partition with parameters
$q=0.9$, $\kappa =1$ in a $70\times 90\times 70$ box is shown. We see
that the surface becomes different from the ones shown above. We
call this new class of surfaces \emph{waterfalls}.

\begin{center}
 {\scalebox{0.28}{\includegraphics{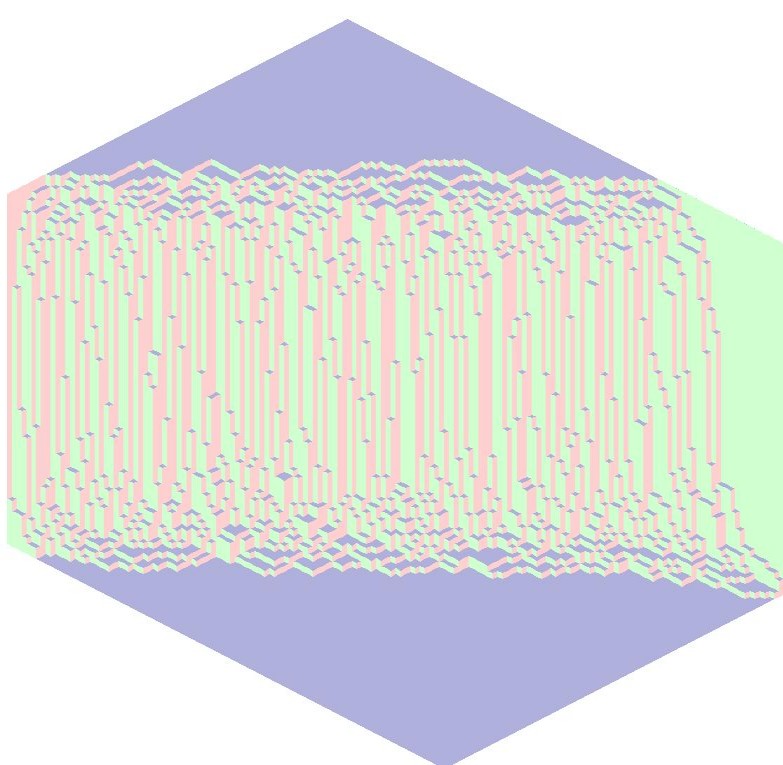}}}
\end{center}

The next picture shows an even more degenerate case $q=0.7$.
\begin{center}
 {\scalebox{0.28}{\includegraphics{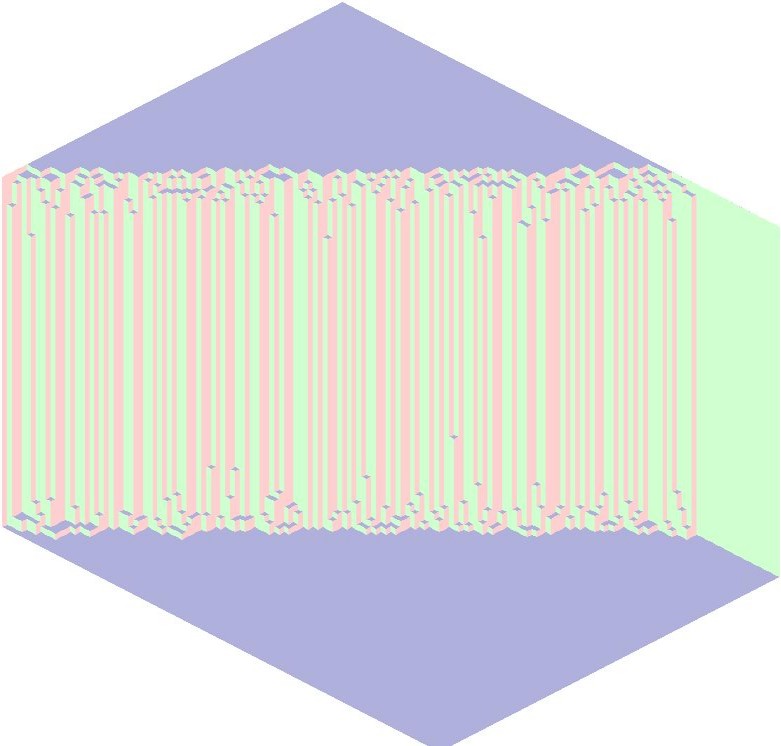}}}
\end{center}
We hope to study the asymptotic behavior of these cases in a later
publication.

\medskip

Finally, let us turn to the trigonometric $q$-Racah case. In this
case the weight of a horizontal lozenge is
\blue{$\sin(\alpha(j-(S+1)/2)+\beta)$}. One can tune the parameters
$\alpha$ and $\beta$ in such a way that this weight becomes zero at
both the topmost and the bottommost points of the hexagon. The
boundary of the frozen region has two nodes in this case.

\begin{center}
 {\scalebox{0.28}{\includegraphics{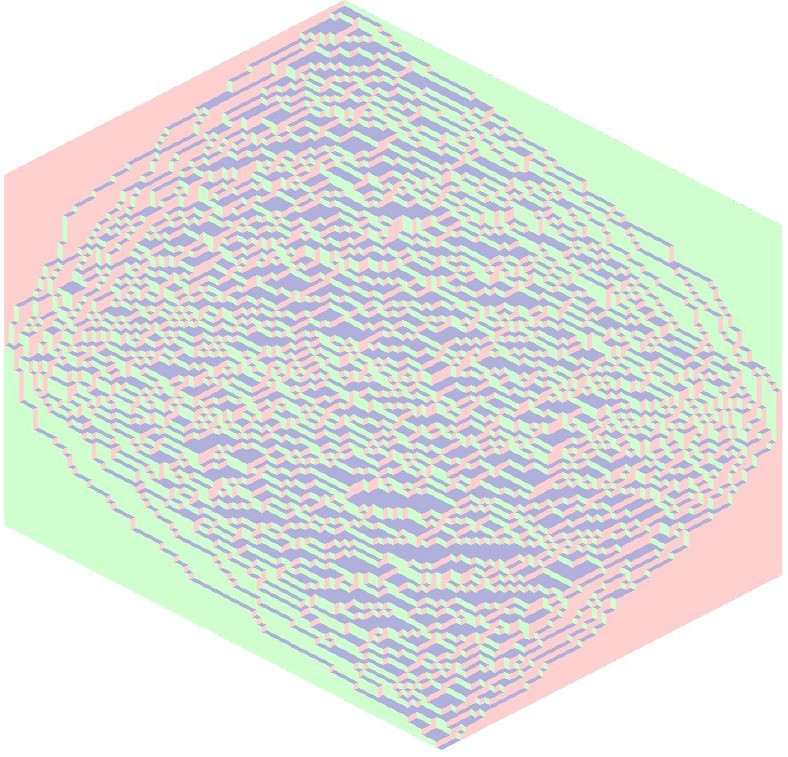}}}
 {\scalebox{0.37}{\includegraphics{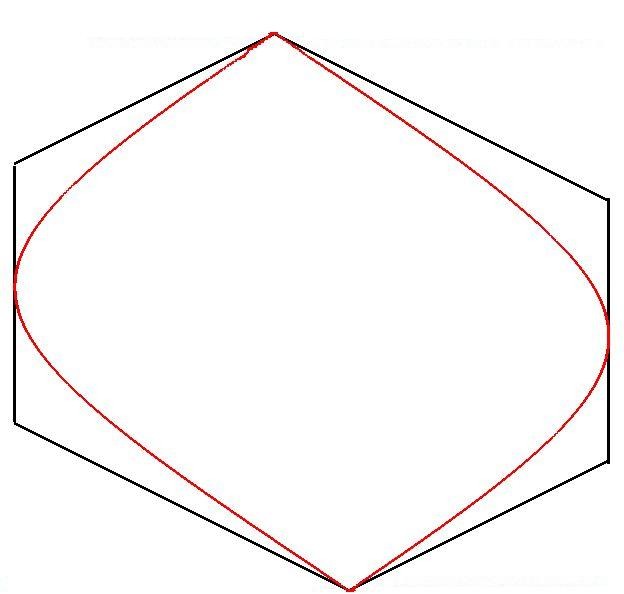}}}
\end{center}

\section{Appendix. Lozenge tilings and biorthogonal
functions}\label{sc:Appendix}

We were led to consider the $q$-Racah and Racah cases via the
realization that, in much the same way that uniform lozenge tilings
are related to Hahn polynomials, so were $q$-weighted lozenge
tilings related to $q$-Hahn polynomials.  This naturally led to the
question of whether more general (discrete) hypergeometric
orthogonal polynomials could arise in this way. In fact, as we will
consider in this section, one can generalize even further, to the
elliptic analogue, certain biorthogonal elliptic functions due to
Spridonov and Zhedanov \cite{SpiridonovVP/ZhedanovAS:2000b}.

One way to derive the required lozenge weights (essentially
equivalent weights were considered by Schlosser in
\cite{SchlosserM:2008}, although the connection to plane partitions
or lozenge tilings was not made explicit there) is via the following
desiderata:

First, we want the total weight of all tilings of a hexagon to be
``nice'', in the sense that it should be expressible as a product of
simple theta functions
\[
\theta_p(x) := \prod_{0\le k} (1-p^k x)(1-p^{k+1}/x);
\]
the same should apply to the individual weights as well.  Note that, as is
traditional in much recent work on elliptic special functions, we use the
multiplicative form for our elliptic curves and theta functions.  This can
be translated to the usual doubly-periodic form by composing with the
singly-periodic function $x\mapsto \exp(2\pi i x)$, but the multiplicative
form makes certain degenerations simpler to obtain.

Next, the sums that arise should be hypergeometric, in the sense
that any parameters should vary along geometric progressions as one
moves around in the hexagon.

Finally, the form of the weight of a given cube (i.e., the ratio of
the weights of the two ways to tile any given unit hexagon) should
be invariant under the symmetry group of the triangular lattice,
except that certain reflections should invert the weight. Note that
the weights of cubes are gauge-invariant, and any set of choices of
cube weights corresponds to a unique gauge-equivalence class.

We may rewrite these criteria in terms of plane partitions, noting
that the requirement that the cube weights come from lozenge weights
places the additional restriction that if $w(x,y,z)$ is the weight
of a cube with centroid $(x,y,z)$, then
\[
w(x,y,z) = w(x+1,y+1,z+1).
\]
If we consider plane partitions inside a $1\times 1\times n$ box, we
thus require that
\[
\sum_{0\le l\le n} \prod_{0\le m<l} w(1/2,1/2,m+1/2)
\]
should be nice, and should indeed correspond to a hypergeometric
sum.  The isotropy condition then implies more generally that
\[
\sum_{0\le l\le n} \prod_{0\le m<l} w(x,y,z+m)
\]
is a hypergeometric sum for any half-integer vector $(x,y,z)$.  In
particular, this sum is doubly-telescoping, in the sense that the
sum over any subinterval is nice.  In particular, if we assume that
the sum should be a special case of the Frenkel-Turaev sum
\cite{FrenkelIB/TuraevVG:1997}, we find that it should have the form
(up to some convenient changes of parameters)
\[
\sum_{0\le l\le n} \frac{q^l \theta_p(ab q^{2l-1})}
     {\theta_p(q^{l-1}a,q^l a,q^{l-1} b,q^l b)}
\frac{\theta_p(a/q,a,b/q,b)}
     {\theta_p(ab/q)}
= \frac{\theta_p(q^{n-1} ab,q^{n+1},a,b)}
     {\theta_p(ab/q,q,q^n a,q^n b)},
\]
where $a$ and $b$ depend on $(x,y,z)$.  In particular, we find
\[
w(x,y,z+m) = \frac{q\theta_p(q^{m-1}a,q^{m-1}b,ab q^{2m+1})}
     {\theta_p(q^{m+1}a,q^{m+1}b,ab q^{2m-1})}
= \frac{q^3\theta_p(q^{m-1}a,q^{m-1}b,q^{-2m-1}/ab)}
     {\theta_p(q^{m+1}a,q^{m+1}b,q^{1-2m}/ab)},
\]
where $a$,$b$ depend on $(x,y,z)$; consistency then implies that
$q^{-z}a$ and $q^{-z}b$ are independent of $z$.

Rotating the picture by $120$ degrees gives a similar expression for
$w(i,j,k)$ with the dependence on $i$ or $j$ factored out; comparing
the results leads us to an expression of the form
\[
w(x,y,z) =
\frac{q^3\theta_p(q^{y+z-2x-1}u_1,q^{x+z-2y-1}u_2,q^{x+y-2z-1}u_3)}
     {\theta_p(q^{y+z-2x+1}u_1,q^{x+z-2y+1}u_2,q^{x+y-2z+1}u_3)},
\]
where $u_1=a$, $u_2=b$, $u_3=1/ab$; i.e., $u_1$, $u_2$, $u_3$ are generic
such that $u_1u_2u_3=1$.  We will see that this indeed gives rise to a
factored sum over plane partitions in a cube.  Note that if we rewrite this
expression in terms of new variables $\tilde{u}_1=q^{y+z-2x}u_1$,
$\tilde{u}_2=q^{x+z-2y}u_2$, $\tilde{u}_3=q^{x+y-2z}u_3$, then
\[
w(x,y,z) =
\frac{q^3\theta_p(\tilde{u}_1/q,\tilde{u}_2/q,\tilde{u}_3/q)}
     {\theta_p(q\tilde{u}_1,q\tilde{u}_2,q\tilde{u}_3)},
\]
which can be described in a canonical way: $w$ is the value at $q$ of the
unique elliptic function with simple zeros at $\tilde{u}_i$, simple poles
at $\tilde{u}_i^{-1}$, and the value $1$ at $1$.  It follows, in particular,
that $w$ is invariant under shifting any parameter by $p$, as well as under
all modular transformations.

To write this in terms of lozenge weights, there are, of course, a
number of choices one could make.  One convenient choice is to allow
only horizontal lozenges to have nonzero weights.  We must thus have
\[
\frac{w(i,j+1)}{w(i,j)} =
\frac{q\theta_p(q^{j-3i/2-1}u_1,q^{j+3i/2-1}u_2,q^{2j+1}u_1u_2)}
     {\theta_p(q^{j-3i/2+1}u_1,q^{j+3i/2+1}u_2,q^{2j-1}u_1u_2)}
\]
where $w(i,j)$ denotes the weight of a horizontal lozenge with upper
corner at $(i,j)$ (recall that in terms of 3-D coordinates,
$i=x-y$, $j=z-(x+y)/2$).  This recurrence is straightforward to
solve, and one obtains
\[
w(i,j) = C(i) \frac{q^{j-1/2}
(u_1u_2)^{1/2}\theta_p(q^{2j-1}u_1u_2)}
     {\theta_p(q^{j-3i/2-1}u_1,q^{j-3i/2}u_1,q^{j+3i/2-1}u_2,q^{j+3i/2}u_2)},
\]
where $C(i)$ is an arbitrary (non-vanishing) function of $i$.

If we view (with a mind to applying Kasteleyn's theorem) the lozenge
weight as a matrix indexed by a right-pointing and a left-pointing
triangle, we find that, coordinatizing triangles by their upper
corners,
\begin{gather*}
w((i,j),(i,j)) = C(i) \frac{(u_1u_2)^{1/2}q^{j-1/2}
\theta_p(q^{2j-1}u_1u_2)}
     {\theta_p(q^{j-3i/2-1}u_1,q^{j-3i/2}u_1,q^{j+3i/2-1}u_2,q^{j+3i/2}u_2)}
\\
w((i,j),(i+1,j+1/2)) = 1,\qquad w((i,j),(i+1,j-1/2)) = 1,
\end{gather*}
and all other values are $0$.

Let $\Pi^{y_0,y_1}_{x_0,x_1}$ represent the parallelogram
\[
x_0\le i\le x_1, y_0\le j+i/2\le y_1,
\]
and observe that the restriction of $w$ to triangles in
$\Pi^{y_0,y_1}_{x_0,x_1}$ is a square matrix, and we can thus
attempt to invert $w$ in such a region.  In fact, not only can we
explicitly invert $w$ in such parallelograms, but the result is
independent of the choice of parallelogram.

\begin{theorem}
The inverse transpose of $w$ in $\Pi^{y_0,y_1}_{x_0,x_1}$ has the
form
\begin{multline*}
W((i_0,j_0),(i_1,j_1)) = \delta_{i_0<i_1} \prod_{i_0\le k<i_1}
C(k)\cdot(u_1u_2)^{(i_1-i_0-1)/2}
\notag\\
\times \delta_{j_1+i_1/2\le j_0+i_0/2} (-1)^{j_0-i_0/2-j_1+i_1/2-1}
q^{(i_1-i_0-1)(i_1-i_0+4j_1-2)/4}
\notag\\
\times
\frac{\theta_p(q^{j_0+i_0/2-j_1-i_1/2+1},q^{j_0+i_0/2+j_1-i_1/2}u_1u_2;q)_{i_1-i_0-1}}
     {\theta_p(q,q^{j_0-i_0/2-i_1}u_1,q^{j_1-3i_1/2+1}u_1,q^{j_1+i_0+i_1/2}u_2,q^{j_0+3i_0/2+1}u_2;q)_{i_1-i_0-1}},
\end{multline*}
where
\[
\theta_p(x;q)_k := \prod_{0\le i<k} \theta_p(q^i x).
\]
\end{theorem}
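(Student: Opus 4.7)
The plan is to verify the claimed formula for $W$ by direct substitution into the defining relation $W = (w^{-1})^T$, that is, $\sum_l w(r_0, l) W(r_1, l) = \delta_{r_0, r_1}$ for any two right-pointing triangles $r_0 = (i_0, j_0)$ and $r_1 = (i_1, j_1)$. The key observation is that $w$ is extremely sparse: each row has at most three nonzero entries, namely the elliptic diagonal weight $w(r_0, r_0)$ and the two off-diagonal $1$'s at $(i_0+1, j_0 \pm 1/2)$. The inverse relation therefore reduces to the three-term identity
\begin{equation*}
w(r_0, r_0)\, W(r_1, r_0) + W\bigl(r_1, (i_0+1, j_0 + \tfrac12)\bigr) + W\bigl(r_1, (i_0+1, j_0 - \tfrac12)\bigr) = \delta_{r_0, r_1}.
\end{equation*}

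First I would dispose of the trivial cases. When $i_1 > i_0 + 1$ all three $W$-values vanish by the explicit support condition $i_0 < i_1$, matching $\delta = 0$. When $i_1 \in \{i_0, i_0+1\}$, the combined support restrictions (including the linear inequality $j_1 + i_1/2 \le j_0 + i_0/2$) leave at most one of the three terms nonzero; using the empty-product convention for the Pochhammers and tracking the $C(k)$ prefactors, a direct calculation shows that the surviving contribution is exactly $\delta_{(i_0,j_0),(i_1,j_1)}$.

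The substantive case is $i_1 < i_0$, where all three $W$-values are generically nonzero. Substituting the product formula into each term, the product $\prod_{i_1 \le k < i_0} C(k)$, the factor $(u_1u_2)^{(i_0-i_1-1)/2}$, and the common portion of the Pochhammer symbols $\theta_p(\,\cdot\,;q)_{i_0-i_1-1}$ all cancel. The sign $(-1)^{j_0 - i_0/2 - j_1 + i_1/2 - 1}$ and the $q$-power $q^{(i_0-i_1-1)(i_0-i_1+4j_1-2)/4}$ shift predictably when $(i_0, j_0)$ is replaced by $(i_0+1, j_0 \pm 1/2)$, and what survives is an identity among three products of theta functions with shifted arguments. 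This identity is an instance of the classical three-term (Riemann-type) theta relation, applied with parameters built from $q$, $u_1$, $u_2$, and the coordinates of $r_0$ and $r_1$.

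Finally, independence of $W$ from the choice of parallelogram $\Pi^{y_0,y_1}_{x_0,x_1}$ is automatic once the recurrence and boundary values are verified: since the formula respects the support condition $i_0 < i_1$, the matrix $W$ is strictly triangular in a suitable partial order, so enlarging the parallelogram merely appends rows and columns outside the support cone and cannot alter the already-verified entries. The main obstacle is the bookkeeping step in the third paragraph: the $q$-powers, signs, and theta-Pochhammer indices shift in a coordinated but intricate way, and identifying the precise specialization of the Riemann-type theta identity that yields the three-term recurrence is where the elliptic structure of the weight is genuinely used.
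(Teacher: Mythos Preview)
Your approach is essentially the paper's: verify $wW^T=I$ directly, use the three-term sparsity of $w$ to reduce to a recurrence, dispose of the degenerate cases, and in the generic case reduce to the theta addition law (what you call the Riemann-type relation). Two small cautions: you have swapped the roles of $(i_0,j_0)$ and $(i_1,j_1)$ relative to the theorem's notation (the paper fixes the \emph{first} argument of $W$ and varies the second around $(i_1,j_1)$), and your boundary analysis is slightly inaccurate --- in the equal-$i$ case two of the three terms can be nonzero and must be seen to cancel as $(-1)^{m}+(-1)^{m-1}=0$, rather than ``at most one'' surviving.
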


\begin{proof}
Note that by Kasteleyn's theorem, $W$ is the total weight (up to
sign) of all lozenge tilings of the parallelogram that omit the two
given triangles.  In particular, $W((i_0,j_0),(i_1,j_1))$ must
vanish if $i_0\ge i_1$ or $j_0+i_0/2<j_1+i_1/2$ simply because then
no such tiling exists.

Now, the claim that $W$ is the inverse transpose of $w$ reduces to
the statement
\begin{multline*}
W((i_0,j_0),(i_1,j_1))w(i_1,j_1) + W((i_0,j_0),(i_1+1,j_1-1/2))\\ +
W((i_0,j_0),(i_1+1,j_1+1/2)) = \delta_{(i_0,j_0),(i_1,j_1)}.
\end{multline*}
This holds trivially for $i_0>i_1$, since all three terms vanish,
and similarly for $i_0=i_1$, $j_1>j_0$.  When $i_0=i_1$, $j_0=j_1$,
the claim reduces to
\[
W((i_0,j_0),(i_0+1,j_0-1/2)) = 1,
\]
while for $i_0=i_1$, $j_0\ge j_1$, we have
\begin{multline*}
W((i_0,j_0),(i_0+1,j_1-1/2)) + W((i_0,j_0),(i_0+1,j_1+1/2)) \\=
(-1)^{j_0-j_1-1} + (-1)^{j_0-j_1} = 0,
\end{multline*}
as required.

It remains to consider the case $i_0<i_1$.  If
$j_1+i_1/2>j_0+i_0/2$, then all three terms again vanish, while if
$j_1+i_1/2=j_0+i_0/2$, the third term vanishes, and
\[
W((i_0,j_0),(i_1,j_1))w(i_1,j_1) + W((i_0,j_0),(i_1+1,j_1-1/2)) = 0
\]
as required.  Finally, when $j_1+i_1/2<j_0+i_0/2$, so that all three
terms survive, if we divide by
\[
\frac{W((i_0,j_0),(i_1,j_1))w(i_1,j_1)}
     {\theta_p(q^{i_1-i_0},q^{j_0-i_0/2-i_1-1}u_1,q^{j_0+i_0/2+i_1}u_2,
               q^{2j_1-1}u_1u_2)},
\]
we simply obtain a special case of the addition law for $\theta_p$,
in the form
\begin{multline*}
\theta_p(a_0z,a_1z,a_2z,a_0a_1a_2/z) -
\theta_p(a_0a_1,a_0a_2,a_1a_2,z^2) \\ +
\theta_p(z/a_0,a_1/z,a_2/z,a_0a_1a_2z)za_0 = 0
\end{multline*}
with
\begin{align*}
&a_0 = q^{-j_0/2-3i_0/4+j_1/2+3i_1/4},\quad a_1 =
q^{j_0/2-i_0/4+j_1/2-5i_1/4-1}u_1,\\
 &a_2 =
q^{j_0/2+3i_0/4+j_1/2+3i_1/4}u_2, \quad z =
q^{j_0/2-i_0/4-j_1/2+i_1/4}.
\end{align*}
\end{proof}

\begin{Remark} One major source of guidance regarding the
form of $W$ is that it corresponds to an enumeration of plane
partitions in a rectangular parallelepiped with dimensions $m\times
n\times 1$, say; i.e., a sum over ordinary partitions.  If we first
sum over the first part of the partition, we find that $W$ should
look like the term of a (singly) telescoping hypergeometric sum. One
can also, of course, compute ``small'' values of $W$ in the case
$p=0$, and look for patterns in the resulting factorizations.
\end{Remark}

\medskip

\begin{lemma}
\label{Lemma_Weight_left}

Consider the domain
\begin{center}
 {\scalebox{0.8}{\includegraphics{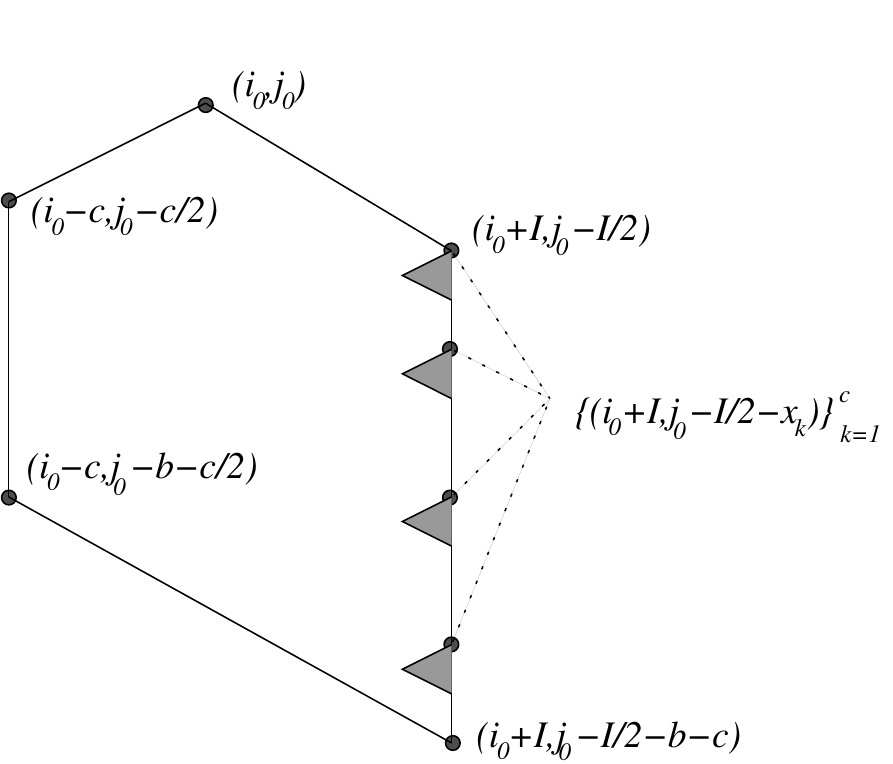}}}
\end{center}

 The total weight of lozenge tilings of this
domain is equal to a constant independent of $\{x_k\}$ times
\[
\prod_{1\le l\le c} \frac{(-1)^{x_l} \theta_p( q^{I+1},
q^{I-j_0+3i_0/2}/u_1, q^{-I+2-j_0-3i_0/2}/u_2,
q^{I+1-2j_0}/u_1u_2;q)_{x_l}} { \theta_p( q, q^{2I-j_0+3i_0/2}/u_1,
q^{2-j_0-3i_0/2}/u_2, q^{1-2j_0}/u_1u_2;q)_{x_l}}
\]
times
\[
\frac{ \prod_{1\le k<l\le c}
  q^{-x_k}\theta_p(q^{x_k-x_l},q^{x_k+x_l+I-2j_0+1}/u_1u_2)}
{ \prod_{1\le l\le c} \theta_p(q^{1-I+j_0-3i_0/2-x_l} u_1,
          q^{x_l-j_0-3i_0/2+2}/u_2;q)_{c-1}}.
\]
\end{lemma}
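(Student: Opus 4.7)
The plan is to reduce the total weight to a determinant using the inverse Kasteleyn matrix $W$ from the preceding theorem, and then to evaluate that determinant by an elliptic Cauchy-type identity.

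By Kasteleyn's theorem applied to the domain (with edge weights given by the $w(i,j)$'s read off the pictured region), or equivalently by the Lindstr\"om--Gessel--Viennot lemma applied to the non-intersecting paths encoding the tiling, the total weight of tilings with holes prescribed at positions $(I,x_1),\ldots,(I,x_c)$ equals, up to a global constant depending only on the boundary data of the domain, a $c\times c$ determinant of entries of $W$, one source per row and one hole position per column. Substituting the explicit product formula for $W$ given in the theorem, each entry becomes an elliptic hypergeometric expression: a monomial in $q$ times a ratio of theta Pochhammer symbols whose lengths are linear in $x_l$ and in the row index $k$.

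The next step is to pull out from column $l$ all factors that depend on $x_l$ but not on the row index $k$. This extraction should produce exactly the per-column factor
\[
\frac{(-1)^{x_l}\,\theta_p(q^{I+1},\,q^{I-j_0+3i_0/2}/u_1,\,q^{-I+2-j_0-3i_0/2}/u_2,\,q^{I+1-2j_0}/u_1u_2;q)_{x_l}}{\theta_p(q,\,q^{2I-j_0+3i_0/2}/u_1,\,q^{2-j_0-3i_0/2}/u_2,\,q^{1-2j_0}/u_1u_2;q)_{x_l}}
\]
of the claim, and leave behind a determinant whose entries have the shape of an elliptic Cauchy matrix: up to a common theta Pochhammer factor depending on $c$, each $(k,l)$ entry is a ratio $\theta_p(\alpha_k q^{x_l},\beta_k q^{-x_l})/\theta_p(\gamma_k q^{x_l},\delta_k q^{-x_l})$ with the source parameters $(\alpha_k,\beta_k,\gamma_k,\delta_k)$ depending only on $k$.

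The main obstacle is the evaluation of this reduced determinant. The required identity is the multivariate elliptic Cauchy determinant of Frobenius--Rosengren--Warnaar type (closely related to the Frenkel--Turaev $_{10}V_9$ sum already cited in the section), which in the relevant specialization yields precisely the Vandermonde-like numerator $\prod_{k<l}q^{-x_k}\theta_p(q^{x_k-x_l},\,q^{x_k+x_l+I-2j_0+1}/u_1u_2)$ and the denominator $\prod_l\theta_p(q^{1-I+j_0-3i_0/2-x_l}u_1,\,q^{x_l-j_0-3i_0/2+2}/u_2;q)_{c-1}$ appearing in the lemma. Once this identity is invoked, matching the overall $x$-independent prefactor is a bookkeeping exercise: the $C(i)$'s, the $(u_1u_2)^{(i_1-i_0-1)/2}$ factors and the quadratic-in-$i$ powers of $q$ collected from the $W$-entries combine into a constant that depends only on the domain, as claimed.
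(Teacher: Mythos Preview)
Your overall architecture is exactly the paper's: reduce to a $c\times c$ determinant $\det[W((i_0-k,j_0-k/2+1),(i_0+I,j_0-I/2-x_l))]$ via Kasteleyn, pull the $x_l$-only factors out of each column, and evaluate what remains. The column factor you write down is indeed what the paper obtains.

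Where you go wrong is in the description of the residual determinant. After extracting the column factors, the $(k,l)$ entry is (up to $x_l$-independent factors)
\[
\frac{\theta_p(q^{-x_l},\,q^{1+x_l-2j_0+I}/u_1u_2;q)_{k-1}}
     {\theta_p(q^{1-I+j_0-3i_0/2-x_l} u_1,\,q^{x_l-j_0-3i_0/2+2}/u_2;q)_{k-1}},
\]
i.e.\ a ratio of elliptic Pochhammer symbols of length $k-1$. This is \emph{not} an elliptic Cauchy/Frobenius matrix with entries of the shape $\theta_p(\alpha_k q^{x_l},\beta_k q^{-x_l})/\theta_p(\gamma_k q^{x_l},\delta_k q^{-x_l})$; the number of theta factors grows with the row index, so the matrix is triangular/Vandermonde in character, not Cauchy. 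The relevant tool is therefore not an elliptic Frobenius--Cauchy identity but an elliptic \emph{Vandermonde} evaluation. The paper observes that, in the variable $z_l=q^{x_l+I/2-j_0+1/2}(u_1u_2)^{-1/2}$, the entry is invariant under $z_l\mapsto 1/z_l$, and then applies Corollary~5.4 of Warnaar~\cite{WarnaarSO:2002} (an elliptic $BC$-type Vandermonde determinant), which directly produces the cross-term $\prod_{k<l}q^{-x_k}\theta_p(q^{x_k-x_l},q^{x_k+x_l+I-2j_0+1}/u_1u_2)$ and the denominator Pochhammers of length $c-1$. So the fix is simply to replace ``elliptic Cauchy determinant'' by ``Warnaar's elliptic Vandermonde (Cor.~5.4)'' and to note the $z_l\leftrightarrow 1/z_l$ symmetry that makes it applicable; the rest of your outline stands.
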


\begin{proof}
The problem is equivalent to computing the weight of tilings of the
domain
\begin{center}
 {\scalebox{0.8}{\includegraphics{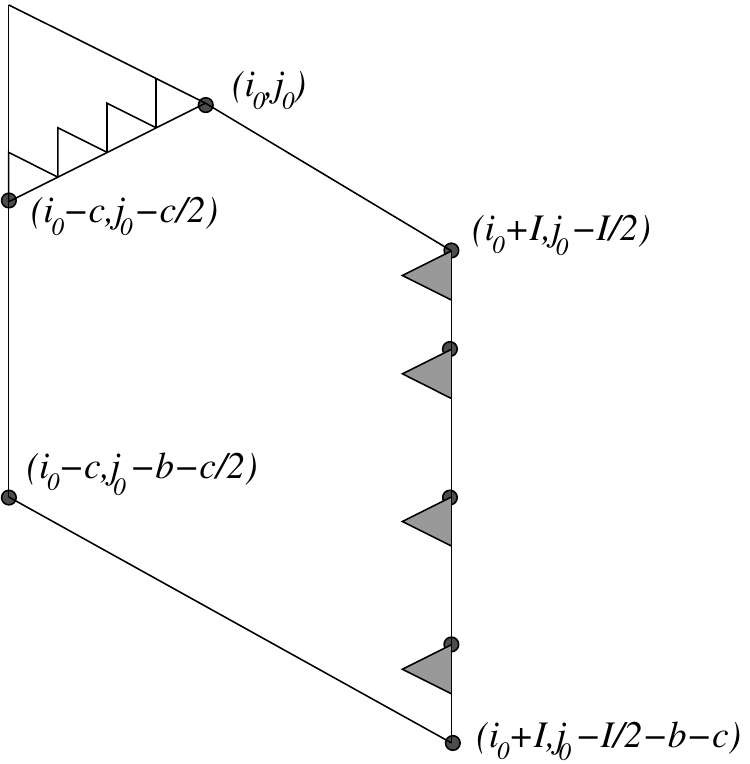}}}
\end{center}
According to Kasteleyn's theorem, cf. \cite{Ka}, we obtain
\[
\det\left[ W((i_0-k,j_0-k/2+1),(i_0+I,j_0-I/2-x_l))
\right]_{k,l=1}^c.
\]
Now, we can write
\begin{multline*}
\frac{W((i_0-k,j_0-k/2+1),(i_0+I,j_0-I/2-x_l))}
     {W((i_0-1,j_0+1/2),(i_0+I,j_0-I/2))}
\\= \frac{W((i_0-1,j_0+1/2),(i_0+I,j_0-I/2-x_l))}
     {W((i_0-1,j_0+1/2),(i_0+I,j_0-I/2))}\notag\\
\times \frac{W((i_0-k,j_0-k/2+1),(i_0+I,j_0-I/2-x_l))}
     {W((i_0-1,j_0+1/2),(i_0+I,j_0-I/2-x_l))},
\end{multline*}
noting that (for generic values of the parameters)
$W((i_0-1,j_0+1/2),(i_0+I,j_0-I/2))\ne 0$.  Since
\begin{multline*}
\frac{W((i_0-1,j_0+1/2),(i_0+I,j_0-I/2-x_l-1))}
     {W((i_0-1,j_0+1/2),(i_0+I,j_0-I/2-x_l))}\notag\\
= - \frac{\theta_p( q^{x_l} q^{I+1}, q^{x_l} q^{I-j_0+3i_0/2}/u_1,
q^{x_l} q^{-I+2-j_0-3i_0/2}/u_2, q^{x_l} q^{I+1-2j_0}/u_1u_2)} {
\theta_p( q^{x_l} q, q^{x_l} q^{2I-j_0+3i_0/2}/u_1, q^{x_l}
q^{2-j_0-3i_0/2}/u_2, q^{x_l} q^{1-2j_0}/u_1u_2)},
\end{multline*}
we find
\begin{multline*}
\frac{W((i_0-1,j_0+1/2),(i_0+I,j_0-I/2-x_l))}
     {W((i_0-1,j_0+1/2),(i_0+I,j_0-I/2))}\notag\\
= (-1)^{x_l} \frac{\theta_p( q^{I+1}, q^{I-j_0+3i_0/2}/u_1,
q^{-I+2-j_0-3i_0/2}/u_2, q^{I+1-2j_0}/u_1u_2;q)_{x_l}} { \theta_p(
q, q^{2I-j_0+3i_0/2}/u_1, q^{2-j_0-3i_0/2}/u_2,
q^{1-2j_0}/u_1u_2;q)_{x_l}}.
\end{multline*}
For the other factor, we similarly have
\begin{multline*}
\frac{W((i_0-k,j_0-k/2+1),(i_0+I,j_0-I/2-x_l))}
     {W((i_0-1,j_0+1/2),(i_0+I,j_0-I/2-x_l))}
\\ \propto \frac{ \theta_p(q^{-x_l},q^{1+x_l-2j_0+I}/u_1u_2;q)_{k-1}}
{\theta_p(q^{1-I+j_0-3i_0/2-x_l} u_1,
          q^{x_l-j_0-3i_0/2+2}/u_2;q)_{k-1}},
\end{multline*}
where we have removed factors independent of $x_l$, and observed
that the right-hand side vanishes when $0\le x_l<k$ as required.  As
a function of
\[
z_l = q^{x_l+I/2-j_0+1/2}(u_1u_2)^{-1/2},
\]
this is invariant under $z_l\mapsto 1/z_l$, and we may thus apply
Corollary 5.4 of \cite{WarnaarSO:2002} to conclude that
\begin{align*}
\det&\left[ \frac{W((i_0-k,j_0-k/2+1),(i_0+I,j_0-I/2-x_l))}
     {W((i_0-1,j_0+1/2),(i_0+I,j_0-I/2-x_l))}
\right]_{k,l=1}^c\notag\\
&\qquad\propto \frac{ \prod_{1\le k<l\le c}
  q^{-x_k}\theta_p(q^{x_k-x_l},q^{x_k+x_l+I-2j_0+1}/u_1u_2)
} {\prod_{1\le l\le c}\theta_p(q^{1-I+j_0-3i_0/2-x_l} u_1,
          q^{x_l-j_0-3i_0/2+2}/u_2;q)_{c-1}}\,.
\end{align*}
\end{proof}

\begin{lemma}
\label{Lemma_weight_right}

 Similarly, the total weight of lozenge tilings of the domain
\begin{center}
 {\scalebox{0.8}{\includegraphics{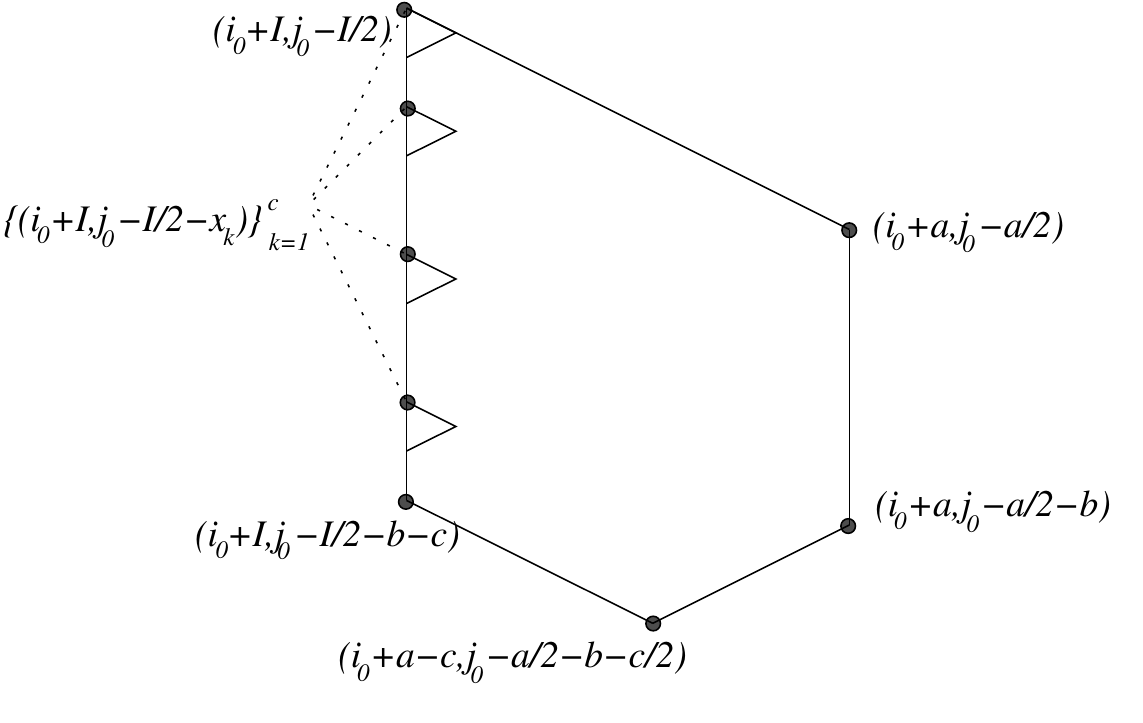}}}
\end{center}
 is
equal to a constant independent of $\{x_k\}$ times
\[
\prod_{1\le k\le c} \frac{ (-1)^{x_k} \theta_p(q^{1-b-c},
         q^{2I-j_0+2+3i_0/2}/u_1,
         q^{-a+c-j_0-3i_0/2}/u_2,
         q^{-2j_0+a+b+1}/u_1u_2;q)_{x_k}}
{ \theta_p(q^{I-a-b+1},
         q^{I+2-j_0+3i_0/2+a-c}/u_1,
         q^{-I-j_0-3i_0/2}/u_2,
         q^{I-2j_0+b+c+1}/u_1u_2;q)_{x_k}
}
\]
times
\[
\frac{ \prod_{1\le k<l\le c}
  q^{-x_k}\theta_p(q^{x_k-x_l},q^{x_k+x_l+I-2j_0+1}/u_1u_2)}
{\prod_{1\le k\le c}
\theta_p(q^{x_k+I-j_0+3i_0/2+2+a-c}/u_1,q^{j_0+3i_0/2+a+1-c-x_k}u_2;q)_{c-1}}
\]
\end{lemma}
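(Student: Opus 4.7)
The strategy is to mirror, step by step, the proof of Lemma \ref{Lemma_Weight_left}, with only the geometric data changing. First, by Kasteleyn's theorem applied to an ambient parallelogram $\Pi^{y_0,y_1}_{x_0,x_1}$ containing the domain in the figure, the total weight of its lozenge tilings equals (up to sign) the determinant $\det[W(p_l,q_k)]_{k,l=1}^{c}$, where $\{p_l\}_{l=1}^{c}$ are the triangles on the variable $x_l$-boundary and $\{q_k\}_{k=1}^{c}$ are the triangles of the fixed staircase on the opposite (right) side of the domain. The coordinates of $\{p_l\}$ and $\{q_k\}$ are read off directly from the figure; compared with Lemma \ref{Lemma_Weight_left}, only the $\{q_k\}$'s change, since the right staircase now depends on the parameters $a$, $b$, $c$ rather than on $I$.

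Next, for each $l$ I would normalize the $l$-th row by dividing through by $W(p_l,q_{*})$ for a single reference column, say $q_{*}=q_{1}$. Using the closed form of $W$ established immediately before Lemma \ref{Lemma_Weight_left}, the ratio $W(p_l,q_k)/W(p_l,q_1)$ splits as a product of (i) a one-variable theta-Pochhammer factor in $x_l$ built from the four theta-Pochhammer symbols in the numerator/denominator of $W$, and (ii) a residual factor whose $x_l$-dependence, as a function of $z_l:=q^{x_l+I/2-j_0+1/2}(u_1u_2)^{-1/2}$, is invariant under $z_l\mapsto 1/z_l$. Tracking the position of the right staircase, the single-variable factor turns out to be exactly the first claimed product, with the parameter shifts $1-b-c$, $2I-j_0+2+3i_0/2$, $-a+c-j_0-3i_0/2$, $-2j_0+a+b+1$ arising from the four factors in the closed form of $W$ (respectively from $\theta_p(q;q)$, $\theta_p(q^{j_0-i_0/2-i_1}u_1;q)$, $\theta_p(q^{j_1+i_0+i_1/2}u_2;q)$, and $\theta_p(q^{j_0+i_0/2+j_1-i_1/2}u_1u_2;q)$).

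Once this row-factor has been extracted, the remaining determinant has entries of $\mathrm{BC}_1$-theta-hypergeometric type that are $z_l\leftrightarrow 1/z_l$-symmetric, and falls under the hypotheses of Corollary 5.4 of \cite{WarnaarSO:2002}. Applying that corollary evaluates it, up to a constant independent of $\{x_k\}$, as the Vandermonde-type product
\[
\frac{\prod_{1\le k<l\le c} q^{-x_k}\theta_p(q^{x_k-x_l},\, q^{x_k+x_l+I-2j_0+1}/u_1u_2)}
{\prod_{1\le k\le c} \theta_p(q^{x_k+I-j_0+3i_0/2+2+a-c}/u_1,\, q^{j_0+3i_0/2+a+1-c-x_k}u_2;\,q)_{c-1}},
\]
and the overall $\{x_k\}$-independent scalar is absorbed into the stated ``$const$'' prefactor.

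\textbf{Main obstacle.} The whole proof introduces no new idea beyond Lemma \ref{Lemma_Weight_left}; the only real work is bookkeeping. The parameters $a,b,c,I,i_0,j_0$ must be tracked carefully through the row ratios $W(p_l,q_k)/W(p_l,q_1)$ so that the shifts in the $\theta_p(\cdot;q)_{x_k}$ symbols and in the $(q)_{c-1}$-type denominator match the claim verbatim. A useful consistency check is that degenerating $(a,b,c)$ so that the right staircase collapses to a vertical segment (parallel to the left boundary in Lemma \ref{Lemma_Weight_left}) must return the formula of that lemma; once this degeneration matches, the generic case follows by uniqueness of the $z_l\leftrightarrow 1/z_l$-symmetric factor determined by Warnaar's corollary.
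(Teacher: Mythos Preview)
Your proposal is correct and takes essentially the same approach as the paper: reduce to a Kasteleyn determinant $\det[W((i_0+I,j_0-I/2-x_k),(i_0+a-c+l,j_0-a/2-b-c/2+l/2))]$, factor each row against a reference entry to extract the univariate theta-Pochhammer product, and evaluate the remaining $z\leftrightarrow 1/z$-symmetric determinant by Warnaar's Corollary~5.4. One small slip in your write-up: the univariate factor in $x_k$ comes from the reference entry $W(p_k,q_1)$ itself (normalized by the constant $W(p_0,q_1)$), not from the ratio $W(p_k,q_l)/W(p_k,q_1)$---that ratio \emph{is} the $z$-symmetric matrix entry to which Warnaar applies.
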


\begin{proof}The problem is equivalent to computing the weight of
tiling of the domain
\begin{center}
 {\scalebox{0.8}{\includegraphics{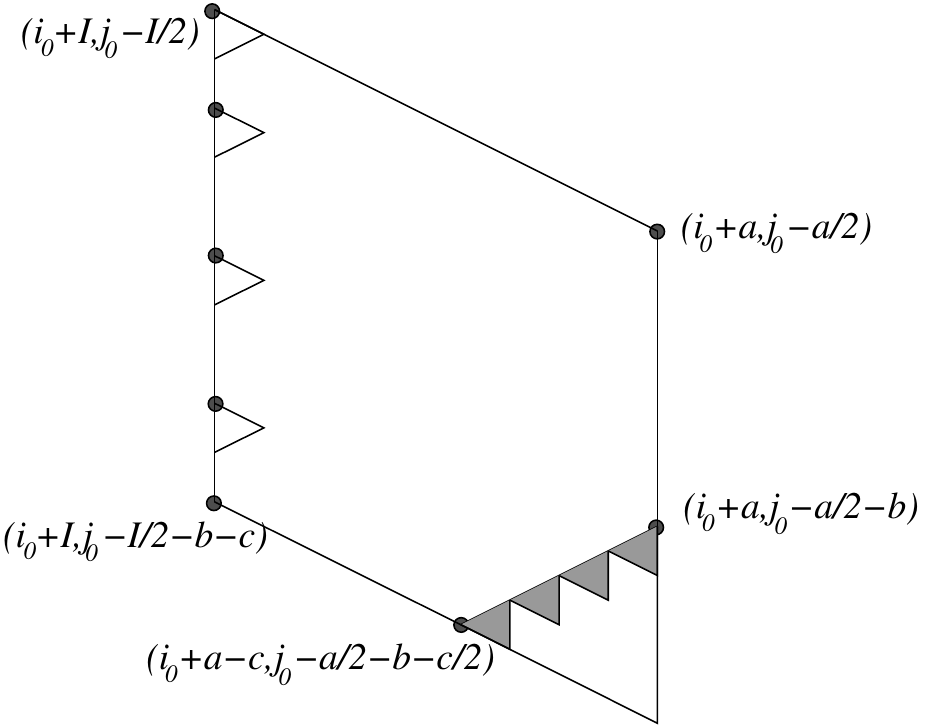}}}
\end{center}
Again, by the Kasteleyn theorem, we need to compute
\[
\det\left[ W((i_0+I,j_0-I/2-x_k),(i_0+a-c+l,j_0-a/2-b-c/2+l/2))
\right].
\]
Again,
\[
W((i_0+I,j_0-I/2-x_k),(i_0+a-c+1,j_0-a/2-b-c/2+1/2))\ne 0,
\]
allowing us to factor the weights accordingly:
\begin{gather*}
\frac{W((i_0+I,j_0-I/2-x_k),(i_0+a-c+1,j_0-a/2-b-c/2+1/2))}
     {W((i_0+I,j_0-I/2),(i_0+a-c+1,j_0-a/2-b-c/2+1/2))}\notag\\
= (-1)^{x_k} \frac{ \theta_p(q^{1-b-c},
         q^{2I-j_0+2+3i_0/2}/u_1,
         q^{-a+c-j_0-3i_0/2}/u_2,
         q^{-2j_0+a+b+1}/u_1u_2;q)_{x_k}}
{ \theta_p(q^{I-a-b+1},
         q^{I+2-j_0+3i_0/2+a-c}/u_1,
         q^{-I-j_0-3i_0/2}/u_2,
         q^{I-2j_0+b+c+1}/u_1u_2;q)_{x_k}
}
\end{gather*}
and
\begin{align*}
&\frac{W((i_0+I,j_0-I/2-x_k),(i_0+a-c+l,j_0-a/2-b-c/2+l/2))}
     {W((i_0+I,j_0-I/2-x_k),(i_0+a-c+1,j_0-a/2-b-c/2+1/2))}\notag\\
&\qquad\qquad{}\propto \frac{
\theta_p(q^{x_k-b-c+1},q^{2j_0-I-b-c-x_k}u_1u_2;q)_{l-1}} {
\theta_p(q^{I-j_0+3i_0/2+2+a-c}/u_1,q^{j_0+3i_0/2+a+1-c-x_k}u_2;q)_{l-1}}.
\end{align*}
\end{proof}

\begin{lemma}
The weight of all tilings of the hexagon
\begin{center}
 {\scalebox{0.8}{\includegraphics{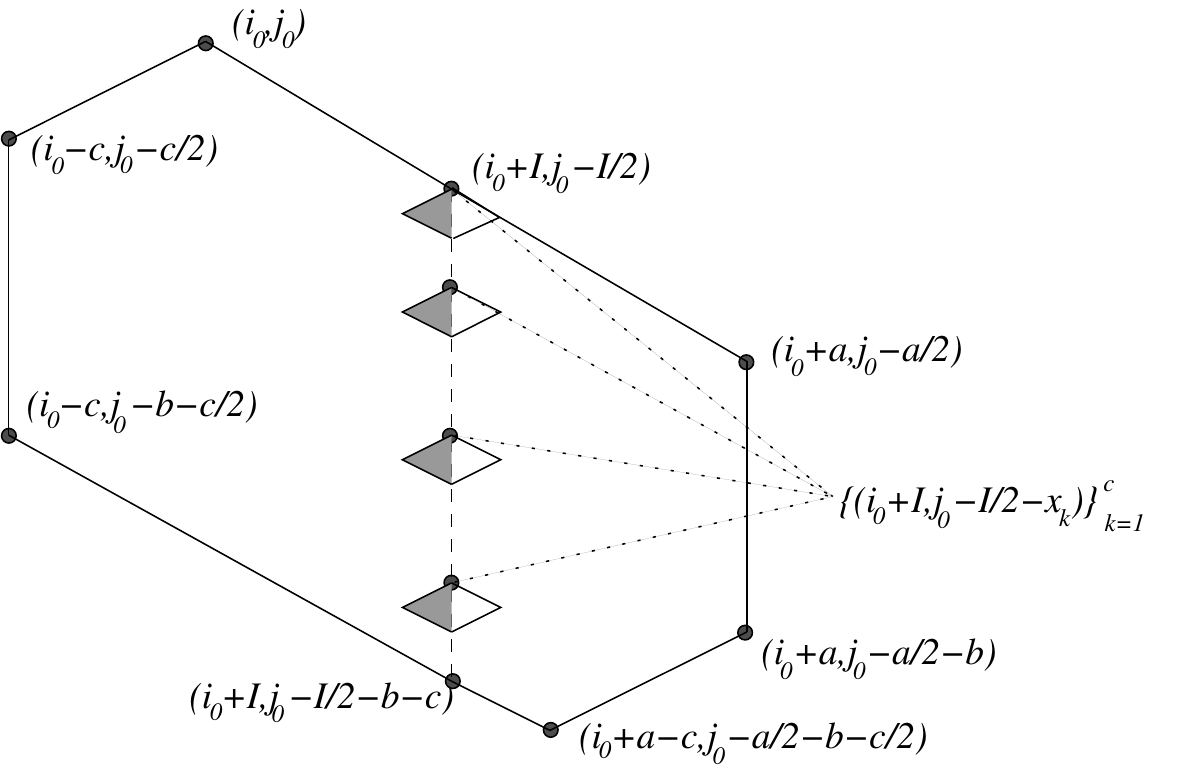}}}
\end{center}
 such that the horizonal
tiles along the line $i=i_0+I$ are as specified, is a constant
independent of $\{x_k\}$ times
\begin{multline*}
\prod_{1\le k\le c} \frac{\theta_p(q^{2x_k+I+1-2j_0}/u_1u_2)}
     {\theta_p(q^{I+1-2j_0}/u_1u_2)}\cdot \frac{q^{x_k} \theta_p( q^{I+1},q^{1-b-c})}
     {\theta_p(q,q^{I-a-b+1})} \\
\times\frac{ \theta_p( q^{I-j_0+3i_0/2}/u_1,q^{-a+c-j_0-3i_0/2}/u_2,
q^{I+1-2j_0}/u_1u_2,q^{-2j_0+a+b+1}/u_1u_2;q)_{x_k}} { \theta_p(
 q^{I+2-j_0+3i_0/2+a-c}/u_1,q^{2-j_0-3i_0/2}/u_2,
q^{1-2j_0}/u_1u_2,q^{I-2j_0+b+c+1}/u_1u_2;q)_{x_k}},
\end{multline*}
times
\begin{multline*}
 \prod_{1\le k<l\le c}
  q^{-2x_k}\theta_p(q^{x_k-x_l},q^{x_k+x_l+I-2j_0+1}/u_1u_2)^2
\\
\times
\prod_{1\le k\le c}\frac1{
\theta_p(q^{x_k+I-j_0+3i_0/2+2+a-c}/u_1,q^{I-j_0+3i_0/2+2+a-c}/u_1;q)_{c-1}}\\
\times \prod_{1\le k\le c} \frac1{\theta_p(
         q^{j_0+3i_0/2+a+1-c-x_k}u_2,q^{j_0+3i_0/2+a+1-c-x_k}u_2;q)_{c-1}}\,.
\end{multline*}
\end{lemma}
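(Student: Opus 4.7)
The plan is to use the fact that specifying the positions of the horizontal lozenges along a fixed vertical line decouples the tiling problem: a horizontal lozenge crosses no vertical line, so if we record the complete set of horizontal tiles along $i=i_0+I$ (equivalently, the hole positions $\{x_k\}$ in the nonintersecting paths picture), then every tiling of the full hexagon restricts to a tiling of the left parallelogram (ending with the specified holes) and a tiling of the right parallelogram (starting with the specified holes), and these restrictions are independent. Consequently, the conditional weight in question factors as
\begin{equation*}
  \Bigl(\prod_{1\le k\le c} w(i_0+I,\,j_0-I/2-x_k)\Bigr)\;\cdot\;L(\{x_k\})\;\cdot\;R(\{x_k\}),
\end{equation*}
where $L$ is the total weight of tilings of the left region given by Lemma \ref{Lemma_Weight_left}, and $R$ is the total weight of tilings of the right region given by Lemma \ref{Lemma_weight_right}.

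The next step is a direct substitution: insert the explicit elliptic weight \eqref{elliptic_weight} for each horizontal lozenge on the line $i=i_0+I$ with $j=j_0-I/2-x_k$, and plug in the formulas from the two preceding lemmas. At this point the identity is reduced to a completely mechanical comparison, and the key structural observation is that the three pieces fit together cleanly: the Vandermonde-type double product
\[
\prod_{1\le k<l\le c} q^{-x_k}\theta_p(q^{x_k-x_l},\,q^{x_k+x_l+I-2j_0+1}/u_1u_2)
\]
appears once from $L$ and once from $R$, producing the square that appears in the statement; the two factors $(-1)^{x_k}$ from $L$ and $R$ cancel; the $\theta_p(\cdots;q)_{c-1}$ denominators from $L$ and $R$ merge (after using the reflection $\theta_p(x)=\theta_p(p/x)$ to align parameters) into the stated denominator involving $u_1$ and $u_2$; and the remaining products indexed by $x_k$ can be regrouped.

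The single nontrivial bookkeeping step is the fusion of the three $x_k$-dependent single products. From $L$ one gets a product of $\theta_p(\cdots;q)_{x_k}$ with four ``numerator'' parameters and three ``denominator'' ones; from $R$ another such product; and from the horizontal lozenge weight one gets a ratio
\[
\frac{q^{x_k}\theta_p(q^{2(j_0-I/2-x_k)-1}u_1u_2)}{\theta_p\bigl(q^{j-3(i_0+I)/2-1}u_1,\,q^{j-3(i_0+I)/2}u_1,\,q^{j+3(i_0+I)/2-1}u_2,\,q^{j+3(i_0+I)/2}u_2\bigr)}\Bigr|_{j=j_0-I/2-x_k},
\]
together with the $x_k$-independent prefactor $C(i_0+I)q^{-1/2}(u_1u_2)^{1/2}$. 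Using the elementary identity $\theta_p(q^{-m}a)=(-a)^{-m}q^{\binom{m}{2}}\theta_p(q^m/a)\,\theta_p(a;q)_{-m}^{-1}$ type relation, one rewrites the extra $\theta_p$ factors coming from the lozenge weight as shifts of the Pochhammer symbols already present in $L$ and $R$; this shifts some of the ``$+1$''s and ``$-1$''s in the indices of $\theta_p(\cdots;q)_{x_k}$, and the combined product is exactly the one displayed in the lemma, up to an $x_k$-independent constant that is absorbed into the ``constant independent of $\{x_k\}$''.

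The main obstacle is therefore purely notational: keeping the exponents of $q$ and the shifts of $u_1,u_2$ consistent across three formulas written in slightly different coordinate conventions ($(i,j)$ for the lozenge weight versus the $i_0,j_0,I,a,b,c$ conventions of the two preceding lemmas). There is no new analytic input beyond the two preceding lemmas and the definition \eqref{elliptic_weight} of the elliptic weight; modulo the simplification of theta function Pochhammer identities already invoked in the proofs of Lemmas \ref{Lemma_Weight_left} and \ref{Lemma_weight_right}, the claim follows.
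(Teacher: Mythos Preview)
Your approach is correct and is precisely what the paper intends: the lemma is stated without proof because it is simply the product of Lemma~\ref{Lemma_Weight_left}, Lemma~\ref{Lemma_weight_right}, and the weights $w(i_0+I,j_0-I/2-x_k)$ of the fixed horizontal lozenges along the cut, followed by routine theta-function bookkeeping. In particular, your observation that the factor $\theta_p(q^{2x_k+I+1-2j_0}/u_1u_2)$ must arise from the numerator $\theta_p(q^{2j-1}u_1u_2)$ of the lozenge weight (via $\theta_p(x)=-x\,\theta_p(1/x)$) is exactly the reason the central factor $C$ is needed, and the squaring of the Vandermonde-type cross term is indeed just the product of the identical cross terms from the two lemmas.
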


\begin{Remark}
The first ``univariate'' factor is just the product over the
variables of a special case of the weight function considered by
Spiridonov and Zhedanov in \cite{SpiridonovVP/ZhedanovAS:2000b}, and
the two sets of elliptic hypergeometric biorthogonal functions
constructed there are triangular with respect to the functions
appearing in the two determinants.  We could thus replace the second
``cross-term'' factor by a product of Vandermonde-style determinants
in which the $k$-th row consists of the $(k-1)$st biorthogonal
function evaluated at $x_1$ through $x_c$, analogously to our calculations
for the $q$-Racah case above.
\end{Remark}

\medskip
Using either lemma, we can in particular obtain a formula for the
total weight of all tilings of a hexagon.  Written in terms of plane
partitions, we obtain the following elliptic analogue of MacMahon's
identity.

\begin{theorem}\label{th:macmahon}
Let $p$, $q$, $u_1$, $u_2$, $u_3$ be generic parameters such that
$|p|<1$, $u_1u_2u_3=1$.  Then
\begin{multline*}
\sum_{\Pi\subset a\times b\times c} \prod_{(i,j,k)\in\Pi}
\frac{q^3\theta_p(q^{j+k-2i-1}u_1,q^{i+k-2j-1}u_2,q^{i+j-2k-1}u_3)}
     {\theta_p(q^{j+k-2i+1}u_1,q^{i+k-2j+1}u_2,q^{i+j-2k+1}u_3)}\notag\\
= q^{abc} \prod_{1\le i\le a,1\le j\le b,1\le k\le c}
\frac{\theta_p(q^{i+j+k-1},q^{j+k-i-1}u_1,q^{i+k-j-1}u_2,q^{i+j-k-1}u_3)}
     {\theta_p(q^{i+j+k-2},q^{j+k-i}u_1,q^{i+k-j}u_2,q^{i+j-k}u_3)}\,.
\end{multline*}
\end{theorem}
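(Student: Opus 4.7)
The plan is to reduce the identity to the weight-sum formula of Lemma~\ref{Lemma_Weight_left} (or, symmetrically, Lemma~\ref{Lemma_weight_right}) by choosing the vertical cut line to be an extreme edge of the hexagon, where the hole configuration $\{x_k\}$ is uniquely determined and so no summation is needed.

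First, I would translate the left-hand side of Theorem~\ref{th:macmahon} into a sum over lozenge tilings of the hexagon. The summand in the statement is precisely the cube weight $w(x,y,z)$ derived (and motivated) earlier in this Appendix. By the construction that produced \eqref{elliptic_weight}, the product of cube weights over $(i,j,k)\in\Pi$ is, up to a factor independent of the tiling, equal to the product of the horizontal-lozenge weights \eqref{elliptic_weight} over the horizontal lozenges of the tiling associated with $\Pi$. This is the elliptic analogue of the identity in Section~\ref{Section_weight_representation}, and it reduces the left-hand side to
\[
(\text{const}) \cdot \sum_{\mathcal{T}\in\Omega_{a\times b\times c}}\ \prod_{{\scalebox{0.2}{\includegraphics{hor_lozenge.pdf}}}\in\mathcal{T}} w({\scalebox{0.2}{\includegraphics{hor_lozenge.pdf}}}),
\]
where the constant depends only on $(a,b,c,p,q,u_1,u_2)$ and in fact is given by a product of $C(i)$-type factors together with a pure $q$-power correction, both tracked explicitly when one derives the representation.

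Second, I would evaluate the sum over tilings by applying Lemma~\ref{Lemma_weight_right} with the vertical slice chosen at the left boundary $i=i_0$ (i.e.\ $I=0$, equivalently $t=0$ in the notation of Section~\ref{Section_weight_sums}). At this slice the path positions are forced to be $\{x_k\}=\{0,1,\ldots,c-1\}$ (or the analogous fixed configuration), so the only contribution to the sum is a single term, and Lemma~\ref{Lemma_weight_right} gives a closed-form product of $\theta_p$-values. Specializing $x_k$ in the formula of that lemma, all the $q$-Pochhammer-type factors $\theta_p(\,\cdot\,;q)_{x_k}$ telescope into a triple product over the box $1\le i\le a$, $1\le j\le b$, $1\le k\le c$.

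Third, the resulting expression must be matched to the right-hand side of Theorem~\ref{th:macmahon}. The main obstacle here is bookkeeping rather than any deep combinatorics: the lozenge formulation breaks the symmetry among $u_1,u_2,u_3$ (only horizontal lozenges carry weight), whereas the statement of the theorem is manifestly $S_3$-invariant under permutations of $(u_1,u_2,u_3)$. To close the gap one has to carefully track (i) the prefactor absorbing $C(i)$ from the general lozenge weight, (ii) the $q^{abc}$ factor, which is produced by the explicit $q$-powers in the cube weight $w(x,y,z)$, and (iii) the cancellation of the $\{x_k\}$-independent constants in Lemma~\ref{Lemma_weight_right} against the Vandermonde/cross terms evaluated at the forced configuration. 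After this bookkeeping, the triple product
\[
\prod_{1\le i\le a,\,1\le j\le b,\,1\le k\le c}\frac{\theta_p(q^{i+j+k-1},q^{j+k-i-1}u_1,q^{i+k-j-1}u_2,q^{i+j-k-1}u_3)}{\theta_p(q^{i+j+k-2},q^{j+k-i}u_1,q^{i+k-j}u_2,q^{i+j-k}u_3)}
\]
emerges from reindexing the shifted $\theta_p$-symbols produced by the lemma.

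As a cross-check of the $S_3$-symmetry, once the identity is verified for one specialization of the lemma (say horizontal lozenges), the same argument run with either of the two rotations of the hexagon (equivalently, recoloring triangles so that a different lozenge orientation carries the weight) gives an a~priori different product formula; the identity between these three products is exactly the $S_3$-symmetry of the RHS and is itself a consequence of Frenkel--Turaev summation, which is the underlying input behind Lemma~\ref{Lemma_Weight_left} and Lemma~\ref{Lemma_weight_right}.
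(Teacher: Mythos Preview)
Your approach is essentially the paper's own. Immediately before Theorem~\ref{th:macmahon} the text says ``Using either lemma, we can in particular obtain a formula for the total weight of all tilings of a hexagon,'' and the paper's proof then invokes the determinant of $W$ simplified via Warnaar's Corollary~5.4 --- which is exactly how Lemmas~\ref{Lemma_Weight_left} and~\ref{Lemma_weight_right} are proved. Specializing either lemma to the extreme slice where the hole configuration is forced is the intended route.

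Two small points of alignment. First, the paper disposes of the unknown ``constant independent of $\{x_k\}$'' in the lemmas not by tracking it through the $C(i)$ factors as you sketch, but by observing that the term $\Pi=\emptyset$ on the left-hand side equals $1$; hence the left-hand side is already the ratio of the total tiling weight to the weight of the empty-partition tiling, and the undetermined constant cancels. This is cleaner than the bookkeeping you outline in step~(iii). Second, the input behind the lemmas is Warnaar's elliptic determinant evaluation rather than the Frenkel--Turaev sum (the latter was used only to motivate the form of the cube weight); your closing remark conflates these. Your $S_3$-symmetry cross-check is a pleasant observation but is not needed for the argument.
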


\begin{proof}
The term corresponding to a specific plane partition $\Pi$ is easily
seen by induction to be the ratio of the weight of the corresponding
tiling to the weight of the tiling associated to the empty plane
partition.  The claim follows by simplifying the corresponding
determinant of $W$ using Corollary 5.4 of \cite{WarnaarSO:2002}.
\end{proof}

\medskip

Although it is possible to arrange for the elliptic weights to be positive,
there are difficulties in the analysis.  For one thing, the algebra
required to replace orthogonal polynomials by biorthogonal functions in
constructing the kernel has not been fully developed.  A further
complication in computing the limit kernel is that, although the
biorthogonal functions do satisfy reasonably simple difference equations,
they are not eigenfunctions of any difference operators.  In addition, the
corresponding variational problem is more difficult; while we can indeed
solve the associated PDE, we have so far been unable to {\em derive} the
solution. This is why we focus on a limiting case in the present paper.

Fix $u_1u_2 = pq\zeta^2$, let $u_1,u_2=\Omega(\sqrt{p})$ as $p\to 0$, and similarly let $C(i)\sim
p^{1/2}$. In this limit, one has
\[
\lim_{p\to 0} w(i,j) = C'(i) \left(\zeta q^j -\dfrac1{\zeta q^j}\right),
\]
or in other words the $q$-Racah weight(s) discussed above. In the
notations of Sections
\ref{section_Model_and_results}-\ref{Section_comp_simulations},
\blue{ $\zeta =\kappa q^{-(S+1)/2}$.} The corresponding limit of the
elliptic MacMahon identity is
\begin{multline*}
\sum_{\Pi\subset a\times b\times c} \prod_{(i,j,k)\in\Pi}
\frac{q^{2k+1}\zeta^2-q^{i+j-1}}{q^{2k}\zeta^2-q^{i+j}} \\
= \prod_{1\le i\le a,1\le j\le b,1\le k\le c} \frac{(1-q^{i+j+k-1})(\zeta^2-q^{i+j-k-2})}
     {(1-q^{i+j+k-2})(\zeta^2-q^{i+j-k-1})},
\end{multline*}
or equivalently (performing the products over $k$ and simplifying)
\[
\sum_{\Pi\subset a\times b\times c} q^{|\Pi|} \prod_{1\le i\le a,1\le j\le b}
\frac{\zeta^2-q^{i+j-2\Pi_{ij}-2}}
     {\zeta^2-q^{i+j-c-2}}
= \prod_{1\le i\le a,1\le j\le b}
  \frac{1-q^{i+j+c-1}}
       {1-q^{i+j-1}}.
\]
This, of course, becomes the usual MacMahon identity upon taking the limit $\zeta\to\infty$, cf.
Section 7.21 in \cite{St}.


\begin{thebibliography}{CKP}

\bibitem[BM]{BM} N.~Bonichon, M.~Mosbah, Watermelon uniform
generation with applications, Theoretical Computer Science 307
(2003), 241--256.

\bibitem[BG]{BG} A.~Borodin, V.~Gorin, Shuffling algorithm for boxed plane partitions. Advances in Mathematics, 220, Issue 6, 1 April 2009,
1739--1770. arXiv:0804.3071


\bibitem[BF]{BF} A.~Borodin, P.~Ferrari, Anisotropic growth of random
surfaces in 2 + 1 dimensions. arXiv:0804.3035.

\bibitem[BO]{BO} A.~Borodin, G.~Olshanski, Markov processes on partitions.
Probab. Theory and Related Fields 135 (2006), no. 1, 84--152. arXiv: math-ph/0409075

\bibitem[BO2]{BO2} A.~Borodin, G.~Olshanski, Asymptotics of Plancherel-type
random partitions. Journal of Algebra 313 (2007), no. 1, 40--60.
arXiv: math.PR/0610240

\bibitem[BOO]{BOO} A.~Borodin, A.~Okounkov, G.~Olshanski, Asymptotics of Plancherel
  measure for symmetric groups.
 J. Amer. Math. Soc. 13 (2000), no. 3, 481-515.  arXiv: math.CO/9905032

\bibitem[BS]{BS} A.~Borodin and S.~Shlosman, Gibbs ensembles of
 nonintersecting paths. arXiv: 0804.0564

\bibitem[CKP]{CKP} H.~Cohn, R.~Kenyon, J.~Propp,
A variational principle for domino tilings. J. Amer. Math. Soc. 14 (2001), no. 2, 297-346.
arXiv:math/0008220

 \bibitem[CLP]{CLP} H.~Cohn, M.~Larsen, J.~Propp, The
shape of a typical boxed plane partition. New York J. Math. 4
(1998), 137--165. arXiv:math/9801059

\bibitem[Des]{Des} N.~Destainville, Entropy and boundary conditions in random
rhombus tilings, J. Phys. A: Math. Gen. 31 (1998), 6123--6139.

\bibitem[DMB]{DMB} N.~Destainville, R.~Mosseri, F.~Bailly, Configurational entropy
of codimension-one tilings and directed membranes,  J. Stat. Phys.
87 (1997), nos. 3/4, 697--754.

\bibitem[DF]{DF} P.~Diaconis, J.~A.~Fill, Strong Stationary Times Via a New Form of
Duality.
 Annals of Probability  18 (1990), no. 4, 1483-1522.

\bibitem[EM]{EM} B.~Eynard and M.~L.~Mehta, Matrices
   coupled in a chain. I. Eigenvalue correlations. J. Phys. A: Math. Gen. 31 (1998), 4449--4456

\bibitem[FT]{FrenkelIB/TuraevVG:1997}
I.~B. Frenkel and V.~G. Turaev.
\newblock Elliptic solutions of the {Y}ang-{B}axter equation and modular
  hypergeometric functions.
\newblock In {\em The Arnold-Gelfand mathematical seminars}, pages 171--204.
  Birkh\"auser Boston, Boston, MA, 1997.


\bibitem[Gor]{Gor} V.~Gorin, Non-intersecting paths and Hahn orthogonal polynomial ensemble,
Funct. Anal. Appl.,  42 (2008), no. 3, 180--197. arXiv: 0708.2349

\bibitem[J1]{J_nonintersecting} K.~Johansson,  Non-intersecting Paths, Random Tilings and
Random Matrices.  Probab. Theory and Related Fields, 123(2002), no. 2, 225--280

\bibitem[J2]{J_Hahn} K.~Johansson, Non-intersecting, simple, symmetric random walks
and the extended Hahn kernel. Ann. Inst. Fourier (Grenoble) 55
(2005), no. 6, 2129-2145.
  arXiv:math.PR/0409013

\bibitem[JN]{JN} K.~Johansson, E.~Nordenstam, Eigenvalues of GUE
Minors.  Electronic Journal of Probability  11 (2006), paper 50,
1342-1371. arXiv:math/0606760


\bibitem[Ka]{Ka} P.~Kasteleyn, Graph theory and crystal physics, 1967, Graph Theory and Theretical
Physics, pages 43--110, Academic Press, London.

\bibitem[Ke1]{Ke1} R.~Kenyon, Local statistics of
lattice dimers, Ann. Inst. H. Poincar\'{e}, Probabilit\'{e}s 33
(1997), 591--618.

\bibitem[Ke2]{Ke2} R.~Kenyon, Lectures on dimers. http://www.math.brown.edu/\~{}rkenyon/ papers/dimerlecturenotes.pdf

\bibitem[KO]{KO} R.~Kenyon, A.~Okounkov, Limit shapes and the complex
Burgers equation.  Acta Math. 199  (2007),  no. 2, 263--302. arXiv:math-ph/0507007

\bibitem[KOS]{KOS} R.~Kenyon, A.~Okounkov, S.~Sheffield,
Dimers and Amoebae. Ann. Math. 163 (2006), no. 3, 1019--1056.
arXiv:math-ph/0311005

\bibitem[KS]{KS} R.~Koekoek and R.~F.~Swarttouw,
   The Askey--scheme of hypergeometric orthogonal polynomials
   and its $q$-analogue, http://aw.twi.tudelft.nl/\~{}koekoek/reports.html

\bibitem[Kr]{Kr} C.~Krattenthaler, Another involution principle-free
bijective proof of Stanley's hook-content formula, Jour. Comb.
Theory, Series A 88 (1999), 66--92.

\bibitem[LRS]{LRS} M.~Luby, D.~Randall, A.~J.~Sinclair. Markov chain algorithms for planar
lattice structures.  SIAM Journal on Computing  31 (2001), 167-192.

\bibitem[Mac]{Mac}
 I.~Macdonald, Symmetric Functions and Hall Polinomaial, Clarendon Press Oxford, 1979.

\bibitem[O]{O} G.~Olshanski, Difference operators and determinantal point processes,
Functional Analysis and Its Applications, 42 (2008), no. 4,
317--329.

\bibitem[OR]{OR}  A.~Okounkov, N.Reshetikhin, Correlation functions of Schur process
with application to local geometry of a random 3-dimensional Young
diagram. J. Amer. Math. Soc.  16 (2003), 581--603. arXiv:
math.CO/0107056

\bibitem[P1]{P1} J.~Propp, Genrating random elements of finite
distributive lattices, Electron. Jour. Combin. 4 (1997), no. 2,
\#R15.

\bibitem[P2]{P2} J.~Propp, Generalized domino-shuffling. Tilings of the
plane, Theoret. Comput. Sci.  303  (2003), no. 2-3, 267--301.

\bibitem[RS]{RS} M.~Reed, B.~Simon, Methods of Modern Mathematical Physics. Vol. I. Functional Analysis,
  Academic Press, New York, 1972.

\bibitem[Sch]{SchlosserM:2008}
M.~Schlosser, Elliptic enumeration of nonintersecting lattice paths.
J. Combin. Theory Ser. A, 114 (2007), no. 3, 505--521, arXiv:
math/0602260.

\bibitem[Sh]{Sh} S.~Sheffield, Random surfaces. Ast\'{e}risque 2006, no. 304.

\bibitem[SZ]{SpiridonovVP/ZhedanovAS:2000b}
V.~P.~Spiridonov and A.~S. Zhedanov, Spectral transformation chains
and some new biorthogonal rational functions. Comm. Math. Phys., 210
(2000), 49--83.

\bibitem[St]{St} R.~P.~Stanley, Enumerative Combinatorics. Volume 2.
Cambridge University Press, 1999.

\bibitem[Wa]{WarnaarSO:2002}
S.~O.~Warnaar, Summation and transformation formulas for elliptic hypergeometric
  series. Constr. Approx., 18 (2002), no. 4, 479--502.

\bibitem[Wi1]{W1} D.~B.~Wilson, Determinant algorithms for random planar
structures.  Proceedings of the Eighth Annual ACM-SIAM Symposium on
Discrete Algorithms (New Orleans, LA, 1997),  258--267, ACM, New
York, 1997.

\bibitem[Wi2]{W2} D.~B.~Wilson, Mixing times of Lozenge tiling and card
shuffling Markov chains. Ann. Appl. Probab. 14 (2004), no. 1,
274--325. arXiv:math.PR/0102193

\end{thebibliography}
\end{document}